\titleformat*{\section}{\Large\bfseries}
\titleformat*{\subsection}{\large\bfseries}
\titleformat*{\subsubsection}{\bfseries}
\titlespacing*{\section}{0pt}{0.3\baselineskip}{0.3\baselineskip}
\titlespacing*{\subsection}{0pt}{0.3\baselineskip}{0.3\baselineskip}
\titlespacing*{\subsubsection}{0pt}{0.3\baselineskip}{0.3\baselineskip}
\DeclareMathOperator{\Tr}{Tr}
\DeclareMathOperator{\argmin}{arg \ min}
\DeclarePairedDelimiter\floor{\lfloor}{\rfloor}
\def\sumTau{\sum_{\substack{s=\tau T-\floor*{Th} \\ s \neq \tau T}}^{\tau T+\floor*{Th}}}
\newtheorem{theorem}{Theorem}
\newtheorem{proposition}{Proposition}
\newtheorem{lemma}{Lemma}
\begin{document}

\thispagestyle{empty}

\begin{center}

\renewcommand{\thefootnote}{\fnsymbol{footnote}}

{\huge Time-varying Forecast Combination for High-Dimensional Data\footnote[1]{We thank Xu Cheng, Francis X. Diebold, Atsushi Inoue, Frank Schorfheide, Nese Yildiz and seminar participants at the
University of Pennsylvania, University of Rochester, the 2019 Southern Economic Society meeting and the 12th Econometric Society World Congress for their useful comments and discussions. Any remaining errors are
solely ours. }}\bigskip

\bigskip

$%
\begin{array}{c}

\text{{\large Bin Chen}} \\ 
\text{University of Rochester} \\ 
\text{ {\large Kenwin Maung}} \\ 
\text{University of Rochester}%
\end{array}%
$\bigskip

\bigskip

\bigskip
\end{center}

\textit{Abstract: } In this paper, we propose a new nonparametric estimator of time-varying forecast combination weights. When the number of individual forecasts is small, we study the asymptotic properties of the local linear estimator. When the number of candidate forecasts exceeds or diverges with the sample size, we consider penalized local linear estimation with the group SCAD penalty. We show that the estimator exhibits the oracle property and correctly selects relevant forecasts with probability approaching one. Simulations indicate that the proposed estimators outperform existing combination schemes when structural changes exist. Two empirical studies on inflation forecasting and equity premium prediction highlight the merits of our approach relative to other popular methods. \bigskip \bigskip

\noindent \textit{JEL Classifications}: \textit{C12, C14, C22}

\bigskip

\noindent \textit{Key words: }Cross validation, Forecast combination, High
dimension, Local linear estimation, SCAD, Sparsity.\bigskip

\verb||

\verb||

\pagebreak \setcounter{page}{1}

\section{Introduction}

Multiple forecasts of the same variable are often available to decision makers. As pointed out in the seminal paper by \cite{bates1969combination}, combinations of individual forecasts can outperform individual forecasts as economic systems are highly complex and even the most sophisticated model is likely to be misspecified. It is unlikely that a single model will dominate uniformly. Even if such a best model exists, it is very difficult to identify it in practice since many forecasts might have similar predictive accuracy. 

Structural breaks in predictive relationships pose additional challenges when generating out-of-sample forecasts. Individual forecasts may vary with structural changes caused by changes in preferences, institutional evolution or technological progress, among other reasons. It is likely that the combination of forecasts from models with different degrees of adaptability would average out individual effects and outperform forecasts from one specific model. Forecast combination can thus be viewed as a strategy against potential structural changes, in the spirit of portfolio hedging, by offering diversification gains. \citep[see e.g.][]{aiolfi2006persistence, timmermann2006forecast, elliott2005optimal}. Understandably, this ability to deal with both model uncertainty and structural changes in forecasting has motivated many authors to apply forecast combination in various fields, ranging from macroeconomics \citep{elliott2005optimal, stock2004combination} to empirical asset pricing \citep{lin2018forecasting, rapach2010out}, with many reporting significant performance gains over prevailing methods. 

Given that the relative performance of different forecasts is likely to change over time, it is natural to consider forecast combination with time-varying weights. Time-varying forecast combination was first proposed  by \cite{bates1969combination}, who developed several adaptive estimation schemes for time-varying weights based on exponential discounting or rolling estimation. In the regression context, \cite{diebold1987structural} generalize these schemes to select combination weights that minimize the weighted average of forecast errors. \cite{deutsch1994combination} and \cite{elliott2005optimal} consider a parametric interpretation of the time-varying combination weights by allowing them to be driven by smooth transitions or switching. In an empirical study, \cite{lin2018forecasting} consider the iterated mean combination and the iterated weighted combination, which improve on existing forecast combination schemes by combining them with the historical sample mean forecast. These methods rely on either rolling estimation with a fixed window size or impose certain parametric functional form assumption on the combination weights, which may be restrictive. Therefore, it is desirable to develop an alternative time-varying combination scheme which can hedge against structural changes of an unknown form.

Recently, nonparametric time-varying parameter models have proven to be a reliable tool in identifying the smoothly-varying coefficient functions. Furthermore, it has been shown to adequately capture the evolutionary behavior of economic relationships through various empirical applications. Such models were first introduced by \cite{robinson1989nonparametric, robinson1991time} and further studied by \cite{cai2007trending}, \cite{chen2012testing}, \cite{kristensen2012non}, \cite{zhang2012inference}, \cite{dahlhaus2019towards}, \cite{hongsunwang} among many others. One advantage of the nonparametric
time-varying parameter model is that little restriction is imposed on the functional forms of coefficients, apart from the regularity condition that they evolve smoothly over time. Motivated by this flexibility, we will adopt this framework in estimating the time-varying combination weights.

This paper develops two time-varying forecast combination schemes. When the number of forecasts is small, we consider a new nonparametric estimator for the combination weights and study its asymptotic properties. Our framework is general enough to accommodate some degree of non-stationarity, in particular that of local stationarity, and thus we can avoid taking a hard stance on the time series behavior of forecasts. As such, our estimator can be viewed as a generalization of the classical \cite{granger1984improved} regression estimator. To implement our nonparametric combination scheme, we consider a cross-validation (CV) bandwidth selection method and show that the selected bandwidth converges to the theoretical optimal bandwidth, which minimizes the integrated mean squared combined forecast errors (IMSCFE). Bandwidth selection here is analogous to the optimal selection of window size for classical rolling regression \citep[see for e.g.][]{hongsunwang}.

When the number of potential forecasts is allowed to be at the same order as or even larger than the sample size, we consider a two-stage penalized local linear procedure similar to that of \cite{li2015model} in studying varying coefficient models. Model selection has been an increasingly important topic in econometrics and statistics in the past twenty years, and various penalized likelihood or least-square methods have been studied to handle model selection for high-dimensional data. Examples of
commonly-used penalization schemes include the Lasso 
\citep{tibshirani1996regression}, smoothly clipped absolute deviation (SCAD) \citep{fan2001variable}, group Lasso \citep{yuan2006model}, adaptive Lasso \citep{zou2006adaptive}, and the minimax concave penalty (MCP) \citep{zhang2010nearly}. In a related context, model selection for functional coefficient models under the $i.i.d.$ assumption are considered in \cite{wang2009shrinkage}, \cite{wei2011variable} and \cite{li2015model}, and model selection for
high-dimsional linear time series models are studied in \cite{kock2015oracle}, \cite{han2020high} and \cite{diebold2019machine}. We contribute to this growing literature of high-dimensional model selection in time series econometrics by investigating the asymptotic properties of our two-stage estimator and showing that it possesses the oracle property. 

Our proposed approach has a number of appealing features. First, the forecast combination weights are modeled as some nonparametric function of time. Unlike \cite{deutsch1994combination} and \cite{elliott2005optimal}, we do not impose parametric assumptions on the functional form of time variation. Second, the finite-dimensional case and the high-dimensional cases are studied in a unified framework. The two-step scheme shrinks the weights of irrelevant forecasts to 0 and provides a practical tool to reduce dimensionality. Third, unlike the majority of the literature on forecast combinations, we investigate the asymptotic properties of our estimator, and establish results on both estimation and selection consistency.

The rest of the paper is organized as follows. In Section 2, we introduce the framework of our nonparametric time-varying combination scheme and develop the estimator when the number of forecasts is small. Section 3 derives the asymptotic properties of the estimator and the CV-selected bandwidth. Section 4 introduces the two-stage penalized estimator for combination weights in high-dimensional forecast combination. Section 5 discusses its implementation and Section 6 studies the oracle property of the estimator. In Section 7, a simulation study is conducted to assess the reliability of the low- and high-dimensional estimators in finite samples. Two empirical examples on inflation forecasting and equity premium prediction are used to illustrate the merits of our approaches in Section 8. Main mathematical proofs are collected in the appendix, while the proofs of some technical results and additional simulations are contained in an online appendix \citep{onlineappendix}.

\section{Nonparametric Forecast Combination}

\label{nonpar}

Assume that a decision maker is interested in predicting some univariate
series $y_{t+1}$, conditional on $I_{t},$ the information available at time $%
t$, which consists a set of individual forecasts $f_{t}=%
\left(f_{t1,}f_{t2},...,f_{td}\right) ^{\top}$ in addition to current and
past values of $y$, i.e. $I_{t}=\left( y_{s},f_{s}\right)_{s=1}^{t}.$ The
vast majority of studies in the forecasting literature considers a linear
forecast combination model: 
\begin{equation*}
y_{t+1}=\omega _{0}+f_{t}^{\top }\omega _{1}+\varepsilon _{t+1},\qquad
t=1,...,T,
\end{equation*}
where $\omega _{0}$ is an intercept and $\omega _{1}$ is a $d\times 1$ vector,
whose component $\omega _{1i}$ can be viewed as the weight assigned to the $%
i^{th}$ forecast, where $i=1,...,d.$ Note that the sum of $\omega _{1i}$
need not be unity.\footnote{%
Alternatively, a transformed model, which imposes the constraint that $%
\sum_{i=1}^{d}\omega _{1i}=1$ and $\omega _{0}=0$ can be considered. Namely, 
$y_{t+1}-f_{td}=\sum_{i=1}^{d-1}\omega _{1i}\left( f_{ti}-f_{td}\right)
+\varepsilon _{t+1}.$} Individual forecasting models can be viewed as
local approximations to the true data generating process (DGP) and their forecast
ability is likely to change over time due to the prevalence of structural
changes. Therefore, we consider forecast combination with time-varying
weights: 
\begin{equation*}
y_{t+1}=\omega _{0t}+f_{t}^{\top }\omega _{1t}+\varepsilon _{t+1},\qquad
t=1,...,T,
\end{equation*}
where $\left( \omega _{0t},\omega _{1t}^{\top }\right) ^{\top }$ are adapted
to the current information set $I_{t}$.

We opt to estimate the model without imposing any parametric functional form on the time variation of combination weights\footnote{There are at least three ways to estimate the time-varying weights \citep{elliott2005optimal, timmermann2006forecast}. The first method is based on a rolling window estimation with some fixed window length $c$, where $c$ is often selected arbitrarily in empirical studies. The second method assumes the form of a time-varying parameter model, where the combination weights
are assumed to follow a multivariate unit root process. The third method assumes that
weights are driven by switching \citep{elliott2005optimal} or smooth transitions \citep{deutsch1994combination} with some observed or latent state
variable. }. Specifically, we adopt the following framework of the nonparametric time-varying parameter model:
\begin{equation}
y_{t+1}=\omega _{0}\left( t/T\right) +f_{t}^{\top }\omega _{1}\left(
t/T\right) +\varepsilon_{t+1},\qquad t=1,...,T,  \label{yt1}
\end{equation}
where $\omega_{0}:[0,1]\rightarrow \mathbb{R}^{1}$ and $\omega_{1}:[0,1] \rightarrow \mathbb{R}^{d}$ are smooth functions of the standardized time $t/T$ over [0,1]. This model was introduced by \cite{robinson1989nonparametric, robinson1991time} and has been studied extensively. The specification that $\omega _{jt} \equiv \omega_j(t/T)$, for $j=0,1$, are functions of the ratio $t/T$ rather than time $t$ itself is a common scaling scheme in the literature which guarantees that the amount of local information increases suitably with the sample size. One difficulty regarding estimation of a predictive model is that no symmetric data is available when making a forecast at any time point $t$. In other words, we are unable to use data from $t+1$ onwards when producing forecasts at time $t$. In the context of nonparametric regression, this is essentially the boundary problem. Note that although local linear smoothing can enhance the convergence rate of the asymptotic bias in the boundary, the asymptotic variance at a boundary point is inevitably larger  because we have fewer observations contributing to the estimator on a smaller data interval. To further reduce the variance, we adopt the reflection method following Hall and Wehrly (1991) and Chen and Hong (2012). Specifically, we reflect the data at each data point $t$ and obtain pseudodata $(y_{s+1},f_{s}^{\top})=(y_{2t-s+1},f_{2t-s}^{\top })$ for $t+1\leq s\leq t+\lfloor Th\rfloor,$ where $\lfloor Th\rfloor $ denotes the integer part of $Th$ and $h$ is the bandwidth used in estimation. We use the synthesized data (the union of the original data and pseudodata) to estimate $\beta _{t}=( \omega _{0t},\omega _{1t}^{^{\top }}) ^{\top }$ for each $t$ via local linear estimation.

Let $z_{st}=\left( 1,\frac{s-t}{T}\right) ^{\top }$ and $k_{st}=h^{-1}k \left( \frac{s-t}{Th}\right) ,$ where the kernel $k :[ -1,1] \rightarrow \mathbb{R}^{+}$ is a prespecified symmetric probability density. Examples of $k(\cdot )$ include the uniform, Epanechnikov and quartic kernels. As discussed in \cite{hongsunwang}, $\lfloor Th\rfloor $ is analogous to the window length of a rolling window regression. The local linear parameter estimator for combination weights at time $t$ is obtained by minimizing the local sum of squared residuals: 
\begin{equation}
\underset{\gamma \in \mathbb{R}^{2(d+1)}}{\min }T^{-1}\sum_{s=t-\lfloor
Th\rfloor ,s\neq t}^{t+\lfloor Th\rfloor }k_{st}\left[ y_{s+1}-\alpha
_{0}^{\top }x_{s}-\alpha _{1}^{\top }\left( \frac{s-t}{T}\right) x_{s}\right]
^{2}=T^{-1}\sum_{s=t-\lfloor Th\rfloor , s\neq t }^{t+\lfloor Th\rfloor
}k_{st}(y_{s+1}-\gamma ^{\top }q_{st})^{2},  \label{LLeast}
\end{equation}%
where $\gamma =(\alpha _{0}^{\top },\alpha _{1}^{\top })^{\top }$ is a $%
2(d+1)\times 1$ vector, $\alpha _{j}$ is a $(d+1)\times 1$ coefficient
vector for $(\frac{s-t}{T})^{j}x_{s},$ $j=0,1,$ $q_{st}=z_{st}\otimes x_{s}$
is a $2(d+1)\times 1$ vector, and $\otimes $ is the Kronecker product.

Minimizing \eqref{LLeast} with respect to $\gamma_t$ yields the local linear
estimate of $\beta(t/T)$,
\begin{equation}
\hat{\beta}_t = \hat{\beta}(t/T) = (e_1^\top \otimes I_{(d+1)}) \hat{\gamma}_t,  \label{betasol}
\end{equation}
where $e_1 = (1,0)^\top$, $I_{(d+1)}$ is a $(d+1) \times (d+1)$ identity
matrix, and
\begin{equation}
\hat{\gamma}_t = \bigg(\sum_{ s=t-\floor*{Th}  , s \neq t}^{t+%
\floor*{Th}} k_{st} q_{st} q_{st}^\top \bigg)^{-1} \sum_{ s=t-%
\floor*{Th} , s \neq t}^{t+\floor*{Th}} k_{st} q_{st} y_{s+1}.
\label{alter}
\end{equation}

The estimator takes the form of a leave-one-out local
linear estimator considered in \citet{chen2012testing} because of the
predictive design of the regression, which refers to the fact that we do not observe $y_{t+1}$ at time $t$ and hence cannot use it in the estimation.

\section{Asymptotic Properties}

\label{sect.asym}

In this section, we consider the asymptotic properties of the estimator when the number of forecasts is relatively small ($d \ll T$) and thus no regularization is required. To begin, we impose the following regularity conditions.

\bigskip

\noindent \textbf{Assumption A.1 (Mixing condition):} The process $\{R_t\}= (\varepsilon_{t+1}, X_t^\top)^\top$ is a $\beta$-mixing process with mixing coefficient ${\beta^*(j)}$ satisfying  $\sum_{l=1}^\infty l^4 \beta^*(l)^{\delta/(1+\delta)} <
\infty$ for some $\delta > 0 $.

\bigskip

\noindent \textbf{Assumption A.2 (Moment conditions):} The following moment
conditions are satisfied: \setlist{nolistsep}

\begin{enumerate}[label=(\roman*), noitemsep]

\item $\sup_{1 \leq t \leq T} E \| R_t \|^{4+\delta} < \infty$ for some $%
\delta >0$,

\item $\beta_t = \beta(t/T)$ is a smooth function such that its second order
derivative is continuous in $[0,1]$,

\item $M(t/T) = E(X_t X_t^\top)$, $\sigma^2(t/T) = E(\varepsilon_t^2)$ and $%
V(t/T) = E(X_t X_t^\top \varepsilon_{t+1}^2)$, where $M(\tau)$, $%
\sigma^2(\tau)$ and $V(\tau)$ are Lipschitz continuous for all $\tau \in
[0,1]$, and $M(\tau)$ is positive definite.
\end{enumerate}
\bigskip

\noindent \textbf{Assumption A.3 (Forecast error):} Let $\{ \varepsilon_t\}$
be a martingale difference sequence (m.d.s.). In particular, $%
E(\varepsilon_{t+1}|\mathcal{I}_t) = 0$, where $\mathcal{I}_t = \{X_t^\top,
X_{t-1}^\top, \ldots, \varepsilon_{t}, \varepsilon_{t-1}, \ldots \}$.

\bigskip

\noindent \textbf{Assumption A.4 (Kernel):} $k: [-1,1] \rightarrow \mathbb{R}%
^+$ is a symmetric bounded probability density function. Further, assume $%
k(0) \geq k(u)$ for all $u \in [-1,1]$, and $\int k^2(u) du < \infty$.

\bigskip

Assumption A.1 limits the temporal dependence in $\{R_{t}\}$ under a $\beta $-mixing structure. Assumption A.2 imposes common smoothness restrictions on the functions of interest \citep[see][]{cai2007trending, orbe2005nonparametric, robinson1989nonparametric}, and requires slightly more than four moments of the data. More importantly, unlike \citet{cai2007trending} and \citet{chen2012testing}, we allow for time-varying moments which means that the data does not have to be stationary. This is highly relevant because the stationarity of many macroeconomic variables that are of forecasting interest, such as inflation, is still subject to debate. Notably, these conditions are sufficiently general to accommodate nonlinear
locally stationary processes as defined in \citet{dahlhaus2019towards} and \citet{vogt2012nonparametric}, which include time-varying parameter autoregressive processes. Assumption A.3 allows for conditional heteroscedasticity of an unknown form but rules out potential serial correlation in the forecast errors. This condition is reasonable if we expect forecasters to have included all the information known to them at time $t$ in making their forecasts. When more forecasts are used in the combination, the condition is more likely to hold. The m.d.s. assumption greatly simplifies the analysis with high-dimensional data, but we also consider relaxing the assumption to allow for serial correlation below. Lastly, A.4 is a standard assumption for kernel regressions. We note that commonly used second-order kernels, such as the Epanechnikov, uniform and quartic kernels, satisfy this condition. Furthermore, A.4 implies $\int_{-1}^{1}k(u)du=1$, $\int_{-1}^{1}uk(u)du=0$, and $\int_{-1}^{1}u^{2}k(u)du<\infty $. 

We now state the asymptotic properties of $\hat{\beta}(\tau)$, which can be viewed as an extension of Theorem 4 of \citet{cai2007trending} for forecast combination with the reflection method.

\begin{proposition}
\thlabel{consis} If assumptions A.1-4 hold, and $h=O(T^{-1/5})$, then for
all $\tau \in \lbrack 0,1]$, we have 
\begin{equation}
\sqrt{Th}\bigg[\hat{\beta}(\tau )-\beta (\tau )-\frac{h^{2}\beta ^{^{\prime
\prime }}(\tau )\mu _{2}}{2}+o_{p}(h^{2})\bigg]\rightarrow ^{d}N(0,2\nu
_{0}M^{-1}(\tau )V(\tau )M^{-1}(\tau )).
\end{equation}%
where $\nu _{0}=\int_{-1}^{1}k^{2}(u)du$, $\mu _{2}=\int u^{2}k(u)du$, and $%
V(\tau)$ is defined in A.2.
\end{proposition}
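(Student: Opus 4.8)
The plan is to follow the standard sandwich analysis of a local linear estimator, in the spirit of the proof of Theorem~4 of \citet{cai2007trending}, adapted to the reflected predictive design. Writing $x_s=(1,f_s^\top)^\top$ and substituting $y_{s+1}=x_s^\top\beta(s/T)+\varepsilon_{s+1}$ into \eqref{alter}, I would Taylor-expand $\beta(s/T)$ about $\beta(\tau)$, with $\tau=t/T$,
\[
\beta(s/T)=\beta(\tau)+\beta'(\tau)\tfrac{s-t}{T}+\tfrac{1}{2}\beta''(\tau)\big(\tfrac{s-t}{T}\big)^2+o\big((\tfrac{s-t}{T})^2\big),
\]
and collect terms to obtain a decomposition $\hat{\gamma}_t-\gamma_t^{*}=S_T^{-1}(B_T+U_T)$, where $\gamma_t^{*}=(\beta(\tau)^\top,\beta'(\tau)^\top)^\top$ is the local linear target, $S_T=T^{-1}\sum_s k_{st}q_{st}q_{st}^\top$ is the weighted design matrix, $B_T$ is the deterministic term generated by the quadratic Taylor remainder, and $U_T=T^{-1}\sum_s k_{st}q_{st}\varepsilon_{s+1}$ is the stochastic score. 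Applying $(e_1^\top\otimes I_{d+1})$ at the end isolates $\hat{\beta}(\tau)$; the single omitted summand at $s=t$ contributes a term of smaller order and is absorbed into the remainder.

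First I would pin down the probability limit of the design matrix. Using $q_{st}q_{st}^\top=(z_{st}z_{st}^\top)\otimes(x_sx_s^\top)$, the change of variables $u=(s-t)/(Th)$, and a law of large numbers for weighted sums of the $\beta$-mixing, locally stationary array $\{R_t\}$ (Assumptions A.1 and A.2), I would show that, with the scaling matrix $H_T=\mathrm{diag}(1,h)\otimes I_{d+1}$, $H_T^{-1}S_T H_T^{-1}\rightarrow^{p}\mathrm{diag}(M(\tau),\mu_2 M(\tau))$ in the natural $2\times2$ block sense. The off-diagonal (level--slope) blocks vanish because the reflection makes the effective kernel symmetric about $\tau$, so that the first kernel moment integrates to zero; invertibility of the limit follows from $M(\tau)$ being positive definite by A.2(iii). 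The technical inputs are a uniform control of the replacement of $x_sx_s^\top$ by its slowly varying mean $M(s/T)\approx M(\tau)$, and the summability of the mixing coefficients assumed in A.1.

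For the bias, the first-order Taylor term is absorbed into the slope component of $\gamma_t^{*}$, while the symmetric reflected design eliminates the first-order contribution to the level estimate, following the construction of Hall and Wehrly (1991) and Chen and Hong (2012). What remains is the quadratic term; evaluating $S_T^{-1}B_T$ and extracting the level block leaves the leading bias $\tfrac{1}{2}h^2\mu_2\beta''(\tau)$, with the $o_p(h^2)$ remainder controlled once $h=O(T^{-1/5})$ is imposed.

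The main obstacle is the central limit theorem for the score. Under the martingale difference assumption A.3, the original-data part of $\{k_{st}q_{st}\varepsilon_{s+1}\}_s$ is a martingale difference array with respect to $\mathcal{I}_s$, so I would invoke a CLT for triangular arrays of martingale differences. This requires (i) convergence of the conditional variance $T^{-2}\sum_s k_{st}^2\,E[q_{st}q_{st}^\top\varepsilon_{s+1}^2\mid\mathcal{I}_s]$, which by A.2(iii) and the Lipschitz continuity of $V(\cdot)$ concentrates around a kernel constant times $V(\tau)$, and (ii) a Lyapunov condition, which follows from the $4+\delta$ moment bound in A.2(i) together with the mixing summability in A.1. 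Two features demand care: the non-stationarity makes the conditional variances time varying, so a local averaging argument is needed to show they aggregate to $V(\tau)$; and the reflection renders the estimator effectively one sided, so that each forecast error enters the score twice and the usual constant $\nu_0$ is replaced by $2\nu_0$ in the limit. Normalizing by $\sqrt{Th}$ then gives asymptotic normality of the level block of $\sqrt{Th}\,U_T$ with variance $2\nu_0 V(\tau)$; combining this with the limit of $H_T^{-1}S_T H_T^{-1}$ via Slutsky's theorem and applying $(e_1^\top\otimes I_{d+1})$ yields the sandwich form $2\nu_0 M^{-1}(\tau)V(\tau)M^{-1}(\tau)$, completing the proof.
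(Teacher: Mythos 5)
Your proposal is correct and follows essentially the same route as the paper: the same decomposition of $\hat{\gamma}_t-\gamma_t^{*}$ into a scaled design matrix, a quadratic Taylor bias term, and a stochastic score, with the design matrix converging to $\mathrm{diag}(M(\tau),\mu_2 M(\tau))$ after $H_T$-scaling, the bias reducing to $\tfrac{1}{2}h^2\mu_2\beta''(\tau)$, and a CLT for the score producing the factor $2\nu_0$ from the doubled kernel weights under reflection. The only cosmetic difference is that the paper obtains the result as a special case of its Proposition 2 and cites (A.7) of Cai et al.\ (2007) for the normality of the score, whereas you invoke a martingale-difference-array CLT directly, which is the natural specialization under Assumption A.3.
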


\thref{consis} shows that $\hat{\beta}(\tau )$ is a consistent estimator of $\beta (\tau)$ and the asymptotic bias depends on the curvature of  $\beta(\tau).$ Numerical analysis with commonly used kernels shows that the asymptotic variance here is much smaller than what we might expect if we had not used the data reflection. For example, with the Epanechinikov kernel, the reflection method can reduce the asymptotic variance by more than 70\%.

As \citet{diebold1988serial} points out, regression-based methods of forecast combination might lead to serially correlated errors. Therefore, we relax the m.d.s. assumption and consider the following alternative.

\noindent \textbf{Assumption A.3* (Serial correlation):} Let {$\varepsilon_t$%
} satisfy: (i) $E( \varepsilon_{t+1}|X_{t}) =0$, and (ii) $\Gamma
_{j}(t/T)=Cov(X_{t}\varepsilon_{t+1},X_{t+j}\varepsilon _{t+j+1}) $, where $%
\Gamma _{j}(\tau)$ is Lipschitz continuous for all $\tau \in [0,1]$.

\begin{proposition}
\thlabel{consis2} If assumptions A.1,A.2,A.3* and A.4 hold, and $%
h=O(T^{-1/5})$, then for all $\tau \in \lbrack 0,1]$, we have 
\begin{equation}
\sqrt{Th}\bigg[\hat{\beta}(\tau )-\beta (\tau )-\frac{h^{2}\beta ^{^{\prime
\prime }}(\tau )\mu _{2}}{2}+o_{p}(h^{2})\bigg]\rightarrow ^{d}N(0,2\nu
_{0}M^{-1}(\tau )\Omega (\tau )M^{-1}(\tau )).  \label{betanorm}
\end{equation}%
where $\Omega (\tau )=\sum_{j=-\infty }^{\infty }\Gamma _{j}(\tau )$ .
\end{proposition}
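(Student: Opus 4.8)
The plan is to mirror the proof of \thref{consis} and isolate the single place where replacing the m.d.s.\ condition A.3 by the serial-correlation condition A.3* actually matters, namely the asymptotic variance of the stochastic part of the estimator. Writing $x_s=(1,f_s^\top)^\top$ and $\beta(\cdot)=(\omega_0(\cdot),\omega_1(\cdot)^\top)^\top$ so that $y_{s+1}=\beta(s/T)^\top x_s+\varepsilon_{s+1}$, I would first reproduce the standard local-linear decomposition
\begin{equation*}
\hat\beta(\tau)-\beta(\tau)=\frac{h^{2}\beta''(\tau)\mu_2}{2}+o_p(h^2)+(e_1^\top\otimes I_{(d+1)})\,S_T^{-1}\,U_T,
\end{equation*}
where $S_T=T^{-1}\sum_s k_{st}q_{st}q_{st}^\top$ and $U_T=T^{-1}\sum_s k_{st}q_{st}\varepsilon_{s+1}$, the sums running over the reflected, leave-one-out window as in \eqref{alter}. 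The Taylor-expansion bias term and the probability limit of $S_T$ depend only on $M(\tau)$, the smoothness of $\beta$ in A.2(ii) and the kernel moments in A.4, none of which is affected by the move from A.3 to A.3*. Crucially, A.3*(i), $E(\varepsilon_{t+1}\mid X_t)=0$, still forces $E(U_T)=0$, so no additional bias enters and the two propositions share the identical bias expansion; only the limiting covariance differs.

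The substance is therefore the covariance of $\sqrt{Th}\,U_T$, which expands as
\begin{equation*}
\operatorname{Var}\!\big(\sqrt{Th}\,U_T\big)=\frac{h}{T}\sum_{s}\sum_{l}k_{st}k_{lt}\,(z_{st}z_{lt}^\top)\otimes\operatorname{Cov}\!\big(x_s\varepsilon_{s+1},\,x_l\varepsilon_{l+1}\big).
\end{equation*}
Under A.3 only the diagonal $s=l$ survives and reproduces the sandwich with $V(\tau)$ of \thref{consis}. Under A.3* I would reorganise the double sum by the lag $j=l-s$ and invoke A.3*(ii), which gives $\operatorname{Cov}(x_s\varepsilon_{s+1},x_{s+j}\varepsilon_{s+j+1})=\Gamma_j(s/T)$. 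Because the kernel localises $s$ to an $O(Th)$ window around $\tau T$, Lipschitz continuity of $\Gamma_j$ replaces $\Gamma_j(s/T)$ by $\Gamma_j(\tau)$ up to $O(h)$; as $h\to0$ the smoothness of $k$ gives $z_{(s+j)t}\approx z_{st}$ for each fixed $j$, and $\tfrac{h}{T}\sum_s k_{st}k_{(s+j)t}\to\nu_0$. Summing over $j$ then collapses the double sum, and after extracting the level block via $e_1^\top\otimes I_{(d+1)}$ the contributions accumulate into $2\nu_0\sum_{j=-\infty}^{\infty}\Gamma_j(\tau)=2\nu_0\,\Omega(\tau)$; pre- and post-multiplying by the limit $M^{-1}(\tau)$ of the relevant block of $S_T^{-1}$ delivers the stated sandwich $2\nu_0M^{-1}(\tau)\Omega(\tau)M^{-1}(\tau)$.

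I expect the main obstacle to be making the interchange of the lag sum with the limit rigorous, since the lag $j$ ranges over an interval of width $O(Th)\to\infty$. Here the $\beta$-mixing rate in A.1 together with the moment bound in A.2(i) is what does the work: it yields absolute summability $\sum_j\|\Gamma_j(\tau)\|<\infty$, so that $\Omega(\tau)$ is well defined, and it bounds the truncation error from large $|j|$, while for bounded $j$ the increment $k_{(s+j)t}-k_{st}=O(j/(Th))$ is uniformly negligible, so a dominated-convergence argument over $j$ closes the gap. A secondary point requiring care is that, unlike under A.3, the reflected pseudo-errors are now correlated with the genuine ones; I would check that the symmetric kernel weighting makes these reflected cross-terms contribute symmetrically, preserving the factor $2$ with $\Omega(\tau)$ in place of $V(\tau)$. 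Finally, the limiting normality of $\sqrt{Th}\,U_T$ follows from a large-block/small-block (Bernstein) central limit theorem for $\beta$-mixing triangular arrays, with block lengths chosen so that the small blocks and the inter-block dependence vanish asymptotically under A.1; the Cram\'er--Wold device reduces this to the scalar case, and the variance computed above identifies the limit.
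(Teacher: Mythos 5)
Your proposal is correct and follows essentially the same route as the paper: the identical bias/stochastic decomposition (the paper's $S^{-1}(t/T)r(t/T)+S^{-1}(t/T)B(t/T)+S^{-1}(t/T)D(t/T)$, with the $S$, $B$, $D$ terms controlled by Lemma 1 exactly as you argue), the same observation that only the long-run variance of the stochastic term is affected by passing from A.3 to A.3*, the same lag-reorganisation of the double sum collapsing to $2\nu_0\Omega(\tau)$ (the paper's Lemma 2), and a large-block/small-block CLT for the mixing array, which the paper imports from equation (A.7) of Cai (2007) rather than rederiving. One minor imprecision worth noting but not affecting validity: the reflected pseudo-errors are exact copies of the genuine errors and hence already contribute nonzero cross-terms under A.3 --- that is precisely where the factor $2$ in Proposition 1 comes from --- so the contrast you draw with the m.d.s.\ case is overstated, though your handling of those folded cross-terms still yields the correct limit.
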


\thref{consis2} shows that asymptotic normality continues to hold with potential serial correlation, while assumptions A.1 and A.2 guarantee the existence of the long-run variance $\Omega(\tau)$ for each $\tau \in [0,1]$.

Next, we study the optimal choice of the bandwidth, $h$, under the m.d.s. assumption (A.3). The choice of the bandwidth $h$ is generally believed to be more important than the choice of the kernel function $k(\cdot )$ in estimation. A small $h$ tends to reduce the bias in $\hat{\beta}%
_{t} $ at the expense of variance, and vice versa with large $h$. Hence, we
opt for a data-driven method to select the bandwidth by optimizing some
metric of forecast errors. From \thref{consis}, we obtain the mean squared
combined forecast errors (MSCFE) as 
\begin{align}
MSCFE_{t}(h)& =E[(y_{t+1}-X_{t}^{\top }\beta _{t})^{2}]  \notag \\
& =E[\varepsilon _{t+1}^{2}]+E[(\hat{\beta}_{t}-\beta _{t})^{\top
}X_{t}X_{t}^{\top }(\hat{\beta}_{t}-\beta _{t})].
\end{align}

Here, we eliminate the cross product term since $\hat{\beta}_{t} $ only uses information up to time $t$. Subsequently, define the integrated MSCFE as 
\begin{align}
IMSCFE(h)& =\int_{0}^{1}MSCFE_{t}(h)d(t/T)  \notag  \label{IMSCFE} \\
& =\int_{0}^{1}\sigma^2 (\tau )d\tau +\int_{0}^{1}\Tr{\bigg[M(\tau)\bigg\{%
\frac{h^4\mu_2^2}{4} \beta^{''}(\tau)\beta^{''}(\tau)^\top + \frac{2\nu_0
V_{\beta}(\tau)}{Th}\bigg\}\bigg]}d\tau ,
\end{align}%
where $V_{\beta }(\tau )\equiv M^{-1}(\tau )V(\tau )M^{-1}(\tau )$ and label the second term in \eqref{IMSCFE} as $IMSCFE(h)_{L}$. To obtain the optimal bandwidth, we minimize the $IMSCFE(h)$ or equivalently $IMSCFE(h)_{L}$ with respect to $h$, which yields 
\begin{equation}
h^{opt}=T^{-\frac{1}{5}}\bigg(\frac{2\nu _{0}\int \Tr\lbrack {V(\tau
)M^{-1}(\tau )]d\tau }}{\mu _{2}^{2}\int \Tr[{M(\tau) \beta^{''}(\tau)
\beta^{''}(\tau)^\top}]d\tau }\bigg)^{\frac{1}{5}},
\end{equation}%
and hence the optimal convergence rate of the IMSCFE is of the order $O(T^{-4/5})$.

In practice, we can use a leave-one-out cross-validation (CV) to select the bandwidth. Specifically, a data-driven choice of $h$ is obtained by solving the following problem, 
\begin{equation}
\label{CVselect}
\hat{h}_{CV}=\underset{c_{1}T^{-1/5}\leq h\leq c_{2}T^{-1/5}}{\argmin}CV(h),
\end{equation}%
where $CV(h)=T^{-1}\sum_{s=1}^{T}(y_{s+1}-X_{s}^{\top }\hat{\beta}_{s})^{2}$%
, and $c_{1}$ and $c_{2}$ are suitable constants. To formalize the
optimality of the bandwidth selected by CV, we require
stronger moment assumptions.

\noindent \textbf{Assumption A.5 (CV moment condition):} Assume $\sup_{1\leq
t\leq T}E\Vert R_{t}\Vert ^{12}<\infty $.

Assumption A.5 is imposed to facilitate technical derivation. \cite{hardle1985optimal} and \cite{xia2002asymptotic} impose similar moment conditions.

\begin{theorem}
\thlabel{CVunif} Suppose assumptions A.1-5 are satisfied. As $T\rightarrow
\infty $ 
\begin{equation}
\hat{h}_{CV}/h^{opt}\rightarrow^{p} 1.
\end{equation}
\end{theorem}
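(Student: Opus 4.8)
The plan is to show that, after discarding an $h$-independent term, $CV(h)$ is uniformly well approximated over the admissible bandwidth range by the deterministic function $IMSCFE(h)_L$, whose unique minimizer is $h^{opt}$; uniform convergence to a limit with a well-separated minimizer then transfers convergence to the argmin. It is convenient to reparametrize $h=cT^{-1/5}$ and work with $c\in[c_1,c_2]$, because on this scale $IMSCFE(h)_L=T^{-4/5}g(c)$ with $g(c)=Ac^4+B/c$, where $A=\tfrac{\mu_2^2}{4}\int\Tr[M(\tau)\beta''(\tau)\beta''(\tau)^\top]d\tau$ and $B=2\nu_0\int\Tr[V(\tau)M^{-1}(\tau)]d\tau$. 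The function $g$ is strictly convex on $(0,\infty)$ with a unique interior minimizer $c^*=(B/4A)^{1/5}$, and one checks directly that $h^{opt}=c^*T^{-1/5}$, so it suffices to prove $\hat c_{CV}:=T^{1/5}\hat h_{CV}\to^p c^*$.

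Writing $y_{s+1}-X_s^\top\hat\beta_s=\varepsilon_{s+1}+X_s^\top(\beta_s-\hat\beta_s)$, I would decompose
\begin{equation*}
CV(h)=T^{-1}\sum_{s=1}^{T}\varepsilon_{s+1}^{2}+2T^{-1}\sum_{s=1}^{T}\varepsilon_{s+1}X_{s}^{\top}(\beta_{s}-\hat\beta_{s})+T^{-1}\sum_{s=1}^{T}(\hat\beta_{s}-\beta_{s})^{\top}X_{s}X_{s}^{\top}(\hat\beta_{s}-\beta_{s}).
\end{equation*}
The first term is free of $h$ and hence irrelevant to the minimization, so only the last two matter. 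For the cross term, the reflection construction makes $\hat\beta_s$ a function of observations dated at or before $s$ while omitting the pair $(y_{s+1},X_s)$, so both $\hat\beta_s$ and $X_s$ are $\mathcal{I}_s$-measurable; under the m.d.s.\ assumption A.3 each summand then has conditional mean zero, and for $s<s'$ conditioning on $\mathcal{I}_{s'}$ shows distinct summands are uncorrelated. Hence $\operatorname{Var}(\cdot)=4T^{-2}\sum_s E[\varepsilon_{s+1}^2(X_s^\top(\beta_s-\hat\beta_s))^2]$, which by \thref{consis} and A.2 is of order $T^{-1}(h^4+(Th)^{-1})$; at $h\asymp T^{-1/5}$ this makes the cross term $O_p(T^{-9/10})$, of smaller order than the $O(T^{-4/5})$ magnitude of the third term. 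This is where the martingale structure does the crucial work.

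For the third term I would substitute the bias--variance expansion of $\hat\beta_s-\beta_s$ from \thref{consis} and argue that its sample average concentrates on its expectation, which by the derivation of $IMSCFE(h)$ equals $IMSCFE(h)_L(1+o(1))$. The Lipschitz continuity of $M$, $\sigma^2$ and $V$ in A.2 lets me replace the time-$s$ moments by the limiting functions evaluated at $s/T$, and the $\beta$-mixing condition A.1 controls the dependence across $s$, yielding, pointwise in $h$,
\begin{equation*}
T^{-1}\sum_{s=1}^{T}(\hat\beta_{s}-\beta_{s})^{\top}X_{s}X_{s}^{\top}(\hat\beta_{s}-\beta_{s})=IMSCFE(h)_{L}+o_{p}(T^{-4/5}).
\end{equation*}

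The main obstacle, and the reason for strengthening the moment condition to A.5, is upgrading these pointwise statements to hold uniformly in $c\in[c_1,c_2]$. I would establish stochastic equicontinuity of the rescaled process $c\mapsto T^{4/5}\{CV(cT^{-1/5})-T^{-1}\sum_s\varepsilon_{s+1}^2\}$ by bounding its increments in $c$: the kernel weights $k_{st}$, and hence $\hat\beta_s$, are Lipschitz in $h$ with explicitly controllable constants, so a grid/chaining argument over a fine mesh of bandwidths combined with a maximal inequality reduces uniformity to finitely many pointwise bounds. The twelve finite moments in A.5, together with Davydov's covariance inequality under A.1, are precisely what bound the higher moments of the increment sums entering the maximal inequality. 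Having established
\begin{equation*}
\sup_{c\in[c_{1},c_{2}]}\Bigl|T^{4/5}\Bigl(CV(cT^{-1/5})-T^{-1}\textstyle\sum_{s=1}^{T}\varepsilon_{s+1}^{2}\Bigr)-g(c)\Bigr|\to^{p}0,
\end{equation*}
and since $g$ has a unique interior minimizer $c^*$, a standard argmin-continuity argument gives $\hat c_{CV}\to^p c^*$, hence $\hat h_{CV}/h^{opt}=\hat c_{CV}/c^*\to^p1$.
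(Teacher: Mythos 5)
Your proposal follows essentially the same route as the paper: the same three-term decomposition of $CV(h)$, elimination of the cross term via the m.d.s.\ structure, concentration of the quadratic term on $IMSCFE(h)_L$, a discretization/maximal-inequality argument (relying on A.5 and the mixing condition) to make the approximation uniform over the admissible bandwidth range, and a standard argmin-consistency step exploiting the well-separated minimizer of the deterministic limit. The paper implements the uniformity step concretely by approximating $\hat{\beta}_s-\beta_s$ with an explicit leading term $\hat{D}_s$ and bounding second moments of increments over a grid of $O(T^{3/5})$ bandwidth intervals, which is precisely the grid/chaining argument you sketch.
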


The estimated bandwidth derived from minimizing CV is asymptotically optimal in the sense that it minimizes the IMSCFE. Unlike plug-in methods which are directly based on the theoretical optimal bandwidth $h^{opt}$, the CV method does not require the preliminary estimation of asymptotic bias or variance. It can be implemented automatically and we expect that it would have reasonable finite sample performance.

\section{High-Dimensional Forecast Combination}

\label{highdimforecast}

When the dimension of forecasts is large, the local linear estimation does not work well. Moreover in practice, some individual forecasts may or may not be important or relevant. Therefore, we combine the local linear estimation with regularization for model selection and estimation of the combination weights in a high-dimensional context.

Consider 
\begin{equation*}
y_{t+1}=\omega _{0t}+f_{t}^{\top }\omega _{1t}+\varepsilon _{t+1},\qquad
t=1,...,T,
\end{equation*}%
where the $f_{t}=\left( f_{t1,}f_{t2},...,f_{tp_{T}}\right) ^{\top }$\text{
is }$p_{T}\times 1$ and $p_{T}$, the number of candidate forecasts, can be larger than the sample size $T.$ We assume that there exists $d\ll T$ and $1\leq d<p_{T}$\text{ such that }$%
\omega _{1t,j}\neq 0$ for $1\leq j\leq d$ and $\omega _{1t,j}=0$
for $d<j\leq p_{T}.$ In other words, there are $d$ relevant forecasts. Moreover, the dimension of important individual forecasts $d$ may diverge with $T$. A natural
estimator of forecast weights would be the local linear estimator with the
Lasso penalty: 
\begin{equation}
\tilde{\gamma}_{t}=\underset{\gamma _{t}\in \mathbb{R}^{2(p_{T}+1)}}{\arg
\min }T^{-1}\sum_{s=t-\floor* {Th},s\neq t}^{t+\floor* {Th}}k_{st}%
\left[ y_{s+1}-\alpha _{0t}^{\top }X_{s}-\alpha _{1t}^{\top }\left( \frac{s-t%
}{T}\right) X_{s}\right] ^{2}+\lambda _{1}\left\vert \alpha _{0t}\right\vert
+\lambda _{2}\left\vert h\alpha _{1t}\right\vert ,  \label{prelim}
\end{equation}%
where $\lambda _{1}$ and $\lambda _{2}$ are two tuning parameters, and $%
\gamma _{t}=(\alpha _{0t}^{\top },\alpha _{1t}^{\top })^{\top }$. 

Then, the local linear estimator for $\beta _{t}$ is given by 
\begin{equation}
\tilde{\beta}_{t}=(e_{1}^{\top }\otimes I_{(p_{T}+1)})\tilde{\gamma}_{t}.
\label{betat}
\end{equation}%
Define $\tilde{\Upsilon}=\left( \tilde{\gamma}_{1},\tilde{\gamma}_{2},...,%
\tilde{\gamma}_{T}\right) ^{\top }$ and $\tilde{B}=\left( \tilde{\beta}_{1},%
\tilde{\beta}_{2},...,\tilde{\beta}_{T}\right) ^{\top }$. It is easy to
verify that 
\begin{align}
\tilde{\Upsilon}=& \underset{\Upsilon \in \mathbb{R}^{T\times 2(p_{T}+1)}}{%
\arg \min }T^{-1}\sum_{t=1}^{T}\sum_{s=t-\floor* {Th},s\neq t}^{t+\lfloor
Th\rfloor }k_{st}\left[ y_{s+1}-\alpha _{0t}^{\top }X_{s}-\alpha _{1t}^{\top
}\left( \frac{s-t}{T}\right) X_{s}\right] ^{2}  \notag \\
& +\lambda _{1}\left\vert \alpha _{0t}\right\vert +\lambda _{2}\left\vert
h\alpha _{1t}\right\vert .  \label{fslasso}
\end{align}%
The Lasso-based local linear estimator is estimation consistent (see \thref{fsprop}) but requires very strong assumptions for selection consistency. Instead, following \cite{li2015model}, we minimize \eqref{fslasso} to obtain preliminary estimates for use in a second-stage penalized optimization with the (group) SCAD penalty proposed by \cite{fan2001variable}. In particular, we use the initial
estimates from $\tilde{B}$ to solve 
\begin{eqnarray}
\hat{\Upsilon}^{h} &=&\underset{\Upsilon \in \mathbb{R}^{T\times 2(p_{T}+1)}}%
{\arg \min }T^{-1}\sum_{t=1}^{T}\sum_{s=t-\floor* {Th},s\neq t}^{t+\floor* {Th}}k_{st}\left[ y_{s+1}-\alpha _{0t}^{\top }X_{s}-\alpha _{1t}^{\top
}\left( \frac{s-t}{T}\right) X_{s}\right] ^{2}  \label{llscad} \\
&&+\sum_{j=1}^{p_{T}+1}p_{\lambda _{3}}^{^{\prime }}\left( \left\Vert \tilde{%
B}_{j}\right\Vert \right) \left\Vert \alpha _{0,j}\right\Vert
+\sum_{j=1}^{p_{T}+1}p_{\lambda _{4}}^{^{\prime }}\left( \tilde{D}%
_{j}\right) \left\Vert h\alpha _{1,j}\right\Vert ,  \notag
\end{eqnarray}%
where $\lambda _{3}$ and $\lambda _{4}$ are two tuning parameters, $\alpha
_{i}=\left( \alpha _{i1},\alpha _{i2},...,\alpha _{iT}\right) ^{\top }$ for $%
i=0,1,$ and $\alpha _{i,j}$ is the $j^{th}$ column of $\alpha _{i},$ $\tilde{%
B}_{j}$ is the $j^{th}$ column of the first-stage Lasso-based local linear
estimator $\tilde{B},$ and 
\begin{equation*}
\tilde{D}_{j}=\left\{ \sum_{t=1}^{T}\left[ \tilde{\beta}_{t,j}-\frac{1}{T}%
\sum_{s=1}^{T}\tilde{\beta}_{s,j}\right] ^{2}\right\} ^{1/2},
\end{equation*}%
which measures the smoothness of the LASSO-based local linear estimator $%
\tilde{B}.$ Moreover, $p_{\lambda }^{'}\left( \cdot \right) $ is the
derivative of the SCAD penalty function with regularization parameter $%
\lambda $ defined by 
\begin{equation*}
p_{\lambda }^{' }\left( x\right) =\lambda \left[ 1\left( x\leq \lambda
\right) +\frac{\left( a\lambda -x\right) _{+}}{\left( a-1\right) \lambda }%
1\left( x>\lambda \right) \right]
\end{equation*}%
and $a=3.7$ as suggested in \cite{fan2001variable}. Instead of the SCAD
penalty itself, we use a local linear approximation of the penalty to
overcome difficulties due to non-convexity of the SCAD penalty %
\citep{zou2008one, fan2014strong}. Then the local linear estimator for $%
\beta _{t}$ with the group SCAD penalty is%
\begin{equation}
\hat{\beta}_{t}^{h}=(e_{1}^{\top }\otimes I_{(p_{T}+1)})\hat{\gamma}_{t}^{h},
\label{betah}
\end{equation}%
where $\hat{\gamma}_{t}^{h}$ is the $t^{th}$ row of $\hat{\Upsilon}^{h}.$

\section{Computational Algorithm}

\label{notation}

We approach the estimation of our forecast weights with a two-stage strategy\footnote{Throughout the process, we standardize our data even though the forecasts and the variable of interest are expected to share the same scale because we find that it helps with the stability of the algorithm and is standard
practice in the Lasso literature. This involves centering the data and dividing by its standard deviation using the whole sample.}. In the first
stage, we solve \eqref{prelim} to obtain preliminary coefficient estimates. Subsequently, we use these to initialize the group coordinate descent algorithm (Yuan and Lin, 2006; Wei et al., 2011) in order to solve the penalized regression with the group SCAD penalties. The first-stage is a standard problem that can be solved efficiently for each $t$ by accessible statistical programs, such as {\fontfamily{lmtt}\selectfont glmnet} in {\fontfamily{lmtt}\selectfont R}, so we focus our attention to the second-stage problem with the group SCAD penalties.

For convenience, we rewrite the first term of \eqref{llscad} in matrix
notation as 
\begin{equation}
\mathcal{L}^\diamond(\alpha_0, \alpha_1) = T^{-1} \bigg(\overline{Y}%
-\sum_{j=1}^{p_{T}+1}\Xi _{j}\alpha _{0,j}-\sum_{j=1}^{p_{T}+1}\Xi
_{j+p_{T}+1}\alpha _{1,j}\bigg)^{\top }\overline{K}\bigg(\overline{Y}%
-\sum_{j=1}^{p_{T}+1}\Xi _{j}\alpha _{0,j}-\sum_{j=1}^{p_{T}+1}\Xi
_{j+p_{T}+1}\alpha _{1,j}\bigg)  \label{llmatrix}
\end{equation}
where $\alpha_{i,j}$ is as previously, the $T \times 1$ $j^{th}$ column of $%
\alpha_i$ for $i=0,1$. $\overline{Y}$ is a $2\floor*{Th}T\times 1$ vector
obtained by stacking $Y_{t}$ for $t=1,\ldots ,T$ which is in turn a $2\floor*%
{Th} \times 1$ vector obtained by stacking $y_{s+1}$ for $s$ from $t-\floor*{%
Th}$ to $t+\floor*{Th}$ excluding $t$. $\overline{K}$ is a $2\floor*{Th}%
T\times 2\floor*{Th}T$ block-diagonalization of $\{K_{t}\}_{t}^{T}$, where $%
K_t$ is a diagonal matrix with diagonal elements corresponding to $k_{t-%
\floor*{Th},t},\ldots,k_{t+\floor*{Th},t}$ excluding $k_{t,t}$. $\Xi_{i}$ is
a selection matrix such that 
\begin{equation}
\underset{2\floor*{Th}T \times T}{\Xi_{i}}= 
\begin{bmatrix}
e_{1}^{\top }\otimes Q_{1}e_{i,2(p_{T}+1)} \\ 
\vdots \\ 
e_{T}^{\top }\otimes Q_{T}e_{i,2(p_{T}+1)}%
\end{bmatrix}
\label{Xi}
\end{equation}
where $Q_t$ is obtained by vertically stacking $(X_s^\top, X_s^\top(\frac{s-t%
}{T}))$ for $s$ from $t-\floor*{Th}$ to $t+\floor*{Th}$ excluding $t$, and $%
e_t$ and $e_{i,2(p_T+1)}$ are $T \times 1$ and $2(p_T+1) \times 1$ unit
vectors with unity in the $t^{th}$ and $i^{th}$ coordinates respectively.

To use the group coordinate descent algorithm as in \cite{wei2011variable}
and \cite{yuan2006model}, we need to orthogonalize the matrix $\Xi
_{i}^{\top } \overline{K} \Xi _{i}$. This can be achieved by
post-multiplying $\Xi_i$ with the inverse of the Cholesky decomposition of $%
\Xi _{i}^{\top } \overline{K} \Xi _{i}$ (let it be $A_i$), so that $(\Xi_i
A_{i})^{\top } \overline{K} (\Xi_i A_{i})=I$. Hence, we proceed with the
assumption that $\Xi _{i}$ has been orthogonalized.

Then, it can be shown that 
\begin{equation*}
\alpha _{0,i}=\bigg(1-\frac{\tau _{i}}{\Vert \tilde{S}_{i}\Vert }\bigg)_{+}%
\tilde{S}_{i}\quad \text{and}\quad \alpha _{1,i}=\bigg(1-\frac{h\tau
_{i}^{\ast }}{\Vert \tilde{S}_{i}^{\ast }\Vert }\bigg)_{+}\tilde{S}%
_{i}^{\ast }
\end{equation*}
where $\tau _{i}=p_{\lambda _{3}}^{' }\left( \left\Vert \tilde{B}
_{i}\right\Vert \right) ,\tau _{i}^{\ast }=p_{\lambda 4}^{' }\left( 
\tilde{D}_{i}\right)$, and $\tilde{S}_{i}=\Xi_{i}^{\top } \overline{K}(%
\overline{Y}-\sum_{j \neq i} \Xi_{j} \alpha_{0j}-\sum_{j=1}^{p_{T}+1}
\Xi_{j+p_{T}+1} \alpha_{1j})$ and $\tilde{S}_{i}^{\ast
}=\Xi_{i+p_{T}+1}^{\top }\overline{K}(\overline{Y}-\sum_{j=1}^{p_{T}+1}
\Xi_{j}\alpha _{0,j}-\sum_{j\neq i} \Xi_{j+p_{T}+1}\alpha _{1,j})$. These
relationships allow us to iteratively compute the parameters through the
following algorithm.

\smallskip

Step 1. Initialize with estimates from the first-stage lasso.
In other words, $\alpha_{0,i}^{(0)} = \tilde{\alpha}_{0,i}$ or $%
\alpha_{1,i}^{(0)} = \tilde{\alpha}_{1,i}$. Define $r_0^{(0)} = r_1^{(0)} =  \overline{Y}$.

\smallskip

Step 2. Construct $\tilde{S}_i^{(k+1)} = \Xi_i^\top \overline{K%
} r_0^{(k)} + \alpha_{0,i}^{(k)}$ or $\tilde{S}_i^{*(k+1)} = \Xi_{i+d+1}^\top 
\overline{K} r_1^{(k)} + \alpha_{1,i}^{(k)}$.

\smallskip

Step 3. Use the relation $\alpha_{0,i}^{(k+1)} = (1 - \frac{%
\tau_i^{(k)}}{\| \tilde{S}_i^{(k)} \|})_+ \tilde{S}^{(k+1)}_i$ or $%
\alpha_{1,i}^{(k+1)} = (1 - \frac{h\tau_i^{*(k)}}{\| \tilde{S}_i^{*(k)} \|}%
)_+ \tilde{S}_i^{*(k+1)}$ to update, and use these to form $\tau_i^{(k+1)}$
and $\tau_i^{*(k+1)}$.

\smallskip

Step 4. Construct a new $r$ with $r_0^{(k+1)} = r_0^{(k)} -
\Xi_i^\top \overline{K} ( \alpha_{0,i}^{(k+1)} - \alpha_{0,i}^{(k)})$ or $%
r_1^{(k+1)} = r_1^{(k)} - \Xi_{i+d+1}^\top \overline{K} ( \alpha_{1,i}^{(k+1)} -
\alpha_{1,i}^{(k)})$.

\smallskip

Step 5. Repeat for all the coefficient vectors until a
reasonable tolerance is achieved. We use $1\times10^{-3}$ in our simulations
and applications.

\smallskip

Step 6. Recover the original parameters by applying the
reverse transformation for orthogonalization and standardization.

To implement the local linear estimation with the group SCAD penalty, we need to choose tuning parameters $\lambda _{j}$ and $h$. First, for the preliminary estimates, the tuning parameters $\lambda_{1}$ and $\lambda _{2}$ are obtained via K-fold CV. Given the potentially high computational costs involved in our two-step procedure, CV is attractive because it is readily accessible as the default option of many Lasso-type algorithms in statistical programs\footnote{For example, {\fontfamily{lmtt}\selectfont glmnet} and {\fontfamily{lmtt}\selectfont lars} in {\fontfamily{lmtt}\selectfont R}}. Furthermore, in a high-dimensional setting, \cite{homrighausen2017risk} have shown that the CV estimate for Lasso is risk consistent for the oracle tuning parameter. This result might be stronger than what we require in section \ref{high.asymp} since we do not expect the preliminary estimator to be selection consistent for our asymptotic results. 

Following \cite{li2015model}, we set the bandwidth as $h=C[\log(p_{T}+1)/T]^{1/5}$ to minimize the computational burden from additional tuning. Simulation studies show that the preliminary
estimation is not very sensitive to bandwidth selection. For the second-stage estimation, the tuning parameters $\lambda_{3}$ and $\lambda_{4}$ are selected via a modified version of BIC: 
\begin{equation*}
BIC=\log (SSR)+C_{T}\left( l\log \floor*{Th}/\floor*{Th}\right) ,
\end{equation*}%
where $SSR=T^{-1}\sum_{t=1}^{T}(y_{t+1}-x_{t}^{\top }\hat{\beta}%
_{t}^{h})^{2},$ $l$ is the number of significant forecasts (i.e. maximum possible value of $l$ is $d$) obtained given a pair of candidate tuning parameters, and $\lfloor Th\rfloor $ is the effective sample size for estimating time-varying parameters. $C_{T}=\log p_{T}$ is selected to guarantee the consistency of BIC in a high-dimensional regression, following \cite{wang2009shrinkagetuning}. Here, BIC is a popular tuning approach for variable selection problems with SCAD penalties. For example, \cite{cai2015functional} use the BIC for tuning the SCAD penalty in penalized functional coefficient models, although we remark that the tuning parameters we seek, $\lambda_3$ and $\lambda_4$, are less restrictive than theirs\footnote{In particular, we require $\lambda_3 \propto \lambda_4 = o(T^{1/2})$ in HD.3(ii), while $\lambda = o(T^{1/10})$ in \cite{cai2015functional}}.

\section{Asymptotic Analysis with High Dimension}

\label{high.asymp}

To study the asymptotic properties of $\tilde{\beta}_{t}$ and $\hat{\beta}%
_{t}^{h},$ we impose the following additional assumptions using the notation established in the previous section. 

\bigskip

\noindent \textbf{Assumption HD.1 (Moment conditions):} \setlist{nolistsep}

\begin{enumerate}[label=(\roman*), noitemsep]

\item Let $X_t^{o}$ contain the first $d$ relevant forecasts. Define $M^{o}(t/T) = E(X_t^{o} X_t^{o^\top})$, $\sigma^2(t/T) = E(\varepsilon_t^2)$ and $%
V^{o}(t/T) = E(X_t^{o} X_t^{o^\top} \varepsilon_{t+1}^2)$, where $M^{o}(\tau)$, $\sigma^2(\tau)$ and $V^{o}(\tau)$ are Lipschitz continuous for all $\tau \in
[0,1]$, and $M^{o}(\tau)$ is positive definite.

\item Assume that uniformly in $t$, $\max_{1\leq j\leq
(p_{T}+1)}E[|\varepsilon _{t+1}X_{tj}|^{\psi }]<\infty $ for some sufficiently large $\psi >2+%
\frac{\delta _{1}}{1-\delta _{2}}+\delta $, $\delta >0,$ where $\delta _{1}$
and $\delta _{2}$ are defined in assumption HD.3 below, and $X_{sj}$ refers to the $%
j^{th}$ element of $X_{s}$. 
\end{enumerate}

\bigskip

\noindent \textbf{Assumption HD.2 (Restricted eigenvalues):} Define the set $%
S$ as 
\begin{align*}
S=\bigg\{ &v=(v_{11},\ldots,v_{1p_T+1},v_{21},\ldots,v_{2p_T+1})^\top \in 
\mathbb{R}^{2(p_T+1)}: \|v\|=1, \\
&\sum_{m=1}^{p_T+1}(|v_{1m}|+|v_{2m}|)\leq 2(1+\delta)\sum_{m=1}^{d}
(|v_{1m}|+|v_{2m}|) \bigg\},
\end{align*}
for some $\delta>0$. Then there exists positive constants $%
0<\rho_1\leq\rho_2<\infty$ with probability approaching one such that, 
\begin{equation*}
\rho_1 \leq T^{-1} \inf_{\tau \in [0.1]} \inf_{v \in S} v^\top
Q_\tau^{h\top} K_\tau Q_\tau^h v \leq T^{-1} \sup_{\tau \in [0.1]} \sup_{v
\in S} v^\top Q_\tau^{h\top} K_\tau Q_\tau^h v \leq \rho_2
\end{equation*}
where $Q^h_\tau= Q_\tau H$, $H=diag\{I_{p_T+1\times p_T+1}, \mathbf{h}%
^{-1}\} $, and $\mathbf{h}^{-1}$ is a $p_T+1\times p_T+1$ diagonal matrix
with diagonal elements $1/h$.\footnote{%
Effectively, $Q_t^h$ is the matrix constructed by vertically stacking $%
(X_s^\top,X_s^\top(\frac{t-s}{Th}))$ from $s = t-\floor*{Th}$ to $t+\floor*{%
Th}$ excluding $t$.} $Q_\tau$ and $K_\tau$ are defined in the discussion of %
\eqref{llmatrix} in Section \ref{notation}.

\noindent \textbf{Assumption HD.3 (Rates and tuning parameters):} %
\setlist{nolistsep}

\begin{enumerate}[label=(\roman*), noitemsep]

\item Let $p_{T}=c_{1}T^{\delta _{1}}$, $h=c_{2}T^{-\delta _{2}}$, $\lambda
_{1}\varpropto \lambda _{2}$, where $0\leq \delta _{1}<\infty $, $0<\delta
_{2}<1.$ The bandwidth and the tuning parameter $\lambda _{1}$ satisfy $%
d h^{2}\lambda _{1}^{-1}+d h^{-2}\lambda_{1}^{2}+ d\lambda _{1}^{1/2}+\left( \log h^{-1}/Th\right) ^{1/2}\lambda _{1}^{-1}\rightarrow 0$%
.

\item Let $dh^2 \varpropto (Th)^{-1/2}$, $\lambda _{3}\varpropto \lambda
_{4}$, $\lambda_3 = o(T^{1/2})$, and $h^{-1/2}[(\log h^{-1})^{1/2}+
d^{1/2} + \lambda_1 h^{1/2}\sqrt{Td}]\lambda _{3}^{-1} \rightarrow 0.$

\item With probability approaching one, there exists a positive constant $%
b_\diamond$ such that 
\begin{equation*}
\min_{1 \leq j \leq d} \|B_j\| \geq b_\diamond T^{1/2},\text{ and } \min_{1
\leq j \leq d_1} D_{j} \geq b_\diamond T^{1/2}.
\end{equation*}
\end{enumerate}

\bigskip

Assumption HD.1(i) is the high-dimensional counterpart of (iii) in assumption 2. HD.1(ii) is a moment condition similar to that in \cite{li2015model}, while HD.2 is a generalization of the restricted eigenvalue conditions in \cite{bickel2009simultaneous}. The regularity conditions in HD.3(i) allows the number of forecasts to increase at a polynomial rate and imposes restrictions on the penalty parameters, $\lambda _{1}$ and $\lambda _{2}$ and the bandwidth, $h$, which are required for showing \thref{fsprop}. On the other hand, HD.3(ii) allows $\lambda _{3}\rightarrow \infty $, albeit at a slower rate than $\sqrt{T}$, and is used to show \thref{ssoracle}. Finally, HD.3(iii) requires the coefficients on relevant forecasts to be bounded away from 0, which is used to show the oracle property in \thref{oracleprop}.

We first establish the asymptotic properties of the first-stage estimator $%
\tilde{\beta}_{t}.$

\begin{proposition}
\thlabel{fsprop} If assumptions A.1,A.2(i)-(ii),A.3, A.4, HD.1-2, and HD.3(i) hold, we have 
\begin{equation}
\max_{t}\Vert \tilde{\beta}_{t}-\beta _{t}\Vert \rightarrow^{p} 0
\end{equation}%
as $T\rightarrow \infty .$
\end{proposition}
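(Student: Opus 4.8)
The plan is to adapt the standard Lasso oracle-inequality machinery to the present local-linear, time-varying, and weakly dependent setting, and then to make the resulting error bound uniform in $t$ through a union bound. Fix $t$ and let $\gamma_{t}^{*}=(\beta(t/T)^{\top},\beta'(t/T)^{\top})^{\top}$ be the natural local-linear population target, so that by the second-order smoothness in A.2(ii) the local model $y_{s+1}=\gamma_{t}^{*\top}q_{st}+O(h^{2})+\varepsilon_{s+1}$ holds over the window. Writing $\hat{\Delta}_{t}=\tilde{\gamma}_{t}-\gamma_{t}^{*}$ and using that $\tilde{\gamma}_{t}$ minimizes \eqref{prelim}, comparing the penalized objective at $\tilde{\gamma}_{t}$ with its value at $\gamma_{t}^{*}$ yields the usual basic inequality
\[
\frac{1}{T}\sumTHa k_{st}\big(q_{st}^{\top}\hat{\Delta}_{t}\big)^{2}\;\le\;\frac{2}{T}\sumTHa k_{st}\varepsilon_{s+1}\,q_{st}^{\top}\hat{\Delta}_{t}\;+\;\mathcal{R}_{t}\;+\;\lambda_{1}\big(|\alpha_{0t}^{*}|-|\tilde{\alpha}_{0t}|\big)+\lambda_{2}h\big(|\alpha_{1t}^{*}|-|\tilde{\alpha}_{1t}|\big),
\]
where $\mathcal{R}_{t}$ collects the $O(h^{2})$ local-linear bias. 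Since $\tilde{\beta}_{t}=(e_{1}^{\top}\otimes I_{(p_{T}+1)})\tilde{\gamma}_{t}$ extracts the $\alpha_{0t}$ block, it suffices to bound $\|\hat{\Delta}_{t}\|$.

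The first substantive step is to control the stochastic score term uniformly, i.e.\ to bound $\max_{1\le t\le T}\max_{j}|T^{-1}\sumTHa k_{st}\varepsilon_{s+1}(q_{st})_{j}|$ over the $2(p_{T}+1)$ coordinates. Each such quantity is a kernel-weighted average, with $\asymp Th$ effective terms, of the $\beta$-mixing sequence $X_{sj}\varepsilon_{s+1}$. Under A.1 and the tail moment HD.1(ii), a blocking/Bernstein-type exponential inequality for $\beta$-mixing arrays delivers a tail bound for each fixed $(t,j)$; a union bound over the $T\cdot 2(p_{T}+1)$ pairs then controls the maximum at order $(\log h^{-1}/(Th))^{1/2}$. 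The exponent $\psi$ in HD.1(ii) is calibrated precisely (through $\delta_{1},\delta_{2}$) so that this union bound remains feasible, and the requirement $\lambda_{1}\gtrsim(\log h^{-1}/(Th))^{1/2}$, equivalent to $(\log h^{-1}/Th)^{1/2}\lambda_{1}^{-1}\to0$ in HD.3(i), guarantees that the penalty level dominates the score. On this event the standard argument forces $\hat{\Delta}_{t}$ into the cone $S$ of HD.2.

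The remaining step converts the basic inequality into a rate. The bias $\mathcal{R}_{t}$ is $O(h^{2})$ uniformly in $t$ by A.2(ii), and its cross-term contribution is absorbed analogously. Applying the restricted eigenvalue bound HD.2 (after the rescaling by $H$ built into that assumption) lower-bounds the left-hand quadratic form by $\rho_{1}\|H^{-1}\hat{\Delta}_{t}\|^{2}$, turning the basic inequality into a quadratic inequality in $\|\hat{\Delta}_{t}\|$ whose solution is of order $\sqrt{d}\,\lambda_{1}/(Th)$ plus the $O(h^{2})$ bias, uniformly over $t$. The composite rate condition $dh^{2}\lambda_{1}^{-1}+dh^{-2}\lambda_{1}^{2}+d\lambda_{1}^{1/2}+(\log h^{-1}/Th)^{1/2}\lambda_{1}^{-1}\to0$ in HD.3(i) is exactly what makes each resulting term vanish, yielding $\max_{t}\|\tilde{\beta}_{t}-\beta_{t}\|\to^{p}0$.

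The main obstacle is the uniform-in-$t$ control of the score under both dependence and non-stationarity. Unlike the $i.i.d.$ analysis of \cite{li2015model}, I cannot invoke independent-sum concentration; instead the localized, kernel-weighted partial sums must be shown to obey an exponential inequality via a blocking argument for $\beta$-mixing sequences, and this bound must survive a union over $p_{T}\times T$ indices while the underlying process is only locally stationary with Lipschitz-varying moments. Establishing this sharp maximal inequality, and checking that the moment calibration of HD.1(ii) is compatible with the polynomial growth $p_{T}=c_{1}T^{\delta_{1}}$ permitted in HD.3(i), is the technically demanding core of the argument; the oracle-inequality algebra and the restricted-eigenvalue step are then comparatively routine.
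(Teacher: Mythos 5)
Your proposal is correct in substance and rests on exactly the same three ingredients as the paper's proof: (a) a uniform bound of order $\sqrt{\log h^{-1}/(Th)}$ on the kernel-weighted score, over all $2(p_T+1)$ coordinates and all $t$, which the paper isolates as \thref{lemmac1} (proved in the online appendix by essentially the blocking/Bernstein-for-$\beta$-mixing argument you identify as the technical core, with HD.1(ii) calibrated against $p_T = c_1 T^{\delta_1}$ and $Th \asymp T^{1-\delta_2}$ just as you describe); (b) the cone constraint on the Lasso error, the paper's \eqref{strictconcaverest}, derived as you do from the basic inequality, convexity, and the requirement that $\lambda_1$ dominate the score plus the $O(dh^2)$ derivative bias (the paper's \eqref{b3}); and (c) the restricted-eigenvalue condition HD.2. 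The one structural difference lies in the final step: rather than solving a Bickel--Ritov--Tsybakov-type quadratic inequality for the error, the paper runs a Fan--Li-style sphere argument, showing that on the sphere of radius $\kappa_T = \sqrt{d}\lambda_1$ around the truth, intersected with the cone, the penalized objective exceeds its value at the truth with probability approaching one (the linear term is $o_p(\kappa_T^2)\|v\|$ by \thref{lemmac1} and Cauchy--Schwarz, while the quadratic term is bounded below by $\rho_1 \kappa_T^2$ via HD.2), so that a local minimizer---and by convexity the global one---lies inside the ball, uniformly in $t$. For this convex problem the two routes are interchangeable; yours produces the explicit oracle inequality directly, the paper's avoids carrying the bias and cross terms through that inequality.

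One slip you should fix: the rate you read off the quadratic inequality, $\sqrt{d}\,\lambda_1/(Th)$ plus bias, is mis-scaled. With both the loss and the penalty normalized by $T^{-1}$ as in \eqref{prelim}, the inequality delivered by HD.2 is $\rho_1 \|H^{-1}\hat{\Delta}_t\|^2 \lesssim \sqrt{d}\,\lambda_1 \|H^{-1}\hat{\Delta}_t\|$ (plus bias terms), giving $\|H^{-1}\hat{\Delta}_t\| = O_p(\sqrt{d}\,\lambda_1)$, which is the paper's \eqref{fsresult}; there is no extra factor of $(Th)^{-1}$. This does not endanger your conclusion, since HD.3(i) (in particular $d\lambda_1^{1/2} \to 0$) already forces $\sqrt{d}\,\lambda_1 \to 0$, but the rate as you state it is not what the algebra yields.
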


\thref{fsprop} shows that the local linear estimator with Lasso penalty is estimation consistent but it is not variable selection consistent in the absence of strong "irrepresentability" conditions for the Lasso in a high-dimensional setting \citep[see for e.g.][]{zhang2010nearly}. Therefore, we only use $\tilde{\beta}_{t}$ as the first-stage estimator to figure out the initial weights for use in the group SCAD penalty. To study the selection consistency of
the proposed two-stage method, we define $S=\left\{ j_{1},...,j_{d^{\ast
}}\right\} $ as the index set of an arbitrary model with a total of $0\leq d^{\ast }\leq p_{T}
$ non-zero coefficients (i.e.$X_{tj_{1}},...,X_{tj_{d^{\ast }}}).$ Then we
use $S_{0}=\left\{ 1,...,d\right\} $ to denote the index set of the true model and $\hat{S}%
=\{ j:\Vert \hat{B}_{j}^{h}\Vert >0\} $ to represent
the model selected by the two-stage procedure, where $\hat{B}_{j}^{h}=( \hat{\beta}_{1,j}^{h},...,%
\hat{\beta}_{T,j}^{h}) ^{\top }.$

\begin{theorem}
\thlabel{ssoracle} Assume assumptions A.1, A.2(i)-(ii), A.3, A.4, and HD.1-3
hold. We have
\begin{equation} \label{select}
P\left( \hat{S}=S_{0}\right) \rightarrow 1
\end{equation}
and
\begin{equation} \label{est}
\max_{t} \|\hat{\beta}_t^{h} - \beta_t\| \rightarrow^{p} 0 
\end{equation}
as T$\rightarrow \infty .$
\end{theorem}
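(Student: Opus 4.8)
The plan is to prove both claims along the standard route for oracle results: construct the infeasible \emph{oracle estimator}, show that---after padding with zeros on the irrelevant block---it satisfies the first-order conditions of \eqref{llscad} with probability approaching one, and then transfer its low-dimensional asymptotics to $\hat\beta_t^h$. Let $\hat\beta_t^o$ be the local linear estimator built from only the $d$ relevant forecasts $X_t^o$, i.e.\ the estimator that knows $S_0$. Assumption HD.1(i) is the oracle-submodel analogue of A.2(iii), so the argument behind \thref{consis} carries over to the $d$-dimensional oracle problem and yields $\max_t\|\hat\beta_t^o-\beta_t\|\rightarrow^{p}0$. The entire theorem then reduces to showing that, with probability tending to one, the penalized solution coincides with the zero-padded $\hat\beta_t^o$.

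The driving feature is the shape of the SCAD derivative $p_\lambda'(\cdot)$, which equals $\lambda$ on $[0,\lambda]$ and vanishes on $[a\lambda,\infty)$. I would first pin down the data-dependent penalty weights $\tau_j=p_{\lambda_3}'(\|\tilde B_j\|)$ and $\tau_j^*=p_{\lambda_4}'(\tilde D_j)$. For relevant $j$, HD.3(iii) forces $\|B_j\|\gtrsim T^{1/2}$ and $D_j\gtrsim T^{1/2}$, so \thref{fsprop} gives $\|\tilde B_j\|>a\lambda_3$ and $\tilde D_j>a\lambda_4$ eventually (using $\lambda_3\propto\lambda_4=o(T^{1/2})$ from HD.3(ii)); hence $\tau_j=\tau_j^*=0$ and \emph{no penalty acts on the relevant block}. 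For irrelevant $j$ one has $\beta_{t,j}\equiv0$, and the first-stage rate obtained inside the proof of \thref{fsprop} keeps $\|\tilde B_j\|$ and $\tilde D_j$ below $\lambda_3,\lambda_4$, so $\tau_j=\lambda_3$ and $\tau_j^*=\lambda_4$ apply the \emph{full} penalty.

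With the weights fixed, the second-stage objective \eqref{llscad} is a convex weighted group-lasso (SCAD enters only through its local linear approximation with the fixed weights $\tau_j,\tau_j^*$), so the KKT conditions are necessary and sufficient and it suffices to exhibit one solution. On the relevant block the penalty is inactive, so the zero-padded $\hat\beta_t^o$ solves the unpenalized local linear normal equations there; the restricted eigenvalue bound HD.2 makes the relevant-block Gram matrix invertible uniformly in $\tau$ with probability approaching one, so this block is well defined and equals the oracle. On the irrelevant block, the zero solution is dual-feasible precisely when the loss gradient at the oracle fit is dominated by the penalty level, i.e.\ $\max_{j\notin S_0}\|T^{-1}\Xi_j^\top\overline K\,\hat r\|<\lambda_3$ with the slope analogue $<h\lambda_4$, where $\hat r$ is the oracle residual. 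Splitting $\hat r$ into its martingale-difference noise part and its reflection-corrected local linear bias part, the bias contribution is negligible because $\beta_{t,j}\equiv0$ on this block, so establishing \eqref{select} reduces to showing the maximal noise score is $o_p(\lambda_3)$.

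The hard part will be exactly this uniform score control over a diverging number of coordinates for dependent, non-stationary data. I would handle it with a Bernstein/Markov maximal inequality for the $\beta$-mixing array (Assumption A.1) together with a Bonferroni union bound over the $p_T$ coordinates; the high-moment requirement HD.1(ii), with $\psi>2+\delta_1/(1-\delta_2)+\delta$, is calibrated so that the union bound over $p_T=c_1T^{\delta_1}$ terms remains controlled and the resulting score rate, of order $h^{-1/2}[(\log h^{-1})^{1/2}+d^{1/2}+\lambda_1h^{1/2}\sqrt{Td}]$, is $o(\lambda_3)$ by the last display of HD.3(ii). Once \eqref{select} holds, \eqref{est} is immediate: on $\{\hat S=S_0\}$ we have $\hat\beta_t^h=\hat\beta_t^o$, and the oracle consistency from the first step gives $\max_t\|\hat\beta_t^h-\beta_t\|\rightarrow^{p}0$.
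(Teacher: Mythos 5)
Your route is essentially the paper's: verify that a zero-padded oracle solution satisfies the KKT conditions of \eqref{llscad} with probability approaching one (which gives \eqref{select}), and separately establish consistency of that oracle (which gives \eqref{est}); your dual-feasibility condition on the irrelevant block and the resulting score rate $h^{-1/2}[(\log h^{-1})^{1/2}+d^{1/2}+\lambda_1 h^{1/2}\sqrt{Td}]=o(\lambda_3)$ match the paper's \thref{selectconst} exactly. The one genuine difference is your choice of oracle. You argue up front that $\|\tilde B_j\|\gtrsim T^{1/2}\gg a\lambda_3$ for relevant $j$ (via \thref{fsprop}, HD.3(ii)--(iii)), so the SCAD weights vanish on the relevant block and the padded \emph{unpenalized} oracle is the KKT solution. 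The paper instead works with a \emph{biased} oracle --- the problem \eqref{biasedoracle} restricted to $S_0$ but with the penalties retained --- shows in \thref{selectconst} that it solves the full problem, and only deploys the ``penalty vanishes on the relevant block'' argument later, in the proof of \thref{oracleprop}. Your version front-loads that step and is cleaner for Theorem 2 itself; the paper's version buys a modular decomposition in which selection (\thref{selectconst}) and estimation (\thref{biasedconst}) are proved independently and the vanishing-penalty argument is reserved for asymptotic normality.

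The one soft spot is your consistency step. You claim $\max_t\|\hat\beta_t^o-\beta_t\|\rightarrow^p 0$ because ``the argument behind \thref{consis} carries over to the $d$-dimensional oracle problem,'' but $d$ is allowed to diverge with $T$ under HD.3, so the fixed-dimension Proposition 1 cannot simply be cited; the paper itself remarks that a direct application of \thref{consis} is possible only when $d$ is fixed. You need the explicit diverging-dimension rate, as in the paper's \thref{biasedconst}, which shows $T^{-1/2}\|\hat\Upsilon_0^o-A_0\|=O_p(\sqrt{d/Th})$ via a quadratic-expansion argument on the set $S^\diamond(C_1)$ together with the restricted eigenvalue condition HD.2. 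Relatedly, your claim that the irrelevant-block score has negligible bias ``because $\beta_{t,j}\equiv 0$ on this block'' slightly misattributes the source of the non-noise terms: they come from the local linear approximation error and estimation error of the \emph{relevant} coefficients entering the residual (the $d^{1/2}$ and $\lambda_1 h^{1/2}\sqrt{Td}$ terms in HD.3(ii)), not from the irrelevant coefficients themselves. Neither issue derails the proof, but both require the quantitative rates rather than the qualitative statements you give.
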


\thref{ssoracle} shows that the two-stage method is not only estimation consistent but can also consistently select all relevant individual forecasts. Next, we establish the oracle property. Let $\hat{\beta_t}^{o,h}$ and $\beta_t^{o}$ represent the first $d$ non-zero forecast weights in $\hat{\beta_t}^h$ and $\beta_t$ respectively. The following theorem states that the two-stage group SCAD estimator is asymptotically normal. 

\begin{theorem}
\thlabel{oracleprop} Assume A.1, A.2(i)-(ii), A.3, A.4, and HD.1-3 hold. Then, for all $\tau \in [0,1]$, 
\begin{equation}
\sqrt{Th} A_T \Omega^{o^{-1/2}}(\tau)\bigg\{ \hat{\beta}^{o,h}(\tau) - \beta^{o}(\tau) - \frac{h^2}{2} \mu_2 \beta^{o^{''}}(\tau) + o_p(h^2) \bigg\} \rightarrow^{d} N(0,G)
\end{equation}%
as $T\rightarrow \infty$, where $\Omega^o(\tau) = 2 \nu_0 M^{o^{-1}}(\tau)V^{o}(\tau)M^{o^{-1}}(\tau)$, and $A_T$ is an arbitrary $q \times d$ matrix\footnote{Similar to \cite{fan2004nonconcave}, we consider the asymptotic normality of arbitrary linear combinations of $\hat{\beta}^{o,h}(\tau)$ by pre-multiplying $A_T$ because the dimensions depend on $T$ and might diverge as $T$ goes to infinity.} such that $A_T A_T^\top \rightarrow G$ for a given finite $q$
\end{theorem}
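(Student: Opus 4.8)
\section*{Proof proposal for \thref{oracleprop}}

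The plan is to establish the oracle property by showing that, on an event whose probability tends to one, the second-stage group-SCAD estimator restricted to the relevant forecasts coincides exactly with the \emph{oracle} local linear estimator---the estimator one would obtain by solving the unpenalized problem \eqref{LLeast} using only the $d$ truly relevant forecasts $X_t^o$. Once this equivalence is in hand, the limiting distribution follows from the argument behind \thref{consis}, suitably adapted to a diverging number of relevant forecasts.

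First I would invoke the selection-consistency half of \thref{ssoracle}, namely $P(\hat S = S_0)\to 1$, so that up to an event of vanishing probability the support of $\hat\beta_t^h$ is exactly $S_0=\{1,\dots,d\}$ and every penalized coordinate outside $S_0$ is zero. It then remains to show that on the relevant block the penalty drops out of the first-order conditions. The local-linear-approximation weights in \eqref{llscad} are $p_{\lambda_3}'(\|\tilde B_j\|)$ and $p_{\lambda_4}'(\tilde D_j)$. By HD.3(iii) the true block norms satisfy $\min_{j\le d}\|B_j\|\ge b_\diamond T^{1/2}$ and $\min_{j}D_j\ge b_\diamond T^{1/2}$, while \thref{fsprop} gives $\max_t\|\tilde\beta_t-\beta_t\|\to^{p}0$, so $\|\tilde B_j-B_j\|\le T^{1/2}\max_t\|\tilde\beta_t-\beta_t\|=o_p(T^{1/2})$; hence uniformly over $j\le d$ the preliminary norms $\|\tilde B_j\|$ and $\tilde D_j$ are bounded below by a constant multiple of $T^{1/2}$ with probability approaching one. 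Since $\lambda_3\propto\lambda_4=o(T^{1/2})$ by HD.3(ii), eventually $\|\tilde B_j\|>a\lambda_3$ and $\tilde D_j>a\lambda_4$ for every relevant $j$, and because the SCAD derivative obeys $p_\lambda'(x)=0$ for $x>a\lambda$, both weights vanish on $S_0$. Consequently the stationarity conditions of \eqref{llscad} for the relevant coordinates reduce exactly to the normal equations of the unpenalized problem \eqref{LLeast} driven by $X_t^o$, so on this event $\hat\beta^{o,h}(\tau)$ equals the oracle estimator.

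Given this reduction I would expand the oracle estimator as in \thref{consis}. Writing it through the representation \eqref{alter}--\eqref{betasol} restricted to $X_t^o$, a second-order Taylor expansion of $\beta^o(s/T)$ about $\tau$ together with the kernel identities $\int uk(u)du=0$ and $\int u^2k(u)du=\mu_2$ produces the bias $\tfrac{h^2}{2}\mu_2\beta^{o^{''}}(\tau)+o_p(h^2)$, while the leading stochastic term is a kernel-weighted, reflection-symmetrized average of the scores $X_s^o\varepsilon_{s+1}$. Under A.3 these scores form a martingale difference sequence, so $\sqrt{Th}$ times the centered, de-biased estimator has leading variance $\Omega^o(\tau)=2\nu_0 M^{o^{-1}}(\tau)V^o(\tau)M^{o^{-1}}(\tau)$, the factor $2\nu_0$ arising from the reflection method exactly as in \thref{consis}. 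Standardizing by $\Omega^{o^{-1/2}}(\tau)$ and projecting through $A_T$ reduces the problem to fixed dimension $q$: by the Cram\'er--Wold device it suffices that every scalar projection $c^\top A_T\Omega^{o^{-1/2}}(\tau)\sqrt{Th}\{\hat\beta^{o,h}(\tau)-\beta^o(\tau)-\text{bias}\}$ converges to $N(0,c^\top Gc)$. Each such projection is a normalized sum over a martingale-difference triangular array, and I would verify its Lindeberg condition using the higher moment bound in HD.1(ii) together with the eigenvalue control of HD.2, concluding via a martingale central limit theorem; the target covariance follows from $A_TA_T^\top\to G$.

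The main obstacle is the diverging-dimension step. Because $d=d_T$ may grow with $T$, the fixed-dimension multivariate CLT invoked in \thref{consis} no longer applies directly and one must instead handle a triangular array whose summand dimension increases. The two delicate points are (i) the Lindeberg and negligibility verification for the scalar projections uniformly as $d\to\infty$, which is precisely why HD.1(ii) requires moments of order $\psi>2+\delta_1/(1-\delta_2)+\delta$ rather than the four-plus moments of A.2, and (ii) showing that the remainders from the Taylor expansion and from inverting $Q_\tau^{o\top}K_\tau Q_\tau^o$ stay $o_p(h^2)$ and asymptotically negligible after the $\Omega^{o^{-1/2}}$ standardization, for which the uniform eigenvalue bounds $\rho_1,\rho_2$ of HD.2 are essential.
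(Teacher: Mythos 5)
Your proposal is correct and follows essentially the same route as the paper: both reduce the two-stage estimator to the (unbiased) oracle estimator by combining the selection result of \thref{ssoracle} with the observation that, by HD.3(iii), \thref{fsprop} and $\lambda_3 \propto \lambda_4 = o(T^{1/2})$, the preliminary norms $\|\tilde{B}_j\|$ and $\tilde{D}_j$ on the relevant block are of order $T^{1/2}$ and hence dominate $a\lambda_3$, killing the SCAD derivative weights; both then obtain normality of the oracle estimator by the expansion behind \thref{consis} with the $A_T$ projection handling the diverging dimension. The only cosmetic difference is that you argue the penalty weights are exactly zero so the estimators coincide, whereas the paper writes the gap between the penalized and unpenalized oracle solutions explicitly through the normal equations and shows the penalty vector $\check{\lambda}_t$ is asymptotically negligible --- the same fact packaged differently.
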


\thref{oracleprop} is related to the oracle property because it is derived by showing that the two-stage SCAD estimator is asymptotically equivalent to the oracle estimator\footnote{The estimator obtained from optimizing the likelihood while having \textit{a priori} knowledge of the relevant estimators and excluding the irrelevant ones.}, which implies that both estimators share the same asymptotic behavior. By further establishing the asymptotic normality of the oracle estimator, we can thus translate the property to our estimator and adopt similar tools for statistical inference. We note that in the special case where $d$ does not depend on $T$, the oracle estimator will be identical to the low-dimensional estimator studied in section \ref{sect.asym}, and a direct application of \thref{consis} is possible.       

\section{Monte Carlo Simulation}

In this section, we contrast the out-of-sample forecasting performance of
the nonparametric estimator with that of common forecast combination
techniques, which include both static and time-varying approaches. We first look at the case with only two forecasts. Then we study a high-dimensional setting to evaluate the
finite-sample oracle properties of the proposed estimator with the group
SCAD penalty.

\subsection{Forecast combination with low-dimensional data}

\label{lowsim}

To start off, we consider the following time-varying coefficient model: 
\begin{equation*}
y_{t+1} = \omega_{0t} + \omega_{1t} f_{1,t} + \beta_{2t} f_{2,t} + u_{t+1}
\end{equation*}
where $\omega_{0t} = \exp(-3 + 2.5 \tau)$, $\omega_{1t} = 0.5(1.5 
\tau - 0.8)^3 + 0.5$, and $\omega_{2t} = 0.2\sin(4 \tau) + 0.4$%
, for $\tau = t/T$, while the forecasts evolve according to: 
\begin{align*}
&f_{1,t} = 0.5 + 0.8y_t + e_{1,t} \\
&f_{2,t} = 0.5 + 0.3\sin\bigg(2\tau + 0.25\bigg)y_t + e_{2,t}.
\end{align*}
Lastly, $u_{t+1}$, $e_{1,t}$ and $e_{2,t}$ are normally distributed with
mean 0 and unit variance.

We study this model specification because it can be endowed with an economic
interpretation consistent with the time-varying common factor framework for
forecast combinations in \cite{elliott2005optimal}. Specifically, it can be
shown that this model reduces to a time-varying parameter AR(1) process with
heteroskedastic errors, and hence in this case, the factor is observable and
completely captured by the first lag of $y$.

The experiment is conducted with three samples: $T \in \{200,300,500\}$ and
an out-of-sample period of 50 time points in excess of \textit{T}. For each
case, we require a holdout or burn-in period of $2 \times T$ for the process
to stabilize and to provide points for bandwidth CV.

In addition, we compare the proposed nonparametric estimator to alternative
approaches that can be classified as either static or adaptive. For the
static case, we use $T$ points to estimate the weights, while we employ an
expanding window ($T + k$) for the adaptive estimators as $k$ increases. The
adaptive estimators suffer from the same issue as that of the nonparametric
estimator in that $y_{t+1}$ is not available at time $t$ for estimation. Hence, we estimate $\hat{\beta}_{t-1}$ and use it to approximate $\hat{\beta}_t$ to form a forecast of $y_{t+1}$. This reflects the practice of professional forecasters and is justified by our assumption that $\beta_t$ is smooth.

Finally, we evaluate the performance of all the models by calculating the
average squared combined forecast error (ASCFE), which is defined as the sum
of the squared deviations of the computed $\hat{y_t}$ from $y_t$ obtained
from the out-of-sample period, i.e. $ASCFE = 1/50\sum_{t=T+2}^{T+51}(y_t-%
\hat{y}_t)^2$. The experiment is repeated 500 times to obtain the mean and
standard deviations of the ASCFE.

\subsubsection{Competing forecast combination methods}

\label{competing} \smallskip \noindent \textbf{Nonparametric estimation}
\smallskip

We consider the local linear estimator with data reflection as established in section \ref{sect.asym}. For bandwidth selection, we adopt the CV method introduced in \eqref{CVselect}. For ease of presentation, we label the nonparametric data
reflection method "NPRf". In all subsequent simulations and empirical
applications, we use the Epanechnikov kernel, i.e. $K(u)=0.75(1-u^{2})_{+}$.

\bigskip \noindent \textbf{Bates and Granger (1969)} \smallskip

A commonly used time-varying combination scheme, originally
suggested by \cite{bates1969combination}, is an adaptive updating method which
assigns a higher weight to forecasts that perform comparatively well in the
recent past. In particular, we consider an expanding window version of their
estimator such that the combination weight of forecast $1$ is given by 
\begin{equation*}
\omega _{1t}^{BG}=\frac{\hat{e}_{1,t}^{-1}}{\hat{e}_{1,t}^{-1}+\hat{e}_{2,t}^{-1}}
\end{equation*}%
where $\hat{e}_{i,t}=\frac{1}{t}\sum_{l=1}^{t}(y_{l+1}-f_{i,l})^{2}$. This is
defined analogously for $f_{2}$. Note that there is no intercept in the
model, and the weights sum to 1. We term this method "BG".

\bigskip \noindent \textbf{Least squares regression} \smallskip

We consider three static combination schemes due to \cite{granger1984improved}. They are: 
\begin{align*}
& y_{t+1}=\omega _{0}+\omega _{1}f_{1,t}+\omega _{2}f_{2,t}+z_{t}, \\
& y_{t+1}=\omega _{1}f_{1,t}+\omega _{2}f_{2,t}+z_{t}, \\
& y_{t+1}=\omega _{1}f_{1,t}+\omega _{2}f_{2,t}+z_{t},\text{where }\omega
_{1}+\omega _{2}=1.
\end{align*}
These models are respectively labeled as "GRregconst" for regression with a
constant, "GRreg" for regression without the intercept term, and "GRconstr"
for constrained regression. It is known that despite biased
forecasts, "GRregconst" performs favorably in terms of MSCFE because $\omega _{0}$ is able to capture the bias \cite[see][]{timmermann2006forecast}. Since we have introduced time-variation in our
experiment, we also consider adaptive versions of the models above, whose
weights are estimated ex-ante. These models are named "TVGRregconst",
"TVGRreg", and "TVGRregconstr" accordingly.

\bigskip \noindent \textbf{Equal weights} \smallskip

Lastly, we include the simple strategy of assigning equal weights
to the forecasts, which in this case is expressed by: $\hat{y}_{t+1} =
0.5f_{1,t} + 0.5f_{2,t}$. This is intended to assess whether the proposed
estimators suffer from the "forecast combination puzzle", which refers to
the commonly observed empirical fact that weights derived from a simple
arithmetic mean often outperform theoretical optimal weights based on
sophisticated estimation. We label this "EQ".

\subsubsection{Simulation results}

Table \ref{tab:table1} reports the mean and standard deviation of the ASCFE.
We can see that our estimator outperforms alternative methods by achieving
both the lowest ASCFE, and the smallest variance, at all sample sizes.

\begin{table}[tbp]
\centering
\caption{Simulation results with low-dimensional data (2 forecasts).}
\setlength{\tabcolsep}{20pt} 
\renewcommand{\arraystretch}{1} 
\begin{threeparttable}
\begin{tabular}{lccc}
\toprule & \multicolumn{3}{c}{Sample size (T)} \\ 
Estimation & 200 & 300 & 500 \\ 
\midrule \textit{Adaptive} &  &  &  \\ 
NPRf & 1.06 & 1.06 & 1.03 \\ 
& (0.22) & (0.21) & (0.21) \\ 
BG & 1.19 & 1.22 & 1.20 \\ 
& (0.24) & (0.23) & (0.24) \\ 
TVGRregconst & 1.08 & 1.09 & 1.07 \\ 
& (0.22) & (0.21) & (0.22) \\ 
TVGRreg & 1.13 & 1.14 & 1.12 \\ 
& (0.23) & (0.22) & (0.23) \\ 
TVGRregconstr & 1.17 & 1.19 & 1.17 \\ 
& (0.23) & (0.22) & (0.23) \\ 
\textit{Static} &  &  &  \\ 
GRregconst & 1.09 & 1.11 & 1.08 \\ 
& (0.24) & (0.22) & (0.22) \\ 
GRreg & 1.14 & 1.15 & 1.13 \\ 
& (0.23) & (0.22) & (0.23) \\ 
GRregconstr & 1.18 & 1.20 & 1.18 \\ 
& (0.24) & (0.23) & (0.23) \\ 
EQ & 1.29 & 1.34 & 1.34 \\ 
& (0.26) & (0.25) & (0.26) \\ 
\midrule Best & NPRf & NPRf & NPRf \\ 
\bottomrule &  &  & 
\end{tabular}%
\begin{tablenotes}[flushleft]
\small
\item Notes: (1) Mean and standard deviation in parentheses of ASCFE from 500 iterations. (2) NPRf: nonparametric estimator with data reflection, BG: \cite{bates1969combination} adaptive etimator, GRregconst: OLS regression with intercept, GRreg: OLS regression without intercept, GRregconstr: OLS regression with sum of coefficients constrained to unity, TV: time-varying weights, EQ: equal weights. 
\end{tablenotes}
\end{threeparttable}
\label{tab:table1}
\end{table}

In addition, two observations are salient. First, adaptive estimators
perform better than their static counterparts, which is expected because the
true DGP contains many time-varying parameters. In
addition, within the two categories, approaches that allow for an
intercept (TVGRregconst and GRregconst) achieve relatively low ASCFEs which
is likely attributable to their ability in capturing the (time-varying) bias
in the true model. By this logic, it is reasonable that NPRf performs well because it
accommodates both time-varying biases and weights. The lackluster
performance of EQ is also consistent with this reasoning as it permits
neither.

For robustness, we look at 3 more cases in the online appendix. The cases are
modifications of the original DGP: (1) bias is removed
and time-varying weights replaced with constant weights; (2) a constant bias
is permitted and time-varying weights replaced with constant weights; and
(3) bias is removed but weights remain time-varying. The key message from
this exercise is that NPRf performs the best when forecast weights are
indeed changing over time. Regression-based methods appear to perform
slightly better otherwise, although NPRf is a close runner-up.

\subsection{Forecast combination with high-dimensional data}

\label{highsim}

Now we consider the performance of NPRf with the group SCAD penalty in
situations where the number of forecasts may be larger than the effective
sample size. The DGP of section \ref{lowsim} is extended by
considering $J$ additional forecasts $\{f_{j,t}\}_{j=1}^{J}$, for each $t$,
where $J\in \{10,50,100\}$. However, these forecasts are redundant in that $%
\beta _{jt}=0$ for all $j=1,\ldots ,J$ and all $t$. The redundant forecasts
are generated from a joint normal distribution with mean \textbf{0} and
variance-covariance matrix $\Sigma $ such that $cov(f_{j,t},f_{j^{'},t})=2\exp (-|j-j^{^{' }}|)$ and $j,j^{^{'}}=1,\ldots ,J$. To
minimize the computational burden, we consider an out-of-sample period of 10
points for the sample sizes 50,100, and 150, and with 200 Monte Carlo
simulations.

As mentioned in section \ref{notation}, we choose the respective penalty parameters via K-fold CV in the first-stage and BIC in the second. The bandwidth is selected using a rule as in \cite{li2015model}: $h=(\log
(J+3)/T)^{0.2}$. The initial values used in the
group coordinate descent algorithm are obtained from a first-stage Lasso.

Since NPRf is a local linear estimator, the effective number of regressors
(EReg) is thus $2 \times (J + 3)$ where the addition of 3 refers to the
intercept and the two relevant forecasts. For large $J$, EReg often exceeds
the effective sample size, especially when the bandwidth is small, and the
solution to the weighted least squares problem implicit in NPRf is no longer
unique\footnote{Given that EReg is often larger than the sample size, we do not consider the alternative methods from section \ref{competing} as they would no longer produce reliable estimates}. Hence the goal of this analysis is to assess whether the
penalization scheme can accurately select the relevant forecasts and
estimate their weights.

\begin{table}
  \centering
  \caption{Simulation results for high-dimensional data.}
  \begin{threeparttable}
    \begin{tabular}{rrrr}
    \toprule
    \multicolumn{1}{l}{Sample size} & \multicolumn{3}{c}{\textbf{group SCAD}} \\
\cmidrule{2-4}                  & \multicolumn{1}{c}{ASCFE} & \multicolumn{1}{c}{\% correct for OOS period} & \multicolumn{1}{c}{Relevant included for OOS period} \\
    \midrule
    \multicolumn{1}{l}{10 extra forecasts} &               &               &  \\
    50            & 1.55          & 0.81          & 1.00 \\
    100           & 1.34          & 0.91          & 1.00 \\
    150           & 1.34          & 0.96          & 1.00 \\
    \multicolumn{1}{l}{50 extra forecasts} &               &               &  \\
    50            & 1.53          & 0.73          & 0.98 \\
    100           & 1.40          & 0.90          & 1.00 \\
    150           & 1.25          & 0.91          & 1.00 \\
    \multicolumn{1}{l}{100 extra forecasts} &               &               &  \\
    50            & 1.62          & 0.70          & 0.96 \\
    100           & 1.34          & 0.80          & 1.00 \\
    150           & 1.24          & 0.81          & 1.00 \\
                  &               &               &  \\
    \bottomrule
    \end{tabular}%
    \begin{tablenotes}[flushleft]
    \small
    \item Notes: (1) Mean of ASCFE and share of iterations that selected exactly or included the relevant forecasts for the whole out-of-sample (OOS) period of 10 time points. (2) Results are from 200 iterations.
    \end{tablenotes}
    \end{threeparttable}
  \label{tab:table2}%
\end{table}%

Table \ref{tab:table2} reports the mean ASCFE for the group SCAD strategy,
given different combinations of $T$ and $J$. An immediate observation is
that the ASCFE is comparable to that obtained in the low-dimension case in
table \ref{tab:table1}. Given a fixed $J$, we observe that the ASCFE falls
and the share of iterations that accurately captures the relevant forecasts
increases with larger samples. This provides some evidence that the
estimator is consistent in estimating and selecting important
forecasts.

\section{Empirical Applications}

\label{empirical}

We illustrate the use of our nonparametric estimator by considering two
applications of combined forecasting in macroeconomics and finance.
Specifically, we review the results on forecasting inflation in \cite%
{ang2007macro} by extending their analysis to include recent data.
Subsequently, we follow \cite{rapach2010out} to examine the predictability
of equity returns with forecast combinations using 13 predictors. In the
latter context, since the effective number of variables for the local linear
estimator is two times that of the original, it may still be large relative
to the effective sample size even if it does not exceed it. We show that
applying the group SCAD strategy in this case yields two benefits over
conventional methods: estimation is more precise in terms of smaller errors,
and important equity return predictors can be identified. Predictor selection is of independent interest in the return
predictability literature.

To facilitate statistical comparison between forecast strategies, we conduct the 
\cite{diebold1995comparing} test (DM test) and the 'Reality Check' (RC) test introduced by \cite%
{white2000reality}. The latter can be used to compare forecasts generated from
both nested and non-nested models, which is especially relevant in the
high-dimensional context because of variable selection.

\subsection{Forecasting inflation}

\cite{ang2007macro} studied the performance of various strategies in
forecasting inflation in the US from 1985 to 2002. These approaches can be
classified into four broad categories: ARIMA-type time series models,
Phillips curve-implied forecasts, term structure models, and survey-based
measures. Two major results from their investigation are particularly
striking. First, they find that the median of survey forecasts, in
particular, the Livingstone survey and the Survey of Professional
Forecasters (SPF)\footnote{%
See \cite{ang2007macro} for a detailed description of the surveys. We do not consider the Livingstone survey in our application because it is a biannual survey.}, consistently outperform models from the other three categories. Furthermore, they show that combining forecasts across the different categories using various parsimonious methods, such as least squares regression and equal weights, do not generally yield more accurate forecasts compared to using only survey information. Our goal is thus twofold: we are interested in assessing whether this claim remains valid given an updated evaluation period, and to ascertain the potential gains from using the nonparametric estimator as
opposed to existing methods in combining inflation forecasts.

To achieve this, we use the basic forecasting models considered by \cite{ang2007macro} from each of the four categories: ARMA(1,1), PC1, TS1, and the SPF for a survey-based forecast\footnote{Naming convention and models follow that of \cite{ang2007macro}}. We consider three CPI-based measures of inflation: CPI for all urban consumers (PUNEW), CPI less housing and shelter (PUXHS), and CPI less
food and energy (PUXX). We generate inflation forecasts for a period of 1981Q3 to 2018Q2, and use the last 3 years for our out-of-sample evaluation. We refer interested readers to the online appendix for a full description of the set-up.

For all measures of inflation, we look at the same combination
techniques as detailed in section \ref{competing}. In addition, we consider individual predictions made by the SPF and the ARMA(1,1) model on quarterly inflation which Ang et al. (2007) regarded as a benchmark. We note that the methods TVGRreg and EQ respectively correspond to the "OLS" and "Mean" methods employed in their paper. Table \ref{tab:table4} reports the ASCFEs, the best methods, and the p-values for one-sided DM and RC tests.

\begin{table}[tbp]
  \centering
  \caption{Forecast combination results for inflation.}
  \begin{threeparttable}
    \begin{tabular}{lccc}
    \toprule
    Estimation    & PUNEW         & PUXHS         & PUXX \\
    \midrule
    \textit{Adaptive} &               &               &  \\
    NPRf          & 0.182         & 0.451         & 0.050 \\
    BG            & 0.457         & 0.679         & 0.117 \\
    TVGRregconst  & 0.186         & 0.792         & 0.048 \\
    TVGRreg       & 0.475         & 0.512         & 0.108 \\
    TVGRregconstr & 0.446         & 0.791         & 0.098 \\
    ARMA(1,1)     & 0.206         & 1.897         & 0.061 \\
    \textit{Static} &               &               &  \\
    GRregconst    & 0.186         & 0.929         & 0.048 \\
    GRreg         & 0.477         & 0.518         & 0.110 \\
    GRregconstr   & 0.445         & 0.803         & 0.099 \\
    EQ            & 0.485         & 0.702         & 0.127 \\
    SPF           & 0.467         & 0.842         & 0.441 \\
    \midrule
    Best          & NPRf          & NPRf          & TVGRregconst \\
    \midrule
                  &               &               &  \\
    DM test (p-value) &               &               &  \\
    NPRf          &               &               &  \\
     $<$ EQ         & 0.142         & 0.314         & 0.084 \\
     $<$ SPF        & 0.063         & 0.231         & 0.014 \\
                  &               &               &  \\
    RC test (p-value) &               &               &  \\
    NPRf          &               &               &  \\
     $<$ EQ         & 0.031         & 0.179         & 0.033 \\
     $<$ SPF        & 0.025         & 0.057         & 0.003 \\
    \bottomrule
    \end{tabular}%
    \begin{tablenotes}[flushleft]
\small
\item Notes: (1) ASCFEs for inflation forecasting in the top panel. (2) NPRf: nonparametric estimator with data reflection, BG: \cite{bates1969combination} adaptive etimator, GRregconst: OLS regression with intercept, GRreg: OLS regression without intercept, GRregconstr: OLS regression with sum of coefficients constrained to unity, TV: time-varying weights, EQ: equal weights, SPF: Survey of Professional Forecasters, PUNEW: CPI (all urban) inflation, PUXHS: CPI (less housing and shelter) inflation, PUXX: CPI (less food and energy) inflation. (3) DM test: \protect\cite{diebold1995comparing} test. RC test:
'Reality Check' test \citep{white2000reality}. (4) "$x < y$" indicates a test of null hypothesis of equal predictive ability between $x$ and $y$, with the one-sided alternative of superior predictive ability of $x$ over $y$.
\end{tablenotes}
\end{threeparttable}
\label{tab:table4}
\end{table}

Several comments are in order here. First, the nonparametric method, NPRf, achieves the best performance in two out of three inflation indices, but we note that the various OLS methods are close contenders. For PUXX, both TVGRregconst and GRregconst achieve the lowest ASCFE, but NPRf is a close runner-up. Furthermore, there is statistical evidence that NPRf improves over equal weighting and simple SPF-implied forecasts for PUNEW and PUXX. This finding contrasts with \cite{ang2007macro}, who found that even with forecast combinations, the weight on the SPF forecast dominated all other contributions. Here, we find that other types of forecast can improve inflation forecasting above and beyond using just survey estimates.

It has to be emphasized that an additional merit of employing NPRf in this situation is that we do not have to take a strong and often contentious stand on the stationarity of inflation, since given our formulation of the estimator, it accommodates some extent of non-stationarity including locally stationary processes. Regardless, the evidence presented thus far appear to favor the use of forecast combination for inflation forecasting, which is a notable departure from the conclusions arrived at in \cite{ang2007macro}.

\subsection{Predictability of stock returns}

\label{stock}

Many popular macroeconomic and financial variables do not posses
out-of-sample predictive power in forecasting stock returns even if they may exhibit good in-sample performance \citep{bossaerts1999implementing, welch2008comprehensive}. However, models that do successfully deliver
statistically significant and economically relevant out-of-sample
forecasting gains are often approaches that accommodate model uncertainty and parameter instability \citep{rapach2013forecasting}. This is not particularly surprising given that forecasters do not possess \textit{a priori} information on the "best" models. Furthermore, there is strong evidence that many predictive relationships of stock returns are unstable over time \citep{chen2012testing}. Forecast combination fits well into this framework because it helps to attenuate uncertainty in individual forecasts and can accommodate time variation. In fact, \cite{rapach2010out} reported significant out-of-sample predictive gains in forecasting US stock returns compared to the historical average using combination strategies that overlap with those that we have introduced in section \ref{competing}. Here, we investigate whether our nonparameteric estimator can achieve similar favorable out-of-sample performace.

To do so, we use updated data (till 2018) from \cite{welch2008comprehensive} and construct the predictors and excess return in a similar fashion. The dependent variable is the stock return derived from the S\&P 500 index and is defined as $\Delta P_{t+1}=[\log (P_{t+1}+D_{t})-\log(P_{t})]-R_{t}$, where $P_{t}$ is the index value, $D_{t}$ is the dividends paid on the index, and $R_{t}$ is the 3-month Treasury bill rate. We use 13 predictors following Rapach et al. (2010), whose full description can be found on the online appendix. Our sample period is from 1947Q2 to 2018Q3, and we use the last 5 years, 2013Q4 to 2018Q3, for out-of-sample evaluation. 

As usual, we compare the performance of static and adaptive estimators as introduced in section \ref{competing}. In addition, we look at the
performance of the prevailing historical average defined as $\bar{\Delta} P_{t+1,T-1}=1/(T-1)\sum_{t=1}^{T-1}\Delta P_{t+1}$, which is essentially a moving average with an expanding window, as this was the benchmark in \cite{rapach2010out}, and is supposedly difficult to beat \citep{welch2008comprehensive}.

Table \ref{tab:table5} presents the ASCFEs scaled by a multiplication of 1000. In general, the adaptive methods seem to perform better than the static OLS-based approaches which is consistent with the notion of parameter instability. However, we observe that the NPRf does not perform favorably in this scenario, as it beats neither the historical average nor the equal weighting scheme. One possible reason for this could be due to the high number of predictors ($13 \times 2$) relative to the sample size. We check this intuition by considering the group SCAD (gSCAD) penalized version of NPRf.

Investigating the value of gSCAD for variable selection is particularly relevant given that the number of predictors in the return predictability literature has been on the rise. Furthermore, predictor selection \textit{per se} is of strong independent interest in asset pricing. To further evaluate our penalization method, we compare it with the partially egalitarian approaches introduced by \cite{diebold2019machine}, which seeks to select important variables in a first-stage Lasso and subsequently shrink their weights to the arithmetic mean with a modified Ridge estimator. Their method capitalizes on the frequently reported empirical finding that simple averages of forecasts tend to outperform sophisticated estimation strategies.

\begin{table}[htbp]
  \centering
  \renewcommand{\arraystretch}{0.95} 
  \caption{Forecast combination results for stock returns.}
  \begin{threeparttable}
    \begin{tabular}{lclc}
    \toprule
    Estimation    &               & \multicolumn{2}{c}{ASCFE} \\
    \midrule
    \textit{Adaptive} &               &               &  \\
    NPRf          &               & \multicolumn{2}{c}{3.183} \\
    BG            &               & \multicolumn{2}{c}{2.879} \\
    TVGRregconst  &               & \multicolumn{2}{c}{4.992} \\
    TVGRreg       &               & \multicolumn{2}{c}{4.663} \\
    TVGRregconstr &               & \multicolumn{2}{c}{2.991} \\
    ARMA(1,1)     &               & \multicolumn{2}{c}{2.807} \\
    Historical average &               & \multicolumn{2}{c}{3.034} \\
    \textit{Static} &               &               &  \\
    GRregconst    &               & \multicolumn{2}{c}{12.506} \\
    GRreg         &               & \multicolumn{2}{c}{9.969} \\
    GRregconstr   &               & \multicolumn{2}{c}{3.125} \\
    EQ            &               & \multicolumn{2}{c}{2.888} \\
                  &               & \multicolumn{2}{c}{} \\
    \midrule
    Penalized estimation & ASCFE         & \multicolumn{2}{c}{Selected predictors} \\
\cmidrule{3-4}                  &               & For all periods & Sometimes selected \\
    \midrule
    gSCAD         & 2.650         & \multicolumn{1}{p{10em}}{E/P, SVAR, NTIS, \newline{}TBL, I/K} & \multicolumn{1}{p{10em}}{B/M} \\
    peLasso       & 3.349         & TBL, I/K      &  \\
                  &               &               &  \\
    \midrule
    Best          & gSCAD         & \multicolumn{2}{c}{} \\
                  &               &               &  \\
    \midrule
    DM test (p-value) &               &               &  \\
    gSCAD         &               &               &  \\
     $<$ Hist. avg. &               & \multicolumn{2}{c}{0.065} \\
     $<$ EQ         &               & \multicolumn{2}{c}{0.106} \\
     $<$ peLasso    &               & \multicolumn{2}{c}{0.027} \\
                  &               &               &  \\
    RC test (p-value) &               &               &  \\
    gSCAD         &               &               &  \\
     $<$ Hist. avg. &               & \multicolumn{2}{c}{0.129} \\
     $<$ EQ         &               & \multicolumn{2}{c}{0.232} \\
     $<$ peLasso    &               & \multicolumn{2}{c}{0.093} \\
    \bottomrule
    \end{tabular}%
    \begin{tablenotes}[flushleft]
    \small
    \item Notes: (1) ASCFEs multiplied by 1000. (2) NPRf: nonparametric estimator with data reflection, BG: \cite{bates1969combination} adaptive etimator, GRregconst: OLS regression with intercept, GRreg: OLS regression without intercept, GRregconstr: OLS regression with sum of coefficients constrained to unity, TV: time-varying weights, EQ: equal weights, gSCAD: NPRf with group SCAD penalties, peLasso: partially egalitarian Lasso \citep{diebold2019machine}. (3) DM test: \protect\cite{diebold1995comparing} test. RC test:
'Reality Check' test \citep{white2000reality}. (4) "$x < y$" indicates a test of null hypothesis of equal predictive ability between $x$ and $y$, with the one-sided alternative of superior predictive ability of $x$ over $y$. See online appendix for predictor description.
\end{tablenotes}
  \end{threeparttable}
  \label{tab:table5}%
\end{table}%

The results in table \ref{tab:table5} show that gSCAD is able to achieve the lowest ASCFE and provide some statistical evidence that gSCAD possesses superior predictive ability over the historical average and the competing regularized estimator, peLasso. Although it appears to perform better than EQ, the statistical evidence is borderline. This result is not entirely unexpected given the short out-of-sample evaluation period and arguably unsatisfactory finite-sample properties of tests for out-of-sample predictive ability \citep{elliott2005optimal, chao2001out}. On another note, it is interesting to see that both gSCAD and peLasso agree on the relevance of the contributions from macro-level variables, the treasury rate (TBL) and the investment-to-capital ratio (I/K), in predicting equity returns for the considered sample period. Although gSCAD suggests that at least 4 other predictors are also important. Capturing the predictors that were excluded by peLasso might have contributed to a better performance.     

\section{Concluding Remarks}

In this paper, we have proposed a new approach to forecast combination with time-varying weights by means of a nonparametric local linear estimator with data reflection. The estimator can be adapted to situations where the number of forecasts is large, possibly larger than the sample size, through the use of sparsity and penalization techniques. Theoretically, we have shown that the nonparametric estimator with the group SCAD penalty is consistent in both selecting relevant forecasts and in estimating the weights. Simulations have shown that the nonparametric estimator performs well in situations of parameter instability, and the relative success of our estimator in empirical applications suggests that predictive instability is the norm rather than the exception. Our proposed nonparametric estimator may also be generalized to forecast combinations with volatility and density forecasts, and it may be interesting to explore these avenues in future research given their popularity in applications.

\bibliographystyle{elsarticle-harv}
\bibliography{Chen_Maung}
\processdelayedfloats

 \makeatletter
\efloat@restorefloats \makeatother
\appendix

\section{Mathematical proofs }

\noindent \textbf{Proof of \thref{consis2}.}

\noindent To facilitate the theoretical derivation, we consider an alternative expression for the local linear estimator
\begin{equation}
\hat{\gamma}_t = 
\begin{bmatrix}
S_{0}(t/T) & S_{1}^\top(t/T) \\ 
S_{1}(t/T) & S_{2}(t/T)%
\end{bmatrix}%
^{-1} 
\begin{bmatrix}
R_{0}(t/T) \\ 
R_{1}(t/T)%
\end{bmatrix}
\equiv S^{-1}(t/T) R (t/T)  \label{matsol}
\end{equation}
where 
\begin{equation*}
S_{j}(t/T) \equiv T^{-1} \sum_{ s=t-\floor*{Th}  , s \neq t}^{t+%
\floor*{Th}} X_s X_s^T \left(\frac{s-t}{T}\right)^j k_{st},
\end{equation*}
\begin{equation*}
R_{j}(t/T) \equiv T^{-1} \sum_{ s=t-\floor*{Th}  , s \neq t}^{t+%
\floor*{Th}} X_s \left(\frac{s-t}{T}\right)^j k_{st} y_{s+1}.
\end{equation*}

Note that \thref{consis} is a special case of \thref{consis2}, and hence it
is sufficient to establish the latter. This will be accomplished by
establishing the asymptotic normality for $\hat{\gamma}_t$, which
encompasses the local linear estimator $\hat{\beta}_t$ in the first $(d+1)$
components. To do so, we use the following lemmas, in a similar fashion to %
\citet{cai2007trending}.

First, define the following notation: $r_{j}(t/T) = T^{-1} \sum_{ s=t-\floor*{Th}, s \neq t}^{t+%
\floor*{Th}} X_s (\frac{s-t}{T})^j k_{st} \varepsilon_{s+1}$, $Q_{s,t} = \beta(s/T) - \beta(t/T) - (\frac{s-t}{T})%
\beta^{^{\prime}}(t/T)-\frac{1}{2}(\frac{s-t}{T})^2\beta^{^{\prime\prime}}(t/T)$, $
D_{j}(t/T)=T^{-1} \sum_{s=t-\floor*{Th}, s \neq t}^{t+\floor*{Th}} X_s X_s^\top (\frac{s-t}{T})^j k_{st} Q_{s,t}$, and $B_{j}(t/T) = \frac{1}{2} S_{j+2} (t/T) \beta^{^{\prime\prime}}(t/T)$,
where $S_{j+2}(\cdot)$ is defined in \eqref{matsol}. By straightforward substitution, we get 
\begin{equation}
\hat{\gamma}_t - \gamma_t = S^{-1}(t/T)r(t/T) + S^{-1}(t/T)B(t/T) +
S^{-1}(t/T)D(t/T),  \label{A1}
\end{equation}
where $r(t/T) = (r_0^\top(t/T), r_1^\top(t/T))^\top$, $B(t/T) =
(B_0^\top(t/T), B_1^\top(t/T))^\top,$ and $D(t/T) = (D_0^\top(t/T), \newline
D_1^\top(t/T))^\top$. We shall show that the terms on the right
hand side of \eqref{A1} are asymptotically negligible.

By \thref{lemma1} below, 
\begin{equation*}
S^{-1}(t/T) = 
\begin{bmatrix}
S_0(t/T) & S_1^\top(t/T) \\ 
S_1(t/T) & S_2(t/T)%
\end{bmatrix}%
^{-1} \rightarrow^p 
\begin{bmatrix}
M(t/T) & \mathbf{0} \\ 
\mathbf{0} & h^2\mu_2 M(t/T)%
\end{bmatrix}%
^{-1} \equiv H^{-1} \Sigma^{-1} (t/T) H^{-1},
\end{equation*}
where $\Sigma(t/T) = diag\{M(t/T), \mu_2M(t/T)\}$. In addition, %
\thref{lemma1} implies that $H^{-1}D(t/T) = o_p(h^2)$, and 
\begin{equation*}
HS^{-1}(t/T)B(t/T) \rightarrow^p \Sigma^{-1}(t/T)H^{-1}\frac{h^2}{2}%
\begin{pmatrix}
\mu_2 M(t/T) \beta^{^{\prime\prime}}(t/T) \\ 
0%
\end{pmatrix}
= \frac{h^2}{2} 
\begin{pmatrix}
\mu_2 \beta^{^{\prime\prime}}(t/T) \\ 
0%
\end{pmatrix}%
.
\end{equation*}
Subsequently, from \thref{lemma2} and Chebyshev, we get $H^{-1}r(t/T) =
O_p((Th)^{-1/2})$. Hence, recalling the definition of \eqref{A1}, we
conclude that 
\begin{equation*}
H(\hat{\gamma}(t/T) - \gamma(t/T)) - \frac{h^2}{2}%
\begin{pmatrix}
\mu_2 \beta^{^{\prime\prime}}(t/T) \\ 
0%
\end{pmatrix}
= o_p(h^2) + O_p((Th)^{-1/2}).
\end{equation*}
To achieve asymptotic normality, we use $(A.7)$ in \citet{cai2007trending},
which shows that $\sqrt{Th}H^{-1}r(t/T) \rightarrow^d N(0,V_{\gamma}(t/T))$.
Thus the desired result follows. \qed
\begin{lemma}
\thlabel{lemma1} Under assumptions A.1,A.2,A.3* and A.4, for $0 \leq j \leq
3 $, $h = O(T^{-1/5})$, and for $t/T \in [0,1]$, we have
\begin{equation}
h^{-j} S_j(t/T) - \mu_j M(t/T) = o_p(1),  \label{A2}
\end{equation}
and 
\begin{equation}
h^{-j} D_j(t/T) = o_p(h^2),  \label{A3}
\end{equation}
where $\mu_j = \int u^j k(u) du$.
\end{lemma}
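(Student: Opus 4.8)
The plan is to establish both \eqref{A2} and \eqref{A3} by mean-square arguments, exploiting that the statement is pointwise in $t/T$: it suffices to show each quantity converges in $L^2$ to its claimed limit, which then delivers convergence in probability. I would begin with \eqref{A2} by separating $h^{-j}S_j(t/T)$ into its mean and a centered remainder. For the mean, substituting $u_s = (s-t)/(Th)$ (so that $h^{-j}(\tfrac{s-t}{T})^j = u_s^j$ and $k_{st}=h^{-1}k(u_s)$) and replacing the sum over the integer grid by a Riemann integral gives
\begin{equation*}
E\big[h^{-j}S_j(t/T)\big] = h^{-j}T^{-1}\sum_{s} M(s/T)\Big(\tfrac{s-t}{T}\Big)^j k_{st} = \int_{-1}^{1} M\big(\tfrac{t}{T}+hu\big)\,u^j k(u)\,du + o(1).
\end{equation*}
By Lipschitz continuity of $M(\cdot)$ in A.2(iii), $M(\tfrac{t}{T}+hu)\to M(\tfrac{t}{T})$ uniformly over $u\in[-1,1]$ as $h=O(T^{-1/5})\to 0$, so the integral tends to $M(\tfrac{t}{T})\int_{-1}^1 u^j k(u)\,du = \mu_j M(\tfrac{t}{T})$, the asserted limit; the error from discretizing the sum is of smaller order.

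The bulk of the work, and the main obstacle, is showing the variance of $h^{-j}S_j(t/T)$ vanishes. Writing the summands as $W_s = X_s X_s^\top u_s^j k(u_s)$, I would split $\mathrm{Var}\big(T^{-1}h^{-1}\sum_s W_s\big)=T^{-2}h^{-2}\sum_{s,s'}\mathrm{Cov}(W_s,W_{s'})$ into diagonal and off-diagonal parts. The roughly $2\floor*{Th}$ diagonal terms are each $O(1)$ because $E\|X_s\|^4<\infty$ by the $4+\delta$ moment bound in A.2(i) and $k$ is bounded (A.4), so the diagonal contribution is of order $T^{-2}h^{-2}\cdot Th=(Th)^{-1}\to 0$. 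For the off-diagonal terms I would invoke a covariance (Davydov-type) inequality for $\beta$-mixing sequences, which bounds $|\mathrm{Cov}(W_s,W_{s+l})|$ by a power of the mixing coefficient $\beta^*(l)$ times $(2+\delta)$-moments of $W$; these moments are finite since A.2(i) controls $E\|X_s\|^{4+\delta}$ and the kernel factors are bounded. Summability of the resulting bound follows from A.1, since $\sum_l l^4\beta^*(l)^{\delta/(1+\delta)}<\infty$ in particular forces $\sum_l \beta^*(l)^{\delta/(1+\delta)}<\infty$, giving an off-diagonal contribution of order $T^{-2}h^{-2}\cdot Th\cdot\sum_l\beta^*(l)^{\delta/(1+\delta)}=O((Th)^{-1})\to 0$. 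Combining the mean and variance yields \eqref{A2}.

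For \eqref{A3} I would use that $Q_{s,t}$ is exactly the second-order Taylor remainder of $\beta(\cdot)$ at $t/T$. Since $\beta''$ is continuous on $[0,1]$ (A.2(ii)) and $|\tfrac{s-t}{T}|\le h$ over the window, Taylor's theorem gives $Q_{s,t}=\tfrac12(\tfrac{s-t}{T})^2[\beta''(\xi_{s,t})-\beta''(t/T)]$ for some $\xi_{s,t}$ within $h$ of $t/T$, whence $\sup_{|s-t|\le\floor*{Th}}\|Q_{s,t}\|\le \tfrac12 h^2\varepsilon(h)$ with $\varepsilon(h)=\sup_{|\tau'-\tau|\le h}\|\beta''(\tau')-\beta''(\tau)\|\to 0$ by uniform continuity. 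Bounding in norm,
\begin{equation*}
\big\|h^{-j}D_j(t/T)\big\| \le \tfrac12\,\varepsilon(h)\,h^2\cdot T^{-1}h^{-1}\sum_{s}\|X_s\|^2\,|u_s|^{\,j+2}\,k(u_s),
\end{equation*}
and the trailing sum is $O_p(1)$ by exactly the moment/mixing argument used for \eqref{A2} (its mean converges to $\mu_{j+2}\Tr M(t/T)$ and its variance vanishes). Hence $h^{-j}D_j(t/T)=o_p(h^2)$. A final point worth a remark is the reflection near the boundary: the reflected pseudodata are genuine sample points $X_{2t-s}$ lying within $h$ of $t/T$, so they inherit the moment and mixing bounds and fall in the range where the Lipschitz and Taylor approximations apply, and the argument carries over unchanged. \qed
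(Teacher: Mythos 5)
Your proof is correct and follows essentially the same route as the paper's (which is relegated to the online appendix but, following \cite{cai2007trending}, proceeds exactly as you do): a mean--variance decomposition with a Riemann-sum plus Lipschitz-continuity argument for the mean, a Davydov-type covariance inequality under the $\beta$-mixing and moment conditions for the variance, and a uniform Taylor-remainder bound $\|Q_{s,t}\|\leq \tfrac12 h^2 u_s^2\,\varepsilon(h)$ with $\varepsilon(h)\to 0$ for $D_j$. Your closing remark on the reflection is also the right observation --- the pseudodata are original observations within an $h$-neighborhood of $t/T$, so the limits $\mu_j M(\tau)$ are unaffected (indeed for odd $j$ the reflection makes $S_j$ vanish identically, consistent with $\mu_j=0$).
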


\begin{proof}
See online appendix.
\end{proof}

\begin{lemma}
\thlabel{lemma2} Under assumptions A.1,A.2,A.3* and A.4, for $h =
O(T^{-1/5})$, and for $t/T \in [0,1]$, we have 
\begin{equation}
ThVar(H^{-1}r(t/T)) = V_{\gamma}(t/T) + o(1),  \label{A4}
\end{equation}
where $V_{\gamma}(t/T)=diag\{2\Omega(t/T)\nu_0,2\Omega(t/T)\nu_2\}$, $H =
diag\{I_{(d+1)},hI_{(d+1)}\}$, and $\nu_j = \int_{-1}^1 u^j k^2(u)du$.
\end{lemma}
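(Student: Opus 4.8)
The plan is to prove \eqref{A4} block by block. Writing the score as $g_s = X_s \varepsilon_{s+1}$ and recalling $H^{-1} r(t/T) = (r_0^\top(t/T), h^{-1} r_1^\top(t/T))^\top$, each of the three sub-blocks of $Th\,\mathrm{Var}(H^{-1} r(t/T))$ is a weighted double sum of the form
\begin{equation*}
Th \cdot T^{-2} \sum_{s \neq t}\sum_{s' \neq t} a^{(i)}_s a^{(i')}_{s'}\, \mathrm{Cov}(g_s, g_{s'}), \qquad i,i' \in \{0,1\},
\end{equation*}
with level weights $a^{(0)}_s = k_{st}$ and slope weights $a^{(1)}_s = h^{-1}\big(\tfrac{s-t}{T}\big) k_{st}$. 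Under A.3* the covariance equals $\Gamma_{s'-s}(s/T)$, which by the Lipschitz continuity of $\Gamma_j$ I would replace by $\Gamma_{s'-s}(\tau)$, $\tau = t/T$, inside the localizing window at the cost of a remainder controlled below. (Under the m.d.s.\ condition A.3 only the diagonal $s'=s$ survives and $\Omega = V$, which recovers the variance in \thref{consis}.)

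My first step would be to exploit the reflection identity. For $s > t$ the synthesized pseudo-data $(y_{s+1}, f_s^\top) = (y_{2t-s+1}, f_{2t-s}^\top)$ together with the evenness of $k$ lets me fold the right half of the window onto the left: after the substitution $m = 2t - s$, the contribution of each reflected index mirrors that of its partner $m < t$. This folding is what produces the factor $2$ relative to the interior constants of Cai (2007) and is the source of the variance reduction advertised after \thref{consis}. The cross block between level and slope weights carries the kernel constant $\nu_1 = \int_{-1}^1 u\,k^2(u)\,du = 0$, so it is asymptotically negligible and the limit is block-diagonal; the level block will produce $2\nu_0\Omega(\tau)$ and the slope block $2\nu_2\Omega(\tau)$, matching $V_\gamma(\tau)$.

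The core computation is to assemble the long-run variance. I would reorganize the double sum by the lag $j = s' - s$, so that the $s$-summation and the $j$-summation decouple asymptotically: for each fixed $j$ the kernel evaluations $k((s-t)/(Th))$ and $k((s+j-t)/(Th))$ coincide up to $o(1)$ since $j/(Th) \to 0$, whence the inner sum over $s$ is a Riemann sum converging to $\nu_0$ (level) or $\nu_2$ (slope) times $\Gamma_j(\tau)$, after rescaling $u = (s-t)/(Th)$ and using $\int_{-1}^1 k^2 = \nu_0$ and $\int_{-1}^1 u^2 k^2 = \nu_2$. Summing over $j$ then yields $\Omega(\tau) = \sum_{j=-\infty}^\infty \Gamma_j(\tau)$, whose existence is guaranteed by A.1 and A.2.

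The hard part will be justifying this decoupling rigorously and uniformly in $\tau$: I must interchange the Riemann limit with the infinite lag sum and bound three remainders — the truncation of the lag sum at the finite window width $\lfloor Th\rfloor$, the local-stationarity error from replacing $\Gamma_j(s/T)$ by $\Gamma_j(\tau)$, and the edge effects near $s = t \pm \lfloor Th \rfloor$. Each would be controlled with a Davydov/Rio-type covariance inequality for $\beta$-mixing sequences, the $(4+\delta)$-moment bound in A.2(i), and the summability $\sum_{l} l^4 \beta^*(l)^{\delta/(1+\delta)} < \infty$ from A.1, whose extra polynomial factor $l^4$ gives ample room to dominate the weighted lag tails. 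Once these remainders are shown to be $o(1)$, collecting the level, slope and (vanishing) cross blocks gives $V_\gamma(\tau) = \mathrm{diag}\{2\Omega(\tau)\nu_0, 2\Omega(\tau)\nu_2\}$ and completes the proof.
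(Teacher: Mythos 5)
Your overall machinery --- fold the reflected half of the window onto the original data, reorganize the variance double sum by lag, pass to a Riemann limit for each fixed lag, and control the truncation, local-stationarity and edge remainders with a Davydov-type covariance inequality under A.1--A.2 --- is the natural machinery for this lemma, and your level block is essentially right: folding doubles each weight $k_{mt}$ for $m<t$, the squared weights contribute a factor $4$, and the half-window integral $\int_{-1}^{0}k^{2}(u)\,du=\nu_{0}/2$ returns the stated constant $4\cdot(\nu_{0}/2)=2\nu_{0}$ multiplying $\Omega(\tau)$. You should make that $4\times\tfrac{1}{2}$ bookkeeping explicit rather than leaving the factor of $2$ as an assertion.

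There is, however, a genuine problem with the slope block, and it is created by your own folding identity. The slope weight $h^{-1}\bigl(\tfrac{s-t}{T}\bigr)k_{st}=\bigl(\tfrac{s-t}{Th}\bigr)k_{st}$ is an odd function of $s-t$, while the synthesized score $X_{s}\varepsilon_{s+1}$ is by construction an even function of $s-t$: the pseudodatum at $s=2t-m$ is literally the datum at $m$. Summed over the symmetric index set $\{t-\lfloor Th\rfloor,\dots,t+\lfloor Th\rfloor\}\setminus\{t\}$, each reflected term cancels its partner exactly, so $h^{-1}r_{1}(t/T)\equiv 0$ and the lower-right block of $Th\,\mathrm{Var}(H^{-1}r(t/T))$ is identically zero, not $2\nu_{2}\Omega(\tau)$; the value $2\nu_{2}\Omega(\tau)$ is what one obtains only if the reflected slope weights add constructively, i.e.\ with $|s-t|$ in place of $s-t$. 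So you cannot simultaneously invoke folding to generate the factor $2$ in the level block and assert the stated limit for the slope block: as written, your argument establishes the level block and contradicts the slope block. You need to confront this head on --- either show that the lower block is degenerate under literal reflection (which is harmless for Propositions 1 and 2, since the same cancellation forces $S_{1}(t/T)=0$, so that $\hat{\beta}(\tau)$ depends only on $S_{0}^{-1}(t/T)r_{0}(t/T)$ and only the upper-left block of $V_{\gamma}$ ever enters the asymptotics) or identify the convention under which the stated $2\nu_{2}\Omega(\tau)$ is to be computed. Relatedly, your cross-block reasoning via $\nu_{1}=\int_{-1}^{1}uk^{2}(u)\,du=0$ is the unreflected interior-point argument; under reflection the cross block vanishes for the stronger reason that the folded slope weight is identically zero.
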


\begin{proof}
See online appendix.
\end{proof}

\noindent \textbf{Proof of \thref{CVunif}}

\noindent For $\varepsilon > 0$, define 
\begin{equation*}
d_\varepsilon = \inf_{|\hat{h}_{CV} - h^{opt}| > T^{-1/5}\varepsilon} T^{-4/5}|CV(%
\hat{h}_{CV}) - CV(h^{opt})|
\end{equation*}
Then, 
\begin{align*}
&P(|\hat{h}_{CV} - h^{opt}| > T^{-1/5} \varepsilon) \leq P(T^{4/5} |CV(\hat{h}_{CV}) - CV(h^{opt})| > d_{\varepsilon}) \\
&= P(T^{4/5} |CV(\hat{h}_{CV}) - IMSCFE_L(h^{opt}) + IMSCFE_L(h^{opt}) -
CV(h^{opt})| > d_\varepsilon) \\
&\leq P(T^{4/5} |CV(\hat{h}_{CV}) - IMSCFE_L(\hat{h}_{CV})| > \frac{%
d_\varepsilon}{2}) + P(T^{4/5} |IMSCFE_L(h^{opt}) - CV(h^{opt})| > \frac{d_\varepsilon}{2%
}) \\
&\rightarrow 0,
\end{align*}
where $IMSCFE(h)_L$ is defined in \eqref{IMSCFE} and we have used the fact that $IMSCFE_L(h^{opt}) \\ \leq IMSCFE_L(\hat{h}%
_{CV})$ together with the fact that $CV(\hat{h}_{CV}) - CV(h^{opt}) \leq 0$
for the last inequality. The convergence follows from \thref{uniformCV}.

\begin{proposition}
\thlabel{uniformCV}
$CV(h) = \int_{0}^{1}\sigma (\tau )^{2}d\tau + IMSCFE(h)_L + o_p(T^{-4/5})$, uniformly in $h$.
\end{proposition}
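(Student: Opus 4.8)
The plan is to substitute the model $y_{s+1}=X_s^\top\beta_s+\varepsilon_{s+1}$ into the criterion and expand the leave-one-out residual as $y_{s+1}-X_s^\top\hat\beta_s=\varepsilon_{s+1}-X_s^\top(\hat\beta_s-\beta_s)$. Squaring and averaging produces three pieces,
\begin{equation*}
CV(h)=\underbrace{T^{-1}\sum_{s=1}^T\varepsilon_{s+1}^2}_{\mathrm{I}}\;\underbrace{-\,2T^{-1}\sum_{s=1}^T\varepsilon_{s+1}X_s^\top(\hat\beta_s-\beta_s)}_{\mathrm{II}}\;+\underbrace{T^{-1}\sum_{s=1}^T(\hat\beta_s-\beta_s)^\top X_sX_s^\top(\hat\beta_s-\beta_s)}_{\mathrm{III}}.
\end{equation*}
Term $\mathrm{I}$ carries no dependence on $h$; under A.1 and A.5 a law of large numbers for mixing sequences gives $\mathrm{I}=\int_0^1\sigma^2(\tau)\,d\tau+O_p(T^{-1/2})$, where the Riemann-sum error is $O(T^{-1})$ by the Lipschitz continuity in A.2. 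Although the $O_p(T^{-1/2})$ fluctuation is of larger order than $T^{-4/5}$, it is constant in $h$ and hence cancels when $CV$ is compared across two bandwidths in \thref{CVunif}, which is the only way this proposition is invoked. The substantive content is therefore that $\mathrm{II}$ is negligible and that $\mathrm{III}$ reproduces $IMSCFE(h)_L$, both uniformly in $h$.

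For the quadratic term $\mathrm{III}$, I would insert the expansion underlying \thref{consis} (equivalently \thref{lemma1} and \thref{lemma2}), writing $\hat\beta_s-\beta_s$ as a deterministic bias $\tfrac{h^2}{2}\mu_2\beta''(s/T)+o_p(h^2)$ plus a stochastic part driven by the local weighted average of $X_l\varepsilon_{l+1}$. Expanding the quadratic form splits $\mathrm{III}$ into a squared-bias piece, a variance piece, and a bias--variance cross piece. Using the Lipschitz continuity of $M$, $\sigma^2$, $V$ in A.2, the first converges to $\int_0^1\tfrac{h^4\mu_2^2}{4}\Tr[M(\tau)\beta''(\tau)\beta''(\tau)^\top]\,d\tau$ and the second to $\int_0^1\tfrac{2\nu_0}{Th}\Tr[M(\tau)V_\beta(\tau)]\,d\tau$, whose sum is exactly $IMSCFE(h)_L$, while the cross piece has mean zero. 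What remains is to show $\mathrm{III}$ concentrates on this mean with error $o_p(T^{-4/5})$ uniformly in $h$, which is where the twelve-moment bound A.5 is used to control the variances of the resulting (double) sums over the $\beta$-mixing sequence.

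Term $\mathrm{II}$ is where the predictive, leave-one-out design of $\hat\beta_s$ is decisive: because $\hat\beta_s$ never uses $y_{s+1}$, the innovation $\varepsilon_{s+1}$ is orthogonal to $\hat\beta_s-\beta_s$. Under A.3, $E[\varepsilon_{s+1}\mid\mathcal I_s]=0$, while $X_s$ and the $\varepsilon_{l+1}$ ($l\neq s$) entering $\hat\beta_s$ are $\mathcal I_s$-measurable, so every summand of $\mathrm{II}$ has conditional mean zero. I would split $\mathrm{II}$ into a bias-cross part and a variance-cross part. The former is $h^2\mu_2$ times an average of a martingale-difference array, hence $O_p(h^2T^{-1/2})=O_p(T^{-9/10})=o_p(T^{-4/5})$ since $h\asymp T^{-1/5}$. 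The latter is a mean-zero double sum in $\varepsilon_{s+1}\varepsilon_{l+1}$ ($l\neq s$); a direct second-moment computation, controlling the correlated contributions via A.5 and the decay of $\beta^*(\cdot)$, gives a standard deviation of order $T^{-1}h^{-1/2}\asymp T^{-9/10}$, again $o_p(T^{-4/5})$.

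Finally, every pointwise-in-$h$ estimate must be upgraded to hold uniformly over $h\in[c_1T^{-1/5},c_2T^{-1/5}]$. Here I would use that $CV(h)$ depends on $h$ only through the integer $\floor*{Th}$ and the smooth kernel weights $h^{-1}k((s-t)/(Th))$; the summation range changes only at the $O(T^{4/5})$ breakpoints where $\floor*{Th}$ jumps, so a maximal inequality over these breakpoints together with a Lipschitz-in-$h$ modulus-of-continuity bound between them yields uniformity. The main obstacle is precisely this uniform control at rate $o_p(T^{-4/5})$: since $T^{-4/5}$ is itself the order of the leading term $IMSCFE(h)_L$, the remainders must be of \emph{strictly} smaller order simultaneously across a family of bandwidths whose size grows with $T$, and it is this requirement that forces the strengthened moment condition in A.5.
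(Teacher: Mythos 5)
Your decomposition is essentially the paper's. The paper also writes the leave-one-out residual as $\varepsilon_{s+1}-X_s^\top(\hat{\beta}_s-\beta_s)$, replaces $\hat{\beta}_s-\beta_s$ by the linearized quantity $\hat{D}_s=M_s^{-1}T^{-1}\sum_l k_{ls}X_l(X_l^\top Q^*_{l,s}+\varepsilon_{l+1})$ (your bias-plus-noise expansion, with the random design matrix already replaced by its deterministic limit), splits the resulting surrogate into the pieces $CV_{bias}$, $CV_{variance}$, $CV_{11}$, $CV_{12}$, $CV_2$, computes means and second moments of each, and chains over a grid of bandwidths. Your handling of the cross term via the leave-one-out/m.d.s.\ structure, and the orders $O_p(h^2T^{-1/2})$ and $O_p(T^{-1}h^{-1/2})$ for its two parts, are consistent with what the paper needs.

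Two concrete issues. First, your uniformity step as sketched would fail quantitatively: a Chebyshev union bound over the $O(T^{4/5})$ breakpoints of $\lfloor Th\rfloor$, applied to terms whose second moments are $O(T^{-12/5})$, gives $T^{4/5}\cdot T^{8/5}\cdot T^{-12/5}=O(1)$, which does not vanish. The paper instead discretizes into only $L_T=O(T^{3/5})$ subintervals, so the union bound is $O(T^{-1/5})=o(1)$, and separately controls the oscillation of each piece within a subinterval via \thref{orders}(ii)--(vi); you would need either this finer bookkeeping or higher-moment bounds at your grid points. Second, you are right that $T^{-1}\sum_s\varepsilon_{s+1}^2-\int_0^1\sigma^2(\tau)d\tau$ is generically $O_p(T^{-1/2})$, not $o_p(T^{-4/5})$; since this term is free of $h$ it cancels in the comparison $CV(\hat{h}_{CV})-CV(h^{opt})$ that drives \thref{CVunif}, so your reading makes the downstream argument go through, but it means you are establishing the proposition only modulo an $h$-independent $O_p(T^{-1/2})$ shift rather than as literally stated. (The paper asserts $E(d_\sigma^2)=O(T^{-2})$ at the corresponding point, which is stronger than the stated assumptions deliver; your more cautious treatment is the defensible one, but you should state explicitly that the conclusion you prove is the $h$-uniform approximation of the $h$-dependent part of $CV$.)
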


\begin{proof}
Recall the definition of IMSCFE(h) from \eqref{IMSCFE}. We want to show that $|CV(h) - IMSCFE(h)| \\ = o_p(T^{-4/5})$ uniformly in $h$. To begin, note that
\begin{align}
\label{prop4main}
|CV(h) -  IMSCFE(h)| &\leq |CV(h) -  \hat{\tilde{CV}}(h)| + |\hat{\tilde{CV}}(h) - IMSCFE(h)| \notag \\
&\equiv d_1(h) + d_2(h),
\end{align}
where $\hat{\tilde{CV}}(h) = T^{-1} \sum_{s=1}^T \varepsilon_{s+1}^2 + T^{-1}
\sum_{s=1}^T \hat{D}_{s}^\top X_s X_s^\top \hat{D}_{s} - 2T^{-1}
\sum_{s=1}^T \varepsilon_{s+1} X_s^\top \hat{D}_{s}$, $\hat{D}_{s} = M^{-1}_s  T^{-1} \\ \sum_{\substack{ l=s-%
\floor*{Th}  \\ l \neq s}}^{s+\floor*{Th}} k_{ls}  X_l \varepsilon_{l,s}^*$, $\varepsilon_{l,s}^* = X_l^\top Q_{l,s}^* +
\varepsilon_{l+1}$, and $Q_{l,s}^* = \beta(l/T) - \beta(s/T) - (\frac{l-s}{T})
\beta^{^{\prime}}(s/T)$. We can complete the proof by showing that uniformly in $h$, $d_1(h) = o_p(T^{-4/5})$ and $d_2(h) = o_p(T^{-4/5})$.

We start with $d_1(h)$. Since our parameter space $\mathcal{H} \equiv [c_1 T^{-1/5}, c_2 T^{-1/5}]$
is compact, we can partition the space into intervals. In particular, let
there be $L_T = L(T)$ number of intervals $I_j \equiv I_{j,T}$ of length $%
l_{T} = (c_2 - c_1)T^{-1/5}/ L_T$, with midpoints denoted by $\tilde{h}_j
\equiv \tilde{h}_{j,T}$. We let the number of intervals diverge to infinity,
so set $L_T = O(T^{3/5})$ and by implication $l_T = O(T^{-4/5})$.

For a $\xi > 0$, 
\begin{align}  \label{unif}
P(\sup_{h \in \mathcal{H}} |d_1(h)| \geq T^{-4/5}\xi) &\leq
P(\max_{j=1,\ldots,L_T} \sup_{h\in\mathcal{H} \cap I_{j}} |d_1(h) - d_1(%
\tilde{h}_{j})| \geq T^{-4/5}\eta)  \notag \\
&+ P(\max_{j = 1,\ldots,L_T} |d_1(\tilde{h}_j)| \geq T^{-4/5} \eta)  \notag
\\
&= P_1^{d_1} + P_2^{d_1}
\end{align}
where $\eta = \xi/2$. Using \thref{orders}(i), we have $P_2^{d_1} \leq L_T \max_{j=1,\ldots,L_T} T^{8/5} \eta^{-2} E(d_1(\tilde{h}_j)^2) = o(1)$, since $E(d_1(\tilde{h}_j)^2)=O(T^{-12/5})$ and $L_T = O(T^{3/5})$. 

Next, we consider $P_1^{d_1}$. Let $h^* \equiv h^*_j$ be the solution to $%
sup_{h\in \mathcal{H} \cap I_j}|d_1(h) - d_1(\tilde{h}_j)|$. For
convenience, we assume that $L_T$ is chosen large enough so that $\floor*{T\tilde{h}_j} = \floor*{Th^*} \equiv \floor*{Th}$. Observe that 
\begin{align}  \label{ineq0}
P_1^{d_1} &\leq P(\max_{j=2,\ldots,L_T} |CV(h^*) - CV(\tilde{h}_j)| \geq
T^{-4/5} \eta/2) + P(\max_{j=2,\ldots,L_T} |\hat{\tilde{CV}}(\tilde{h}_j) - 
\hat{\tilde{CV}}(h^*)| \geq T^{-4/5} \eta/2)  \notag \\
&\leq 4 L_T T^{8/5} \eta^{-2} \max_{j=2,\ldots,L_T} \bigg\{ E([CV(h^*) - CV(%
\tilde{h}_j)]^2) + E([\hat{\tilde{CV}}(\tilde{h}_j) - \hat{\tilde{CV}}%
(h^*)]^2)\bigg\}.
\end{align}

We start with the second term of \eqref{ineq0}. Following the notation introduced in \thref{lemmaCV} and by Cauchy-Schwarz, we get
\begin{align*}
&E[(\hat{\tilde{CV}}(\tilde{h}_j)-\hat{\tilde{CV}}(h^*))^2] \\
&\leq 5\bigg\{ E[(CV_{bias}(\tilde{h}_j)-CV_{bias}(h^*))^2] + E[(CV_{variance}(\tilde{h}_j)-CV_{variance}(h^*))^2] \\ &+ E[(CV_{11}(\tilde{h}_j)-CV_{11}(h^*))^2] + E[(CV_{12}(\tilde{h}_j)-CV_{12}(h^*))^2] + E[(CV_{2}(\tilde{h}_j)-CV_{2}(h^*))^2] \bigg\} = O(T^{-14/5}),
\end{align*}
where $CV_{bias}(h) = T^{-3} \sum^3  Q_{m,s}^{*\top} X_m  X_m^\top  M_s^{-1} X_s  X_s^\top M_s^{-1} X_n X_n^\top Q_{n,s}^* k_{ms}  k_{ns}$, $CV_{11}(h) = T^{-3} \sum^3 \varepsilon_{m+1} X_m^\top \\ M_s^{-1}  X_s X_s^\top M_s^{-1} X_n X_n^\top
Q_{n,s}^* k_{ms} k_{ns}$, $CV_{12}(h) = T^{-3} \sum^3 Q_{m,s}^{*\top} X_m X_m^\top M_s^{-1} X_s  X_s^\top M_s^{-1} X_n
\varepsilon_{n+1} k_{ms} k_{ns}$, \\ $CV_{variance}(h) = T^{-3} \sum^3 \varepsilon_{m+1} X_m^\top M_s^{-1} X_s X_s^\top M_s^{-1}  X_n
\varepsilon_{n+1} k_{ms} k_{ns}$, $CV_2(h) = 2T^{-1}
\sum_{s=1}^T \varepsilon_{s+1} X_s^\top \hat{D}_{s}$, and $\sum^3 = \sum_{s=1}^T  \sum_{ m=s-\floor*{Th}, m \neq s}%
^{s+\floor*{Th}} \sum_{ n=s-\floor*{Th}, n \neq s}^{s+\floor*{%
Th}}$. The bound is obtained by applying \thref{orders}(ii)-(vi). 

For the first term in \eqref{ineq0}, we have
 \begin{align*}
&E[(CV(\tilde{h}_j)-CV(h^*))^2] = E[(CV(\tilde{h}_j) - \hat{\tilde{CV}}(\tilde{h}_j) + \hat{\tilde{CV}}(\tilde{h}_j) -\hat{\tilde{CV}}(h^*) + \hat{\tilde{CV}}(h^*)- CV(h^*))^2] \\
&\leq 3\bigg\{\underbrace{E[(CV(\tilde{h}_j) - \hat{\tilde{CV}}(\tilde{h}_j))^2]}_{\text{\thref{orders}(i)}} + \underbrace{ E[\hat{\tilde{CV}}(\tilde{h}_j) -\hat{\tilde{CV}}(h^*))^2]}_{\text{\thref{orders}(ii)-(vi)}} + \underbrace{ E[(\hat{\tilde{CV}}(h^*)- CV(h^*))^2]}_{\text{\thref{orders}(i)}} \bigg\} \\
&= O(T^{-12/5}).
\end{align*}
Hence, we conclude that $P_1^{d_1}=o(1)$, which implies that $d_1(h) = o_p(T^{-4/5})$.

Next, we focus on $d_2(h)$. Following the notation defined above, we have
\begin{align}
d_2(h) &\leq \bigg|T^{-1}\sum_{s=1}^T\varepsilon_{s+1}^2 - \int \sigma^2(\tau) d\tau\bigg| + \bigg|CV_{bias}(h) - \underbrace{\frac{h^4 \mu_2^2}{4} \int \Tr%
\{M(\tau)\beta^{^{\prime\prime}}(\tau)\beta^{^{\prime\prime}}(\tau)\}d\tau}_{\equiv \Psi_B(h)} \bigg| \notag \\
&+ \bigg| CV_{variance}(h) - \underbrace{\frac{2\nu_0}{Th} \int \Tr%
\{ M_\tau^{-1}V(\tau)\}d\tau}_{\equiv \Psi_V(h)} \bigg| + |CV_{11}(h)| + |CV_{12}(h)| + |CV_2(h)| \notag \\
&\equiv d_{\sigma} + d_{21}(h) + d_{22}(h) + |CV_{11}(h)| + |CV_{12}(h)| + |CV_2(h)|.
\end{align}

We start with $d_{21}(h)$. Next, define $P_2^{d_{21}}$ in an analogous manner to \eqref{unif}. Note that 
$E(d_{21}(h)^2) =
E(CV_{bias}^2(h)) - E(CV_{bias}(h))^2 + O(T^{-12/5}),
$
where from the proof of \thref{lemmaCV}, we have $E(CV_{bias}(h)) = \Psi_B(h) +  O(T^{-8/5})$. Let $Z_{smn} =
\beta^{^{\prime\prime}\top}_s X_m X_m^\top M_s^{-1} X_s X_s^\top  M_s^{-1}
X_n X_n^\top \beta^{^{\prime\prime}}_s 
k_{ms} k_{ns} (\frac{m-s}{T})^2 (\frac{n-s}{T})^2$ and observe that 
\begin{align*}
&Var_{bias} \equiv E(CV_{bias}^2) - E(CV_{bias})^2 \\
&= \frac{1}{16 T^6}\sum_{s=1}^T \sum_{\substack{ m=s-\floor*{Th}  \\ m \neq
s }}^{s+\floor*{Th}} \sum_{\substack{ n=s-\floor*{Th}  \\ n \neq s}}^{s+%
\floor*{Th}} \sum_{a=1}^T \sum_{\substack{ b=a-\floor*{Th}  \\ b \neq a}}^{a+%
\floor*{Th}} \sum_{\substack{ c=a-\floor*{Th}  \\ c \neq a}}^{a+\floor*{Th}} %
\bigg[E(Z_{smn} Z_{abc}) - E(Z_{smn})E(Z_{abc})\bigg].
\end{align*}
Consider the case where $m < s < n < b < a < c$ and $n-s = \bar{D}$ is the
largest of the adjacent differences. Label this case $(*)$. Then we have, 
\begin{align*}
&Var_{bias}^{*} \\
&\leq \frac{1}{16 T^6} {\sum \ldots \sum}_{\substack{ m<s<n<b<a<c  \\ n - s
= \bar{D}}} \Theta_*^{1/1+\delta} \beta^*(n-s)^{\delta/1+\delta} k_{ms}
k_{ns} \bigg(\frac{m-s}{T}\bigg)^2 \bigg(\frac{n-s}{T}\bigg)^2 \\
&\times k_{ba} k _{ca} \bigg(\frac{b-a}{T}\bigg)^2 \bigg(\frac{c-a}{T}\bigg)^2 \\
&\leq \frac{C h^6}{16T^4} \sum_s \sum_a \underbrace{\sum_{n<s} (n-s)
\beta^*(n-s)^{\delta/1+\delta}}_{<\infty} \underbrace{k\bigg(\frac{n-s}{Th}%
\bigg)}_{\leq k(0)} \underbrace{\bigg(\frac{n-s}{Th}\bigg)^2}_{\leq 1} = O\bigg(\frac{h^6}{T^2}\bigg),
\end{align*}
where $\Theta_* = \max(\Theta_*^a, \Theta_*^b)$, $\Theta_*^a = \sup_{s,
\ldots,c} \int Z_{smn}Z_{abc} dF(X_m, \ldots,X_c)$, and $\Theta_*^b =
\sup_{s,\ldots,c} \int \int Z_{smn} 
Z_{abc} \\ dF(X_m,X_s,X_n) dF(X_a,X_b,X_c)$. The largest order can be obtained by taking the maximum adjacent difference
to be $s-m$ (i.e. $\bar{D} = s - m$). In this case, the order would be $%
O(h^7/T) = O(T^{-12/5})$. Hence we
conclude that $E(d_{21}(h)^2) = Var_{bias} = O(h^7/T)=O(T^{-12/5})$, and so again $
P_2^{d_{21}} \leq L_T \eta^{-2} T^{8/5} O(T^{-12/5}) = o(1)$.

Similarly, we can show that $%
E(d_{22}(h)^2) = O((Th)^{-3})$, $E(CV_{11}(h)^2) = O(h^2/T^2)$, and $%
E(CV_{12}(h)^2) = O(h^2/T^2)$ which are all equivalent to the order of $%
T^{-12/5}$. For $CV_2$, we have $E(CV_2(h)^2) = O(h^3/T^2) = O(T^{-13/5})$. Hence, $P_2^i = o(1)$ where $i$ is $d_{21}$, $CV_{11}$, $CV_{12}$, $d_{22}$, or $CV_2$.

For $d_{21}$ and $d_{22}$, notice that both $|\Psi_B(h^*) - \Psi_B(%
\tilde{h}_j)|$ and $|\Psi_V(h^*) - \Psi_V(\tilde{h}_j)|$ are $O(T^{-7/5})$.
Since, $\Psi_B(h^*) - \Psi_B(\tilde{h}_j) = C(h^{*4} - \tilde{h}_j^4) =
C(h^{*2} + \tilde{h}_j^2)(h^{*} + \tilde{h}_j)(h^{*} - \tilde{h}_j) = O(h^3 \times T^{-4/5}) = O(T^{-7/5})$, where $C$ is some constant. Focusing on $d_{21}(h)$, we have
\begin{align*}
&P_1^{d_{21}} = P(\max_{j=1,\ldots,L_T} \sup_{h \in \mathcal{H} \cap I_j}
|d_{21}(h) - d_{21}(\tilde{h}_j)| \geq T^{-4/5} \eta) \\
&\leq P(\max_{j=1,\ldots,L_T} |CV_{bias}(h^*) - CV_{bias}(\tilde{h}_j)| \geq
T^{-4/5} \eta/2) + P(\max_{j=1,\ldots,L_T} |\Psi_B(\tilde{h}_j) -
\Psi_B(h^*)| \geq T^{-4/5} \eta/2) \\
&\leq 4 L_T T^{-8/5} \eta^{-2} O(T^{-14/5}) = o(1),
\end{align*}
where we have used \thref{orders}(ii).
By repeating these steps with relevant parts of \thref{orders} for the other quantities, we can conclude that $P_1^i = o(1)$
where $i$ is $d_{21}$, $CV_{11}$, $CV_{12}$, $d_{22}$, or $CV_2$.

Lastly, note that $E(d_\sigma^2)=O(T^{-2})$ so that $P(|d_\sigma| \geq T^{-4/5}\xi)=o(1)$ for any $\xi>0$, which together with the results above, imply that $ P(\sup_{h\in\mathcal{H}}|d_2(h)|\geq T^{-4/5}\xi) = o(1),$ thereby completing the proof.
\end{proof}

\begin{lemma}
\thlabel{CVapprox}
$CV(h) = \hat{\tilde{CV}}(h) + o_p(T^{-6/5})$, where $\hat{\tilde{CV}}(h) = T^{-1} \sum_{s=1}^T \varepsilon_{s+1}^2 + T^{-1}
\sum_{s=1}^T \hat{D}_{s}^\top X_s X_s^\top \hat{D}_{s} - 2T^{-1}
\sum_{s=1}^T \varepsilon_{s+1} X_s^\top \hat{D}_{s} $, $\hat{D}_{s} = M^{-1}_s \ T^{-1} \sum_{\substack{ l=s-%
\floor*{Th}  \\ l \neq s}}^{s+\floor*{Th}} k_{ls}X_l \varepsilon_{l,s}^*$, $\varepsilon_{l,s}^* = X_l^\top Q_{l,s}^* +
\varepsilon_{l+1}$, and $Q_{l,s}^* = \beta(l/T) - \beta(s/T) - (\frac{l-s}{T})
\beta^{^{\prime}}(s/T)$.
\end{lemma}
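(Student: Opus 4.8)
The plan is to insert the model $y_{s+1}=X_s^{\top}\beta_s+\varepsilon_{s+1}$ (with $\beta_s=\beta(s/T)$) into $CV(h)=T^{-1}\sum_{s=1}^{T}(y_{s+1}-X_s^{\top}\hat{\beta}_s)^2$, so that the $s$-th forecast error is $\varepsilon_{s+1}-X_s^{\top}(\hat{\beta}_s-\beta_s)$. Expanding the square gives
\begin{equation*}
CV(h)=T^{-1}\sum_{s=1}^{T}\varepsilon_{s+1}^2+T^{-1}\sum_{s=1}^{T}(\hat{\beta}_s-\beta_s)^{\top}X_sX_s^{\top}(\hat{\beta}_s-\beta_s)-2T^{-1}\sum_{s=1}^{T}\varepsilon_{s+1}X_s^{\top}(\hat{\beta}_s-\beta_s),
\end{equation*}
which is term-for-term the structure of $\hat{\tilde{CV}}(h)$, but with $\hat{\beta}_s-\beta_s$ in the slot occupied by $\hat{D}_s$. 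The whole task therefore reduces to showing that replacing $\hat{\beta}_s-\beta_s$ by $\hat{D}_s$ in the quadratic and in the cross term perturbs $CV(h)$ by the claimed order.

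Next I would produce an exact stochastic expansion of $\hat{\beta}_s-\beta_s$ isolating $\hat{D}_s$. Subtracting the first-order Taylor polynomial of $\beta(\cdot)$ at $s/T$ gives the identity $y_{l+1}=X_l^{\top}\beta_s+(\tfrac{l-s}{T})X_l^{\top}\beta_s'+\varepsilon_{l,s}^{*}$, with $\varepsilon_{l,s}^{*}=X_l^{\top}Q_{l,s}^{*}+\varepsilon_{l+1}$ exactly as in the statement; feeding it into the local linear system \eqref{matsol} yields $\hat{\gamma}_s-\gamma_s=S^{-1}(s/T)\,T^{-1}\sum_{l}k_{ls}q_{ls}\varepsilon_{l,s}^{*}$ with $\gamma_s=(\beta_s^{\top},\beta_s'^{\top})^{\top}$ and $l$ ranging over the local window excluding $s$. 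Writing $U_{j,s}=T^{-1}\sum_{l}k_{ls}(\tfrac{l-s}{T})^{j}X_l\varepsilon_{l,s}^{*}$ and reading off the first $d+1$ coordinates, $\hat{\beta}_s-\beta_s=[S^{-1}(s/T)]_{11}U_{0,s}+[S^{-1}(s/T)]_{12}U_{1,s}$. By \thref{lemma1}, $[S^{-1}(s/T)]_{11}\rightarrow^{p}M_s^{-1}$, so $\hat{D}_s=M_s^{-1}U_{0,s}$ is the leading piece and I would set $\Delta_s=\hat{\beta}_s-\beta_s-\hat{D}_s$, comprising the replacement error $([S^{-1}(s/T)]_{11}-M_s^{-1})U_{0,s}=O_p(T^{-4/5})$ and the slope-block contribution $[S^{-1}(s/T)]_{12}U_{1,s}$, whose deterministic part is pushed to $O(h^4)$ by the symmetry of $k$ (A.4) but whose stochastic part is of order $O_p(T^{-3/5})$; hence $\Delta_s=O_p(T^{-3/5})$.

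Substituting $\hat{\beta}_s-\beta_s=\hat{D}_s+\Delta_s$ then gives
\begin{equation*}
CV(h)-\hat{\tilde{CV}}(h)=T^{-1}\sum_{s}\Delta_s^{\top}X_sX_s^{\top}\Delta_s+2T^{-1}\sum_{s}\hat{D}_s^{\top}X_sX_s^{\top}\Delta_s-2T^{-1}\sum_{s}\varepsilon_{s+1}X_s^{\top}\Delta_s,
\end{equation*}
and I would control the right-hand side in $L^2$, aiming at $E[(CV(h)-\hat{\tilde{CV}}(h))^2]=O(T^{-12/5})$, i.e.\ the $O_p(T^{-6/5})$ order the union-bound step of \thref{uniformCV} actually consumes. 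The first (quadratic) average is a sum of nonnegative terms of mean of order $E\|\Delta_s\|^2\sim h/T$, so it already sits at the target scale and sets the rate; the substance is to show the two cross terms do not exceed it. For this I would exploit three structural facts: the m.d.s.\ assumption A.3 together with the leave-one-out design (which keeps $\varepsilon_{s+1}$ out of $\hat{\beta}_s$), making the dominant $\varepsilon$-driven parts of the cross terms mean zero; the kernel symmetry A.4, which suppresses the leading bias carried by the $Q_{l,s}^{*}$ factors; and the $\beta$-mixing A.1 with the twelfth-moment bound A.5, which let me expand each squared average into its multi-index sum and retain only near-diagonal index configurations, each contributing a single kernel-localization factor of $h$.

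The main obstacle is precisely the $L^2$ control of the two cross terms. A crude bound using the pointwise orders $\hat{D}_s=O_p(T^{-2/5})$ and $\Delta_s=O_p(T^{-3/5})$ renders each summand $O_p(T^{-1})$, and even a Cauchy--Schwarz treatment of the within-window dependence (windows of width $O(Th)$) leaves a residual of order $T^{-2}h=T^{-11/5}$, a factor $T^{1/5}$ above the target; the missing smallness must come from genuine cancellation rather than from size. Concretely, one expands $2T^{-1}\sum_s\hat{D}_s^{\top}X_sX_s^{\top}\Delta_s$ and $2T^{-1}\sum_s\varepsilon_{s+1}X_s^{\top}\Delta_s$ into triple and quadruple kernel-weighted sums of conditionally heteroscedastic, mixing-dependent products, uses A.3 to annihilate the off-diagonal expectations, and bounds the surviving blocks through the $\beta$-mixing coefficients and the decay of $k(\cdot)$ to reach $O(h^{7}/T)=O(T^{-12/5})$. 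This is the same delicate covariance bookkeeping---sorting index configurations by their largest adjacent gap---that appears in the $Var_{bias}$ estimate in the proof of \thref{uniformCV}, and it is where essentially all of the effort lies.
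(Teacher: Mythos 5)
Your decomposition and linearization are exactly the route the paper takes: $\hat{D}_s$ is by construction the leading term of $\hat{\beta}_s-\beta_s$ (the $(1,1)$-block of $S^{-1}(s/T)$ replaced by its limit $M_s^{-1}$ and the slope block dropped), and the paper likewise reduces the lemma to the moment bound $E[(CV(h)-\hat{\tilde{CV}}(h))^2]=O(T^{-12/5})$ recorded as \thref{orders}(i), deferring the covariance bookkeeping to the online appendix. Your order computations for $\Delta_s$ check out: $[S^{-1}]_{12}=O_p(1)$ because window symmetry and Lipschitz continuity of $M$ make $S_1=O_p(h^2)$ while $S_2\asymp h^2$, and the stochastic part of $U_{1,s}$ has standard deviation $\sqrt{h/T}=T^{-3/5}$. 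You also correctly identify that the two cross terms cannot be handled by size alone and require the m.d.s./mixing cancellation; that is indeed where the work lies.

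One caveat. As you set it up, the argument delivers $CV(h)-\hat{\tilde{CV}}(h)=O_p(T^{-6/5})$, not the $o_p(T^{-6/5})$ the lemma literally asserts: the quadratic remainder $T^{-1}\sum_s\Delta_s^\top X_sX_s^\top\Delta_s$ is a sum of nonnegative terms whose mean is of exact order $E\|\Delta_s\|^2\asymp h/T=T^{-6/5}$, so no cancellation in the cross terms can push the difference below that scale. This is arguably a defect of the lemma's statement rather than of your proof --- what the proof of \thref{CVunif} actually consumes is the $L^2$ rate $O(T^{-12/5})$, which after the union bound over the $L_T=O(T^{3/5})$ grid points yields the needed $o_p(T^{-4/5})$ uniformity --- but you should say explicitly that you are establishing the $O_p$ version. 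A second, harmless imprecision: under the paper's assumption that $M(\tau)$ is merely Lipschitz, the bias of $S_0-M_s$ is only $O(h)$, so $([S^{-1}]_{11}-M_s^{-1})U_{0,s}$ is $O_p(T^{-3/5})$ rather than your $O_p(T^{-4/5})$; your final bound $\Delta_s=O_p(T^{-3/5})$ absorbs this in any case.
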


\begin{proof}
See online appendix.
\end{proof}

\begin{lemma}
\thlabel{lemmaCV}
$E(\hat{\tilde{CV}}(h))= \int_{0}^{1}\sigma^2 (\tau )d\tau + IMSCFE(h)_L + o(T^{-4/5}),$ where $IMSCFE(h)_L$ is defined in \eqref{IMSCFE}.
\end{lemma}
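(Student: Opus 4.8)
The plan is to take expectations term-by-term in the decomposition of $\hat{\tilde{CV}}(h)$ already set up in the proof of \thref{uniformCV}. Writing $\hat{D}_s=\hat{D}_s^{bias}+\hat{D}_s^{var}$ with $\hat{D}_s^{bias}=M_s^{-1}T^{-1}\sum_{l\neq s}k_{ls}X_lX_l^\top Q_{l,s}^*$ and $\hat{D}_s^{var}=M_s^{-1}T^{-1}\sum_{l\neq s}k_{ls}X_l\varepsilon_{l+1}$, and expanding the quadratic form $\hat{D}_s^\top X_sX_s^\top\hat{D}_s$, one obtains
\[
\hat{\tilde{CV}}(h)=T^{-1}\sum_{s=1}^T\varepsilon_{s+1}^2+CV_{bias}(h)+CV_{variance}(h)+CV_{11}(h)+CV_{12}(h)-CV_2(h).
\]
It therefore suffices to show that $E[T^{-1}\sum_s\varepsilon_{s+1}^2]=\int_0^1\sigma^2(\tau)d\tau+o(T^{-4/5})$, that $E[CV_{bias}(h)]=\Psi_B(h)+o(T^{-4/5})$ and $E[CV_{variance}(h)]=\Psi_V(h)+o(T^{-4/5})$ (recalling that $\Psi_B+\Psi_V=IMSCFE(h)_L$ after using $\Tr[MV_\beta]=\Tr[M^{-1}V]$), and that each of the three cross terms has expectation $o(T^{-4/5})$.

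For the first term, $E[\varepsilon_{s+1}^2]=\sigma^2(s/T)$ by A.2(iii), so $E[T^{-1}\sum_s\varepsilon_{s+1}^2]$ is a Riemann sum that, by Lipschitz continuity of $\sigma^2$, equals $\int_0^1\sigma^2 d\tau+O(T^{-1})$. For $CV_{bias}$ I would substitute the Taylor expansion $Q_{l,s}^*=\tfrac12(\tfrac{l-s}{T})^2\beta''(s/T)+o(h^2)$, replace $X_lX_l^\top$ by its mean $M(l/T)$ and then $M(l/T)$ by $M(s/T)$ up to Lipschitz errors, and evaluate the kernel moment $T^{-1}\sum_{l\neq s}k_{ls}(\tfrac{l-s}{T})^2=h^2\mu_2+o(h^2)$, giving $\hat{D}_s^{bias}\approx\tfrac{h^2\mu_2}{2}\beta''(s/T)$ and hence $E[CV_{bias}]=\tfrac{h^4\mu_2^2}{4}T^{-1}\sum_s\Tr[M_s\beta''_s\beta_s''^\top]+o(T^{-4/5})=\Psi_B(h)+o(T^{-4/5})$. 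For $CV_{variance}$, the m.d.s. property (A.3) makes the diagonal term dominant in $E[(\hat{D}_s^{var})^\top X_sX_s^\top\hat{D}_s^{var}]$; the leading contribution factorizes, up to mixing errors, as $\Tr[M_s\,\mathrm{Var}(\hat{D}_s^{var})]$, where $\mathrm{Var}(\hat{D}_s^{var})$ carries the factor $2\nu_0$ exactly as in \thref{lemma2}, this factor of two being the variance reduction from the data reflection. This yields $E[CV_{variance}]=\tfrac{2\nu_0}{Th}T^{-1}\sum_s\Tr[M_s^{-1}V_s]+o(T^{-4/5})=\Psi_V(h)+o(T^{-4/5})$.

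The three cross terms are controlled through the martingale structure. The key observation is that $\hat{D}_s$ is built only from $\mathcal{I}_s$-measurable quantities — the left-window observations and their reflections, all dated at or before $s$, together with the deterministic $Q_{l,s}^*$ and $M_s$ — so that $E[\varepsilon_{s+1}X_s^\top\hat{D}_s]=E[X_s^\top\hat{D}_s\,E(\varepsilon_{s+1}\mid\mathcal{I}_s)]=0$, giving $E[CV_2(h)]=0$. For $CV_{11}$ and $CV_{12}$, each summand contains exactly one innovation paired with a deterministic bias factor $Q^*_{n,s}=O(h^2)$; applying the m.d.s. property where the innovation is conditionally mean-zero, a $\beta$-mixing covariance bound (A.1) for the remaining cross-time dependence, and the $T^{-3}\sum^3$ normalization, I would conclude $E[CV_{11}(h)],E[CV_{12}(h)]=o(T^{-4/5})$.

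Collecting these pieces gives the claim. The main obstacle is the treatment of the temporal dependence among $X_m,X_s,X_n$ (and between innovations and future regressors) when isolating the leading terms of $E[CV_{bias}]$ and $E[CV_{variance}]$: one cannot factor the triple products as if independent, so the leading expressions must be extracted by approximating the covariances of the dependent blocks via the mixing coefficients of A.1 and the moment bounds of A.2(i), and then verifying that every replacement error — the Lipschitz substitution of $M$ and $V$, the Taylor remainder in $Q^*$, the Riemann-sum error, and the off-diagonal mixing contributions — is genuinely $o(T^{-4/5})$ rather than merely $O(T^{-4/5})$. Pinning down the exact constant $2\nu_0$ in the variance term, i.e. correctly accounting for the reflected observations so that it matches \thref{lemma2} and the $IMSCFE(h)_L$ target, is the most delicate bookkeeping.
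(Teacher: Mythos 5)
Your decomposition and term-by-term expectation computation is exactly the route the paper takes: the quantities $CV_{bias}$, $CV_{variance}$, $CV_{11}$, $CV_{12}$ and $CV_{2}$ you isolate are precisely those defined in the proof of \thref{uniformCV} (whose expectations the paper evaluates in the online appendix), and your leading terms $\Psi_B(h)$ and $\Psi_V(h)$, the exact vanishing of $E[CV_2]$ via the $\mathcal{I}_s$-measurability of $\hat{D}_s$ under the reflection, and the provenance of the factor $2\nu_0$ all match. The one point to be careful about is that for $CV_{11}$, $CV_{12}$ and the off-diagonal part of $CV_{variance}$ the m.d.s. conditioning alone does not annihilate the summands (since $X_s$, and possibly $X_n$, postdate the innovation $\varepsilon_{m+1}$), so the decomposition $X_sX_s^\top = M_s + (X_sX_s^\top - M_s)$ combined with the $\beta$-mixing covariance bounds you already invoke is genuinely required there.
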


\begin{proof}
See online appendix.
\end{proof}

\begin{lemma}
\thlabel{orders}
The following are bounded as (i) $E[(CV(h) - \hat{\tilde{CV}}(h))^2] = O(T^{-12/5})$, (ii) $E[(CV_{bias}(\tilde{h}_j)-CV_{bias}(h^*))^2] = O(T^{-14/5})$, (iii) $E[(CV_{variance}(\tilde{h}_j)-CV_{variance}(h^*))^2] = O(T^{-14/5})$, (iv) $E[(CV_{11}(\tilde{h}_j)-CV_{11}(h^*))^2] = O(T^{-14/5})$, (v) $E[(CV_{12}(\tilde{h}_j)-CV_{12}(h^*))^2] = O(T^{-14/5})$, and (vi) $E[(CV_{2}(\tilde{h}_j)-CV_{2}(h^*))^2]\\= O(T^{-3})$, whose definitions are provided in the proof of \thref{uniformCV}.
\end{lemma}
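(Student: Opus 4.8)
The plan is to treat all six bounds through a single device. Since $\floor*{T\tilde{h}_j} = \floor*{Th^*}$ by construction, the summation ranges defining each quantity at $\tilde{h}_j$ and at $h^*$ coincide, and the deterministic factors $M_s^{-1}$ and $Q_{\cdot,s}^*$ do not depend on the bandwidth, so the entire dependence on $h$ is carried by the kernel weights $k_{ms}(h) = h^{-1} k\!\left(\frac{m-s}{Th}\right)$. A direct computation gives $\partial_h k_{ms}(h) = -h^{-2}\big[k(u) + u\,k'(u)\big]$ with $u = (m-s)/(Th)$, so on the interior of the support $|\partial_h k_{ms}(h)| \leq C h^{-1} |k_{ms}(h)|$; differentiating a kernel factor in $h$ therefore costs a relative factor of order $h^{-1}$. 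Applying the mean value theorem in $h$ to each quantity $G \in \{CV_{bias}, CV_{variance}, CV_{11}, CV_{12}, CV_2\}$ yields $G(\tilde{h}_j) - G(h^*) = (\tilde{h}_j - h^*)\,G'(\bar h)$ for an intermediate $\bar h$, where $G'$ replaces one kernel product in each summand by its $h$-derivative. Since $|\tilde{h}_j - h^*| \leq l_T = O(T^{-4/5})$, bounding the second moment of the difference reduces to bounding $E[(G'(\bar h))^2]$ and multiplying by $l_T^2$.

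For parts (ii)--(v) I would bound $E[(G(\tilde{h}_j)-G(h^*))^2]$ by the sum of its squared mean and its variance. The squared mean is controlled by the smooth leading behaviour established in the proof of \thref{lemmaCV}: since $E[CV_{bias}(h)] = \Psi_B(h) + o(\cdot)$ with $\Psi_B(h) = O(h^4)$ and $E[CV_{variance}(h)] = \Psi_V(h) + o(\cdot)$ with $\Psi_V(h) = O((Th)^{-1})$, their $h$-derivatives are $O(h^3) = O(T^{-3/5})$ and $O((Th^2)^{-1}) = O(T^{-3/5})$, so multiplying by $l_T$ gives a mean of order $O(T^{-7/5})$ in each case, whence a squared mean of $O(T^{-14/5})$; the cross terms $CV_{11}, CV_{12}$, carrying one mean-zero $\varepsilon$ factor, are of the same order or smaller. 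The variance of each difference is handled by the same $\beta$-mixing argument used for $Var_{bias}^{*}$ in the proof of \thref{uniformCV}: I would order the six summation indices, isolate the largest adjacent gap, bound the corresponding covariance by $\Theta^{1/(1+\delta)}\beta^*(\cdot)^{\delta/(1+\delta)}$, and sum, the only change being the extra $h^{-2}$ from the two kernel derivatives, which keeps the variance strictly below $O(T^{-14/5})$. The moment factors $\Theta$ are finite by Assumption A.5, whose twelve moments are exactly what is needed to bound the up-to-twelve $X$-factors appearing in the second moment of $CV_{bias}$.

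For part (vi), the single martingale-difference factor $\varepsilon_{s+1}$ forces $E[CV_2(h)] = 0$ up to negligible future-dependence terms, so the second moment of the difference equals its variance $l_T^2\,E[(CV_2'(\bar h))^2]$. Exploiting the orthogonality of the $\varepsilon$'s, so that only paired indices survive as in the treatment of $CV_2$ in \thref{uniformCV}, I would use $E[(CV_2(h))^2] = O(h^3/T^2)$; the relative $h^{-2}$ from differentiating the kernel and the prefactor $l_T^2 = O(T^{-8/5})$ then combine to give a bound of order $O(T^{-3})$, as claimed. Part (i) is structurally different, being an approximation error at a single bandwidth rather than a bandwidth difference; here I would follow the decomposition behind \thref{CVapprox}, which replaces the random local Gram inverse $S^{-1}(s/T)$ by the deterministic $M_s^{-1}$ and the exact residual $Q_{l,s}$ by its linearization $Q_{l,s}^*$. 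Using \thref{lemma1} to control $\|S^{-1}-\Sigma^{-1}\|$ together with the moment and mixing conditions, I would show the remaining terms are $O_p(T^{-6/5})$ in $L^2$, i.e. have second moment $O(T^{-12/5})$.

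The main obstacle will be the variance computations for (ii)--(v) and (vi): keeping track of which of the six time indices carries the largest adjacent separation, applying the mixing covariance inequality in each ordering, and verifying that after inserting the extra $h^{-2}$ from the kernel derivatives every term still sums to an order no larger than the target. A secondary point requiring care is the non-differentiability of the Epanechnikov kernel at $u=\pm 1$; since this occurs at finitely many indices per window and the jump is bounded, I would replace the pointwise mean value theorem there by a direct finite-difference bound of the same order $O(h^{-2} l_T)$, so the argument is unaffected.
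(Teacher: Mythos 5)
Your strategy is essentially the paper's: reduce each difference to the $h$-smoothness of the kernel weights on an interval where $\floor*{T\tilde{h}_j}=\floor*{Th^*}$, split $E[(\Delta G)^2]$ into squared mean plus variance, let the deterministic parts $\Delta\Psi_B,\Delta\Psi_V=O(T^{-7/5})$ drive the $O(T^{-14/5})$ rates, and control the variances by the same ordered-index $\beta$-mixing covariance bound used for $Var_{bias}^{*}$; part (i) is indeed just the $L^2$ strengthening of \thref{CVapprox}, for which A.5 supplies the twelve moments needed for the twelve $X$-factors in $E(Z_{smn}Z_{abc})$. Two points need tightening. First, the mean value theorem applied pathwise to the random function $h\mapsto G(h,\omega)$ produces a random intermediate point $\bar h(\omega)$, so you cannot simply write $E[(\Delta G)^2]\le l_T^2\,E[(G'(\bar h))^2]$ at a fixed $\bar h$; either use the integral form $\Delta G=\int_{h^*}^{\tilde h_j}G'(v)\,dv$ with Cauchy--Schwarz, giving $E[(\Delta G)^2]\le l_T^2\sup_{v}E[(G'(v))^2]$, or bound termwise via the Lipschitz estimate $|k_{ms}(\tilde h_j)-k_{ms}(h^*)|\le Ch^{-2}|\tilde h_j-h^*|$ and rerun the second-moment computation directly. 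Second, your relative bound $|\partial_h k_{ms}(h)|\le Ch^{-1}|k_{ms}(h)|$ fails near the edge of the support for the Epanechnikov kernel, where $k(u)\to 0$ but $u\,k'(u)$ does not; what you actually need, and what holds, is the absolute bound $|\partial_h k_{ms}(h)|\le Ch^{-2}$, which is the same order as $h^{-1}$ times the typical magnitude of $k_{ms}$ and therefore yields all the claimed rates (including $O(T^{-19/5})\subset O(T^{-3})$ for part (vi)). With those repairs the argument goes through.
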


\begin{proof}
See online appendix.
\end{proof}

Next, we shall prove \thref{fsprop},  \thref{ssoracle}, and \thref{oracleprop} from Section \protect\ref{high.asymp}. To recall, we have $d \ll T$ relevant forecasts out of $p_T+1$ candidates. Due to the local linear structure of our estimation, we also impose sparsity on the gradient regressors. Specifically, we assume that $d_1 \leq d$ gradient coefficients are non-zero, while the rest are. Before proving the results proper, we state a lemma which is used in the following proofs and is the analogue of lemma C.1 in \cite{li2015model}.

\begin{lemma}
\thlabel{lemmac1} Define 
\begin{align*}
Z_{sj}(\tau, l) &\equiv \frac{1}{h} k\bigg(\frac{s/T-\tau}{h}\bigg) (y_{s+1}
- \alpha_0(s/T)^\top X_s) \bigg[-\bigg(\frac{s/T-\tau}{h}\bigg)^l X_{sj} %
\bigg]
\end{align*}
for $l=0,1$ and $X_{sj}$ is the $j^{th}$ element of $X_s$. Suppose the
assumptions of \thref{fsprop} are true, then we have 
\begin{equation*}
\max_{1 \leq j \leq p_T+1} \sup_{\tau \in [0,1]} \bigg|T^{-1} \sumTau Z_{sj}(\tau,l) %
\bigg| = O_p \bigg(\sqrt{\frac{\log h^{-1}}{Th}}\bigg),
\end{equation*}
for $l = 0,1$.
\end{lemma}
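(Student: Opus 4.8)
The plan is to reduce the statement to a uniform maximal deviation bound for a kernel-weighted martingale and then run the standard discretize--truncate--exponential-inequality argument. Evaluated at the true coefficient function ($\alpha_0(\cdot)=\beta(\cdot)$) we have $y_{s+1}-\alpha_0(s/T)^\top X_s=\varepsilon_{s+1}$, so that, writing $g_j(\tau)\equiv T^{-1}\sumTau Z_{sj}(\tau,l)$,
\begin{equation*}
g_j(\tau)=-\frac{1}{Th}\sumTau k\!\left(\frac{s/T-\tau}{h}\right)\left(\frac{s/T-\tau}{h}\right)^{l}\varepsilon_{s+1}X_{sj}.
\end{equation*}
Under A.3, $X_{sj}$ is $\mathcal{I}_s$-measurable and $E(\varepsilon_{s+1}\mid\mathcal{I}_s)=0$, so $\{\varepsilon_{s+1}X_{sj}\}$ is a martingale difference sequence; hence, for fixed $\tau$ and $j$, $g_j(\tau)$ is a centered martingale whose predictable quadratic variation has mean of order $(Th)^{-1}$, matching the squared pointwise rate. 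What remains is to upgrade this pointwise control to a bound that is uniform in $\tau\in[0,1]$ and maximal over $j\le p_T+1$.

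First I would discretize $[0,1]$ by a grid $\{\tau_k\}_{k=1}^{N_T}$ with $N_T$ polynomial in $T$, and split $\sup_\tau|g_j(\tau)|\le \max_k|g_j(\tau_k)|+\max_k\sup_{|\tau-\tau_k|\le N_T^{-1}}|g_j(\tau)-g_j(\tau_k)|$. The oscillation term is handled by the regularity of $u\mapsto k(u)u^l$ on $[-1,1]$ (Lipschitz or of bounded variation, as holds for the kernels admitted by A.4) together with a crude bound on $\max_{s,j}|\varepsilon_{s+1}X_{sj}|$ from HD.1(ii); choosing $N_T$ large enough makes it $o_p(\sqrt{\log h^{-1}/(Th)})$, reducing the problem to $\max_j\max_k|g_j(\tau_k)|$.

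Second, for each fixed pair $(\tau_k,j)$ I would truncate $\varepsilon_{s+1}X_{sj}$ at a level $b_T$ chosen as a suitable power of $T$. The discarded tail is controlled by the moment condition HD.1(ii), $E|\varepsilon_{s+1}X_{sj}|^{\psi}<\infty$ with $\psi$ large, via Markov's inequality and a union over the polynomially many indices. The re-centered, truncated summands remain a martingale difference array with first-order variance still of order $(Th)^{-1}$ (whose concentration around its mean I would establish using the $\beta$-mixing condition A.1) and with increments bounded by $O(b_T/(Th))$. A Freedman/Bernstein exponential inequality then gives, at threshold $x=M\sqrt{\log h^{-1}/(Th)}$, a tail of the form $\exp\{-cM^2\log h^{-1}\}=(h^{-1})^{-cM^2}$. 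Since $\log N_T$, $\log(p_T+1)$ and $\log h^{-1}$ are all of order $\log T$ under HD.3(i) ($p_T=c_1T^{\delta_1}$, $h=c_2T^{-\delta_2}$), a union bound over the $N_T(p_T+1)$ pairs is driven to zero by taking $M$ large; this is precisely the source of the $\sqrt{\log h^{-1}}$ inflation over the pointwise rate $(Th)^{-1/2}$.

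The main obstacle is the balancing act in this last step: $b_T$ must be small enough that the increment bound $b_T/(Th)$ does not swamp the variance $(Th)^{-1}$ in the Bernstein denominator---otherwise the exponent degrades from $\propto M^2\log h^{-1}$ to $\propto M\sqrt{\log h^{-1}}$ and the union bound fails---yet large enough that the truncated tail is negligible against the polynomial union count. Reconciling these opposing requirements is exactly why the lemma needs $\psi$ as large as stipulated in HD.1(ii), and it is the step I expect to demand the most care. A secondary, purely bookkeeping, point is checking that the $\tau$-dependence of the summation window and the reflection do not disturb the martingale and variance computations, which follows because $k$ vanishes outside $[-1,1]$ so the window edges contribute negligibly.
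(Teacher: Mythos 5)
Your proposal is correct and follows essentially the same route as the paper's proof (given in the online appendix and modeled on Lemma C.1 of Li et al., 2015): reduce the sum to a kernel-weighted martingale difference array via $y_{s+1}-\alpha_0(s/T)^\top X_s=\varepsilon_{s+1}$, discretize $\tau$ over a polynomial grid, truncate using the high moment condition HD.1(ii), apply a Bernstein/Freedman-type exponential inequality, and close with a union bound over the grid and over $j\le p_T+1$ using $\log(p_T+1)\asymp\log h^{-1}\asymp\log T$. You also correctly locate the delicate point, namely the choice of truncation level that reconciles the increment bound in the exponential inequality with the tail bound, which is exactly where the size of $\psi$ in HD.1(ii) is consumed.
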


\begin{proof}
See online appendix.
\end{proof}

\noindent \textbf{Proof of \thref{fsprop}}

\noindent To prove our result, we want to show that 
\begin{equation}
\max_{t} \| \tilde{\alpha}_{0t} - \alpha_{0t}\| + \max_{t} h\| \tilde{\alpha}_{1t} - \alpha_{1t} \| = O_p(\sqrt{d}\lambda_1),
\label{fsresult}
\end{equation}
where $d$ is the number of relevant forecasts. Let $\tilde{\alpha}_{it}$ denote the initial estimates from the first-stage
in \eqref{prelim} at time $t$ for $i=0,1$, and $\tilde{\alpha}_{it}^j$ be
the $j^{th}$ element or coefficient. Define the following error terms: $%
d_{0t}^j=\tilde{\alpha}_{0t}^j-\alpha_{0t}^j$ and $d_{1t}^j=h\{\tilde{\alpha}%
_{1t}^j-\alpha_{1t}^j\}$. We first show that uniformly in $t=1,\ldots,T$, we
have that 
\begin{equation}
\max\bigg\{\sum_{j=d+1}^{p_T+1} |d_{0t}^j| ,
\sum_{j=d_1+1}^{p_T+1}|d_{1t}^j|\bigg\} \leq (1+\delta) \bigg\{ \sum_{j=1}^d
|d_{0t}^j|+ \sum_{j=1}^{d_1} |d_{1t}^j| \bigg\}  \label{strictconcaverest}
\end{equation}
for some $\delta > 0$. \cite{bickel2009simultaneous} show that such a cone constraint for the Lasso error is essential for consistency.

Let $\mathcal{L}_t^{fs}(\alpha_{0t},\alpha_{1t})$ represent the first term
in \eqref{prelim}, and let the entire expression be $Q^{fs}(\alpha_{0t},%
\alpha_{1t})$. By construction, we have $Q^{fs}_t(\tilde{\alpha}_{0t},\tilde{%
\alpha}_{1t}) \leq Q^{fs}_t(\alpha_{0t},\alpha_{1t})$ and thus, 
\begin{equation}
\mathcal{L}_t^{fs}(\tilde{\alpha}_{0t},\tilde{\alpha}_{1t})-\mathcal{L}%
_t^{fs}(\alpha_{0t},\alpha_{1t}) \leq \lambda_1 \bigg\{ \sum_{j=1}^{p_T+1}
|\alpha_{0t}^j| - \sum_{j=1}^{p_T+1} |\tilde{\alpha}_{0t}^j| \bigg\} +
\lambda_2 \bigg\{ \sum_{j=1}^{p_T+1} |h\alpha_{1t}^j| - \sum_{j=1}^{p_T+1} |%
\tilde{h\alpha}_{1t}^j| \bigg\} .  \label{b0}
\end{equation}
Since $\mathcal{L}_t^{fs}$ is a convex function, we have that 
\begin{equation}
D_t^\top \dot{\mathcal{L}}_t^{fs}(\alpha_{0t}, \alpha_{1t}) \leq \mathcal{L}%
_t^{fs}(\tilde{\alpha}_{0t},\tilde{\alpha}_{1t})-\mathcal{L}%
_t^{fs}(\alpha_{0t},\alpha_{1t})  \label{b1}
\end{equation}
where $D_t =
(d_{0t}^1,\ldots,d_{0t}^{p_T+1},d_{1t}^1,\ldots,d_{1t}^{p_T+1})^\top$ and 
\begin{equation*}
\underbrace{\dot{\mathcal{L}}_t^{fs}(\alpha_{0t}, \alpha_{1t})}_{2(p_T+1)
\times 1} = 2T^{-1} \sum_{\substack{ s=t-\floor*{Th}  \\ s \neq t}}^{t+\floor%
*{Th}} k_{st}\bigg[y_{s+1} - \alpha_{0}(t/T)^\top X_s - \alpha_{1}(t/T)^\top %
\bigg(\frac{s-t}{T}\bigg)X_s\bigg] \times%
\begin{bmatrix}
-X_s \\ 
-\bigg(\frac{s-t}{Th}\bigg) X_s%
\end{bmatrix}%
.
\end{equation*}
From \thref{lemmac1}, we get 
\begin{equation*}
\max_{1\leq j \leq p_T+1} \sup_{t/T} \bigg|T^{-1} \sum_{\substack{ s=t-\floor%
*{Th}  \\ s \neq t}}^{t+\floor*{Th}} k_{st} \bigg[y_{s+1} -
\alpha_{0}(s/T)^\top X_s\bigg] \bigg\{-\bigg(\frac{s-t}{Th}\bigg)^l X_{sj} %
\bigg\} \bigg| = O_p\bigg(\sqrt{\frac{\log{h^{-1}}}{Th}}\bigg)
\end{equation*}
for $l=0,1$. Denote $\dot{\mathcal{L}}_{t,j}^{fs}(\alpha_{0t}, \alpha_{1t})$
to be the $j^{th}$ component in $\dot{\mathcal{L}}_{t}^{fs}(\alpha_{0t},
\alpha_{1t})$. Then note that by the triangle inequality and the fact that $%
x \geq -|x|$, we can write 
\begin{equation}
D_t^\top \dot{\mathcal{L}}_t^{fs}(\alpha_{0t}, \alpha_{1t}) \geq -
\sum_{j=1}^{p_T+1} (|d_{0t}^j|+|d_{1t}^j|) \max_{j} |\dot{\mathcal{L}}%
_{t,j}^{fs}(\alpha_{0t}, \alpha_{1t})|.  \label{b2}
\end{equation}
Furthermore, 
\begin{equation}
\max_j |\dot{\mathcal{L}}_{t,j}^{fs}(\alpha_{0t}, \alpha_{1t})| = O_p\bigg(%
\sqrt{\frac{\log{h^{-1}}}{Th}} + d h^2\bigg) = o_p(\lambda_1),  \label{b3}
\end{equation}
where the $h^2$ comes from the bias in the derivative term, and the last
equality is from assumption HD.3(i).

Next, we focus on the $d_{0t}$ term $\sum_{j=1}^{p_T+1} |d_{0t}^j| = \sum_{j=1}^{d} |d_{0t}^j|+
\sum_{j=d+1}^{p_T+1} |d_{0t}^j|  \label{b4}
$,
and note that 
\begin{align}
\lambda_1 \bigg\{ \sum_{j=1}^{p_T+1} |\alpha_{0t}^j| - \sum_{j=1}^{p_T+1} |%
\tilde{\alpha}_{0t}^j|\bigg\} &= \lambda_1 \bigg\{ \sum_{j=1}^{d}
|\alpha_{0t}^j| - \sum_{j=1}^{d} |\tilde{\alpha}_{0t}^j| -
\sum_{j=d+1}^{p_T+1} |\tilde{\alpha}_{0t}^j| \bigg\}  \notag \\
&\leq \lambda_1 \bigg\{ \sum_{j=1}^d |d_{0t}^j| - \sum_{j=d+1}^{p_T+1}
|d_{0t}^j| \bigg\}.  \label{b5}
\end{align}
By combining \eqref{b0} to \eqref{b5}, we get 
\begin{equation*}
(1-o_p(1))\sum_{j=d+1}^{p_T+1} |d_{0t}^j| \leq (1 + o_p(1))\sum_{j=1}^d
|d_{0t}^j| \leq (1+\delta) \sum_{j=1}^d |d_{0t}^j|, \text{ for some } \delta > 0.
\end{equation*}
We can repeat the same procedure for $d_{1t}$, and subsequently arrive at %
\eqref{strictconcaverest}.

Next, we want to determine the existence of a local minimizer within the neighborhood of $(\alpha_{0t}, \alpha_{1t})$ such that this cone constraint is satisfied. Define $v_1=(v_{11},\ldots,v_{1p_T+1})^\top$, and $v_2=(v_{21},%
\ldots,v_{2p_T+1})^\top$, where $v_1$ and $v_2 \in \mathbb{R}^{p_T+1}$.
Define $S(C_0)=\{v=(v_1^\top,v_2^\top)^\top: \|v_1\|^2+\|v_2\|^2=C_0,
\sum_{m=1}^{p_T+1} (|v_{1m}|+|v_{2m}|) \leq
2(1+\delta)\sum_{m=1}^{d}(|v_{1m}|+|v_{2m}|) \}$ for some $\delta > 0$,
and $C_0 > 0$ is an arbitrary constant.

Let $\kappa_T=\sqrt{d}\lambda_1$. Recall that $d$ is allowed to increase with $T$, hence the dependence of $\kappa$ on $T$. Now, we show that there
exists a minimizer in the interior of $\{(\alpha_{0t}+\kappa_T v_1,
h\alpha_{1t}+\kappa_T v_2): v = (v_1^\top, v_2^\top)^\top \in S(C_0)\}$. To
do this, consider 
\begin{align}
&Q_t^{fs}(\alpha_{0t}+\kappa_T v_1, \alpha_{1t}+\kappa_T v_2/h) -
Q_t^{fs}(\alpha_{0t},\alpha_{1t})  \label{b6} \\
&= \mathcal{L}_t^{fs}(\alpha_{0t}+\kappa_T v_1, \alpha_{1t}+\kappa_T v_2/h)
- \mathcal{L}_t^{fs}(\alpha_{0t},\alpha_{1t})  \notag \\
&+ \lambda_1 (|\alpha_{0t}+\kappa_T v_1| - |\alpha_{0t}|) + \lambda_2
(|h\alpha_{1t}+\kappa_T v_2| - |h\alpha_{1t}|)  \notag \\
&\equiv I_1 + I_2 + I_3  \notag
\end{align}
The proof for $I_2$ and $I_3$ follows that of \cite{li2015model}, and we can
show that $I_2 = O_p(\kappa_T^2)\|v_1\| + \lambda_1
\sum_{j=d+1}^{p_T+1}|\kappa_T v_{1j}|$ and $I_3 = O_p(\kappa_T^2)\|v_2\| +
\lambda_2 \sum_{j=d_1+1}^{p_T+1} |\kappa_T v_{2j}|$. For $I_1$, we rewrite $\mathcal{L}_t^{fs}(\alpha_{0t},\alpha_{1t})$ in its matrix form: $
T^{-1}(Y_t - Q_t \gamma_t)^\top K_t (Y_t-Q_t\gamma_t)
$, 
where $Y_t$, $Q_t$ and $K_T$ are defined in Section \ref{notation}, and $%
\gamma_t = (\alpha_{0t}^\top,\alpha_{1t}^\top)^\top$ as before. Also, set $%
\tilde{\gamma}_t = ((\alpha_{0t}+\kappa_T v_1)^\top, (\alpha_{1t}+\kappa_T
v_2/h)^\top)^\top$, and $V=(\kappa_T v_1 ^\top, \kappa_T v_2/h ^\top)^\top
=\kappa_T H \cdot (v_1^\top, v_2^\top)^\top = \kappa_T H v$, with $H$
defined in assumption HD.2 (also recall that $Q^h_t = Q_t H$). Then, 
\begin{align}
I_1 &= T^{-1}[(Y_t - Q_t \gamma_t)^\top K_t (Y_t-Q_t\gamma_t) - (Y_t - Q_t 
\tilde{\gamma}_t)^\top K_t (Y_t-Q_t\tilde{\gamma}_t)]  \notag \\
&= T^{-1}[-2Z_t^\top K_t Q_t V + V^\top Q_t^\top K_t Q_t V]  \notag \\
&= T^{-1}[-2\kappa_T Z_t^\top K_t Q_t H v + \kappa_T^2 v^\top Q_t^{h\top}K_t
Q_t^h v]  \label{b7}
\end{align}
where we have arrived at the second last line by defining $Y_t - Q_t
\gamma_t \equiv Z_t$. Note that $-T^{-1}Z_t^\top K_t Q_t H$ is a $1 \times
2(p_T+1)$ vector and can be written as 
\begin{align*}
-T^{-1}Z_t^\top K_t Q_t H = \bigg[-T^{-1} \sum_{\substack{ s=t-\floor*{Th} 
\\ s \neq t}}^{t+\floor*{Th}} &k_{st}(y_{s+1}-q_{st}^\top\gamma_t)X_{s1},
\ldots, \\ &-T^{-1} \sum_{\substack{ s=t-\floor*{Th}  \\ s \neq t}}^{t+\floor*{Th%
}} k_{st}\bigg(\frac{s-t}{Th}\bigg)(y_{s+1}-q_{st}^\top
\gamma_t)X_{s(p_T+1)} \bigg]
\end{align*}
where $q_{st}$ is analogous to the notation defined in Section \ref{nonpar}, and $X_{sj}$ is the $j^{th}$ variable in $X_s$. Hence, we can show that the first term in %
\eqref{b7} is $o_p(\kappa_T^2)\|v\|$ uniformly in $t=1,\ldots,T$ using the
Cauchy-Schwarz inequality and \thref{lemmac1}. By assumption HD.2, we have
that $v^\top Q_t^{h\top} K_t Q_t^h v \geq \rho_1$, and thus the second term
is strictly positive. So we conclude that the leading term in \eqref{b6} is
positive in probability, and a local minimizer exists in the cone
constraint. By the convexity of the objective function, the local minimizer
is thus the global minimizer.

\qed 

\noindent \textbf{Proof of \thref{ssoracle}}

\noindent To facilitate the proof, we introduce the following penalized oracle problem. We define the \textit{biased} oracle estimator as the solution to
\begin{align}
\underset{\Upsilon \in \mathbb{R}^{T\times d_1}}{\arg
\min } \quad &T^{-1} \bigg(\overline{Y}-\sum_{j=1}^{d}\Xi _{j}\alpha
_{0,j}-\sum_{j=1}^{d_1}\Xi_{j+p_T+1}\alpha _{1,j}\bigg)^{\top }%
\overline{K}\bigg(\overline{Y}-\sum_{j=1}^{d}\Xi _{j}\alpha
_{0,j}-\sum_{j=1}^{d_1}\Xi_{j+p_T+1}\alpha _{1,j}\bigg) \notag \\
&+ \sum_{j=1}^{d}p_{\lambda _{3}}^{' }\left( \left\Vert \tilde{B}_{j}\right\Vert \right) \left\Vert \alpha _{0,j}\right\Vert
+\sum_{j=1}^{d_1}p_{\lambda _{4}}^{' }\left( \tilde{D}_{j}\right)
\left\Vert h\alpha _{1,j}\right\Vert, \label{biasedoracle}
\end{align}
which is similar to \eqref{llscad} but with irrelevant forecasts removed prior to optimization. Define the $T \times 2(p_T+1)$ matrix $\hat{\Upsilon}^b = (\hat{\Upsilon}_0, \mathbf{0}_{T \times ((p_T+1)-d)}, \hat{\Upsilon}_1, \mathbf{0}_{T \times ((p_T+1)-d_1)})$, where $\mathbf{0}_x$ are zero matrices of dimension $x$, and $(\hat{\Upsilon}_0, \hat{\Upsilon}_1)$ is obtained from minimizing the biased oracle problem \eqref{biasedoracle}. Note that $\hat{\Upsilon}_0$ and $\hat{\Upsilon}_1$ are $T \times d$ and $T\times d_1$ matrices that correspond to relevant regressors in $\alpha_0$ and $\alpha_1$ respectively.

The proof of \thref{ssoracle} consists of the proofs of the following two propositions. Specifically, \thref{selectconst} shows the the two-stage estimate and the biased oracle solution are equivalent with probability approaching 1, which implies \eqref{select}. Furthermore, \thref{biasedconst} shows that the biased oracle estimate is consistent for the true parameters. The desired result in \eqref{est} follows.

\qed 
\begin{proposition}
\thlabel{biasedconst}
Define the vector $A_i = (\alpha_{i,1}^\top, \ldots, \alpha_{i,p_T+1}^\top)^\top$ for $i=0,1$. Let $(\hat{\Upsilon}%
_0, \hat{\Upsilon}_1)$ solve the oracle problem in \eqref{biasedoracle}, and let $\hat{\Upsilon}_0^o =  vec(\hat{\Upsilon}_0, \mathbf{0}_{T \times ((p_T+1)-d)})$, and $\hat{\Upsilon}_1^o = vec(\hat{%
\Upsilon}_1, \mathbf{0}_{T \times ((p_T+1)-d_1)}) $. Under the assumptions of \thref{ssoracle}, we have
\begin{equation}  \label{convergerate}
T^{-1/2} \| \hat{\Upsilon}_0^o - A_0 \| = O_p(\sqrt{d/Th}), \quad T^{-1/2}
\| \hat{\Upsilon}_1^o - A_1\| = O_p(\sqrt{d/Th^3}).
\end{equation}
\end{proposition}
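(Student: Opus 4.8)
The plan is to establish the two bounds simultaneously by working in the $H$-rescaled coordinates of Assumption HD.2, where $H=\mathrm{diag}\{I_{p_T+1},\mathbf{h}^{-1}\}$, so that the two stated rates for $T^{-1/2}\|\hat{\Upsilon}_0^o-A_0\|$ and $T^{-1/2}\|\hat{\Upsilon}_1^o-A_1\|$ become the $\alpha_0$-block and (after multiplying by $h$) the $\alpha_1$-block of the single bound $\|H^{-1}(\hat{\gamma}^o-\gamma^o)\|=O_p(\sqrt{d/h})$, where $\gamma^o$ collects the true oracle coefficients and $\hat{\gamma}^o$ is the minimizer of \eqref{biasedoracle}. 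Because the SCAD penalty is replaced by its local linear approximation $p_{\lambda_3}^{'}(\|\tilde{B}_j\|)\|\cdot\|$, whose weights are fixed by the first stage, the biased oracle objective in \eqref{biasedoracle} is convex, so I would reuse the sphere argument from the proof of \thref{fsprop}. Setting $\kappa_T^o=\sqrt{d/h}$ and perturbing the truth along $V=\kappa_T^o H v$ with $\|v\|^2=C_0$, I would show that the increment $Q^o(\gamma^o+\kappa_T^o Hv)-Q^o(\gamma^o)=I_1+I_2+I_3$ (least-squares plus the two penalty pieces) is strictly positive with probability approaching one once $C_0$ is taken large, so that convexity forces the global minimizer into the ball of radius $\kappa_T^o\sqrt{C_0}$.

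The first task is to dispose of the penalty terms $I_2,I_3$. Since only relevant forecasts enter \eqref{biasedoracle}, it suffices to observe that the SCAD derivative vanishes on its flat region: by HD.3(iii) the true group norms satisfy $\|B_j\|\geq b_\diamond\sqrt{T}$ and $D_j\geq b_\diamond\sqrt{T}$, and \thref{fsprop} gives $\max_t\|\tilde{\beta}_t-\beta_t\|\rightarrow^{p}0$, whence $\|\tilde{B}_j\|\geq\tfrac12 b_\diamond\sqrt{T}$ and $\tilde{D}_j\geq\tfrac12 b_\diamond\sqrt{T}$ with probability approaching one. Because $\lambda_3\propto\lambda_4=o(\sqrt{T})$ by HD.3(ii), these arguments eventually exceed $a\lambda_3$ and $a\lambda_4$, so $p_{\lambda_3}^{'}(\|\tilde{B}_j\|)=p_{\lambda_4}^{'}(\tilde{D}_j)=0$ for every retained index, giving $I_2=I_3=0$ with probability approaching one. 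The oracle problem thus reduces asymptotically to penalty-free weighted least squares on the relevant forecasts.

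Next I would bound $I_1=T^{-1}\sum_t[-2\kappa_T^o v_t^\top Q_t^{h\top}K_t Z_t+(\kappa_T^o)^2 v_t^\top Q_t^{h\top}K_t Q_t^h v_t]$, where $Z_t=Y_t-Q_t\gamma_t^o$ and $v_t,Q_t^h$ are the time-$t$ blocks. For the quadratic part, the oracle directions are supported on the first $d$ forecasts and hence lie in the cone $S$, so HD.2 yields $v_t^\top Q_t^{h\top}K_t Q_t^h v_t\geq\rho_1 T\|v_t\|^2$, and summing over $t$ gives a lower bound $\rho_1(\kappa_T^o)^2 C_0$ that is linear in $C_0$. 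For the cross term I would split $Z_t$ into the martingale-difference part $\varepsilon_{s+1}$ and the second-order Taylor bias $\tfrac12\beta_t^{''\top}X_s(\tfrac{s-t}{T})^2+\cdots$, write it as $-2T^{-1}\kappa_T^o\langle s,v\rangle$ with stacked score $s_t=Q_t^{h\top}K_t Z_t$, and apply Cauchy--Schwarz. The crucial point is that the noise part of $\|s\|$ is controlled \emph{in aggregate} rather than uniformly: the m.d.s.\ property A.3 annihilates the cross products, yielding $E[(s^{\mathrm{noise}}_{t,j})^2]=O(T/h)$ per coordinate, hence $\|s^{\mathrm{noise}}\|=O_p(T\sqrt{d/h})$ over the $O(dT)$ coordinates, so the noise cross term is $O_p(d/h)\sqrt{C_0}$, of the same order as the quadratic term but sublinear in $C_0$. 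The exact bias--variance balance $dh^2\propto(Th)^{-1/2}$ of HD.3(ii) makes the bias cross term $O(1/h)\sqrt{C_0}$, which is again dominated. Consequently $I_1\geq(\kappa_T^o)^2[\rho_1 C_0-O_p(\sqrt{C_0})]$, positive for $C_0$ large; translating $\|H^{-1}(\hat{\gamma}^o-\gamma^o)\|=O_p(\sqrt{d/h})$ back through $H$ delivers \eqref{convergerate}.

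I expect the main obstacle to be the cross term, specifically avoiding a spurious $\log h^{-1}$ factor. The uniform score bound of \thref{lemmac1} would give $\|s^{\mathrm{noise}}\|=O_p(T\sqrt{d\log h^{-1}/h})$, which overwhelms the quadratic term; the argument only closes if one instead bounds the \emph{second moment} of the aggregate score via the martingale-difference structure, trading sup-norm control for $\ell_2$ control and saving the $\sqrt{\log h^{-1}}$. A secondary subtlety is confirming that the bias contribution stays subordinate, which is exactly where HD.3(ii) enters, and keeping careful track of the $H$-scaling so that the $\alpha_1$-block inherits the extra $h^{-1}$ factor that distinguishes the two rates in \eqref{convergerate}.
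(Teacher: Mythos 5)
Your proposal follows the same skeleton as the paper's proof: a sphere/perturbation argument around the truth in the $H$-rescaled coordinates, decomposition of the objective increment into a least-squares piece and two penalty pieces, a lower bound on the quadratic form via Assumption HD.2 (your observation that oracle-supported directions trivially satisfy the cone condition in $S$ is exactly what is needed), and convexity to upgrade the local minimizer to the global one. You differ in the two non-quadratic pieces, and in both cases your treatment is sharper. For the penalties, the paper defers to \cite{li2015model} to conclude $I_2^\diamond, I_3^\diamond = o_p(\kappa_T^{\diamond 2})\|U_i\|^2$; you instead show they are exactly zero with probability approaching one because HD.3(iii) plus \thref{fsprop} put $\|\tilde{B}_j\|$ and $\tilde{D}_j$ above $\tfrac12 b_\diamond\sqrt{T}$, which eventually exceeds $a\lambda_3$ since $\lambda_3 = o(\sqrt{T})$, so the SCAD derivative sits on its flat region. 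This is cleaner and self-contained. For the cross term, the paper applies the uniform score bound of \thref{lemmac1} and asserts an $O_p(\kappa_T^\diamond T^{1/2})\|U\|$ bound; you replace this with an aggregate $\ell_2$ second-moment bound using the m.d.s.\ structure of A.3. This substitution is not merely cosmetic: under the normalization that actually delivers the stated rate $T^{-1/2}\|\hat{\Upsilon}_0^o - A_0\| = O_p(\sqrt{d/Th})$ (perturbation radius $\sqrt{d/h}$, which is your normalization, not the paper's $\|U_1\|+\|U_2\| = TC_1$, which only localizes to radius $\sqrt{dT/h}$), the \thref{lemmac1}-based cross term is of order $(d/h)\sqrt{\log h^{-1}}$ against a quadratic term of order $(d/h)C_0$, and the $\sqrt{\log h^{-1}}$ factor cannot be absorbed by enlarging $C_0$. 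Your $\ell_2$ control removes that factor and closes the argument at the claimed rate; your accounting of the bias contribution via $dh^2 \propto (Th)^{-1/2}$ is also correct. In short, your route buys a proof that is consistent with the rate actually stated in \eqref{convergerate}, whereas the paper's written argument, taken at face value, only yields the weaker radius $O_p(\sqrt{d/h})$ for $T^{-1/2}\|\hat{\Upsilon}_0^o - A_0\|$.
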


\begin{proof}

Define the following $T(p_T+1) \times 1$ vectors: 
\begin{equation*}
U_1 = (u_{11}^\top, \ldots, u_{1d}^\top, \mathbf{0}^\top, \ldots, \mathbf{0}%
^\top)^\top, \quad U_2 = (u_{21}^\top, \ldots, u_{2d_1}^\top, \mathbf{0}%
^\top, \ldots, \mathbf{0}^\top)^\top,
\end{equation*}
where $u_{ij} \in \mathbb{R}^{T}$ for $i=1,2$.

Define the set $S^\diamond(C_1) = \{U=(U_1^\top, U_2^\top)^\top:
\|U_1\|+\|U_2\| = TC_1 \}$, where $C_1$ is an arbitrary positive constant,
and $\kappa_T^\diamond = \sqrt{d/Th}$. For convenience, let the objective function
in \eqref{llscad} be $Q^{ss}(A_0,A_1)$, where $\alpha_{i,j}$ is defined in \eqref{llscad}. Similar to the proof above, for a $U \in S^\diamond(C_1)$, we can write 
\begin{align}
&Q^{ss}(A_0 + \kappa_T^\diamond U_1,A_1 + \kappa_T^\diamond U_2/h) -
Q^{ss}(A_0,A_1)  \label{b8} \\
&= \mathcal{L}^\diamond(A_0 + \kappa_T^\diamond U_1,A_1 + \kappa_T^\diamond
U_2/h) - \mathcal{L}^\diamond(A_0,A_1) \   \notag \\
&+ \sum_{j=1}^{p_T+1} p^{^{\prime}}_{\lambda_3} (\| \tilde{B}_j \|)
\|\alpha_{0,j} + \kappa_T^\diamond u_1^j \| - \sum_{j=1}^{p_T+1}
p^{^{\prime}}_{\lambda_3} (\| \tilde{B}_j \|) \|\alpha_{0,j} \|  \notag \\
&+ \sum_{j=1}^{p_T+1} p^{^{\prime}}_{\lambda_3} (\| \tilde{D}_j \|)
\|h\alpha_{1,j} + \kappa_T^\diamond u_2^j\| - \sum_{j=1}^{p_T+1}
p^{^{\prime}}_{\lambda_3} (\| \tilde{D}_j \|) \|h\alpha_{1,j} \|  \notag \\
&\equiv I_1^\diamond + I_2^\diamond + I_3^\diamond  \notag
\end{align}
where $u_1^j$ is the $j^{th}$ row of the matrix obtained by applying an
inverse vectorization of $U_0$. In other words, let $vec^{-1}(U_1)$ be the $%
T \times (p_T+1)$ matrix obtained by sequentially unstacking $U_1$, and let $%
[vec^{-1}(U_1)]_j$ denote its $j^{th}$ row, then $u_1^j = [vec^{-1}(U_1)]_j$%
. This is analogous for $u_2^j$ with $U_2$. The same can be said for the
notation of $\mathcal{L}^\diamond(A_0,A_1)$, which previously took as
arguments, the matrices $\alpha_0$ and $\alpha_1$ in \eqref{llmatrix}. Here, 
$\alpha_0 = vec^{-1}(A_0)$ and $\alpha_1 = vec^{-1}(A_1)$.

The proof of $I_2^\diamond$ and $I_3^\diamond$ follows that of \cite%
{li2015model}, which requires in particular, the assumption HD.3(iii).
Thus, they can be shown to be $o_p(\kappa_T^{\diamond 2})\|U_1 \|^2$ and $%
o_p(\kappa_T^{\diamond 2})\|U_2 \|^2$ respectively.

Let $\Xi_0 = (\Xi_1, \ldots, \Xi_{p_T+1})$ and $\Xi_1 = (\Xi_{p_T+2},
\ldots, \Xi_{2(p_T+1)})$. For $I_1^\diamond$, we adopt the same approach as
the proof above, 
\begin{align}
I_1^\diamond &= T^{-1}\{[\overline{Y} - \Xi_0 (A_0 + \kappa_T^\diamond U_1)
- \Xi_1 (A_1 + \kappa_T^\diamond U_2/h)]^\top \overline{K} [\overline{Y} -
\Xi_0 (A_0 + \kappa_T^\diamond U_1) - \Xi_1 (A_1 + \kappa_T^\diamond U_2/h)]
\notag \\
&- [\overline{Y} - \Xi_0 (A_0) - \Xi_1 (A_1)]^\top \overline{K} [\overline{Y}
- \Xi_0 (A_0) - \Xi_1 (A_1)]\}  \notag \\
&= T^{-1}\{[2 \overline{Y} - 2 \Xi A - \Xi U^*]^\top \overline{K} [- \Xi
U^*]\}  \label{b9}
\end{align}
where $\Xi=(\Xi_1, \ldots, \Xi_{2(p_T+1)})$ is $2\floor*{Th}T \times
2(p_T+1)T$, $A =(A_0^\top, A_1^\top)^\top$ and $U^* =
\kappa_T^\diamond(U_1^\top, U_2^\top/h)^\top = \kappa_T^\diamond H^*
(U_1^\top, U_2^\top)^\top = \kappa_T^\diamond H^* U$. $H^* =
diag\{I_{(p_T+1)T \times (p_T+1)T}, \mathbf{h}^{-1*}\}$ and $\mathbf{h}%
^{-1*} $ is a $(p_T+1)T \times (p_T+1)T$ diagonal matrix with diagonal
elements fixed at $1/h$. We can show that $\Xi H^* = (\Xi_1, \ldots,
\Xi_{p_T+1}, \newline
h^{-1} \Xi_{p_T+2}, \ldots, h^{-1}\Xi_{2(p_T+1)})$. Note that for $p_T+2
\leq j \leq 2(p_T+1)$, we have $e_{j,2(p_{T}+1)} h^{-1} = H e_{j,2(p_{T}+1)}$,
where $H = diag\{I_{p_T+1\times p_T+1}, \mathbf{h}^{-1}\}$ is defined in
assumption HD.2. This implies that $h^{-1} \Xi_j = \Xi_j^h$, and $\Xi
H^* = \Xi^h$, where $\Xi^h = (\Xi_1, \ldots, \Xi_{p_T+1}, \Xi_{p_T+2}^h, \ldots \Xi_{2(p_T+1)}^h)$ is a $2\floor*{Th}T \times 2T(p_T+1)$ matrix, $\Xi_i^h$ is defined as in \eqref{Xi} but with $Q_i$ replaced by $Q_i^h$ as defined in HD.2, and $\overline{K}$ is defined in the discussion on \eqref{llmatrix}.

Using the same strategy as above, from \eqref{b9}, we get 
\begin{align}
I_1^\diamond &= T^{-1} \{ -2\kappa_T^\diamond\overline{Y}^\top \overline{K}
\Xi^h U + 2\kappa_T^\diamond A^\top \Xi^\top \overline{K} \Xi^h U +
(\kappa_T^\diamond)^2 U^\top \Xi^{h\top} \overline{K} \Xi^h U \}  \notag \\
&= -2\kappa_T^\diamond T^{-1}\overline{Z}^\top \overline{K} \Xi^h U +
(\kappa_T^\diamond)^2 T^{-1} U^\top \Xi^{h\top} \overline{K} \Xi^h U,
\label{b10}
\end{align}
where $\overline{Z} = \overline{Y} - \Xi A$. From HD.2 we know that the
second term in \eqref{b10} is strictly positive. The first term is a $1
\times 2(p_T+1)T$ vector of the following form: 
\begin{align}  \label{b11}
&T^{-1}\overline{Z}^\top \overline{K} \Xi^h = T^{-1} \bigg[ \sum_{s=1-\floor*%
{Th},s \neq 1}^{1+\floor*{Th}} k_{s1} (y_{s+1}-q_{s1}^\top \gamma_1) X_{s1},
\ldots, \sum_{s=T-\floor*{Th}, s \neq T}^{T+\floor*{Th}} k_{sT} (y_{s+1} -
q_{sT}^\top \gamma_T) X_{s1}, \ldots, \notag \\
&\sum_{s=1-\floor*{Th}, s \neq 1}^{1+\floor*{Th}} k_{s1}
(y_{s+1}-q_{s1}^\top \gamma_1) \bigg(\frac{s-1}{Th}\bigg)X_{s(p_T+1)},
\ldots, \sum_{s=T-\floor*{Th}, s \neq T}^{T+\floor*{Th}} k_{sT}
(y_{s+1}-q_{sT}^\top \gamma_1) \bigg(\frac{s-T}{Th}\bigg)X_{s(p_T+1)} \bigg].
\end{align}
Again, we can apply \thref{lemmac1} to this term, in a similar fashion to %
\eqref{b7}, which implies that it is $O_p(\kappa_T^\diamond T^{1/2})\|U\|$.
Hence, we conclude that $I_1^\diamond$ (and thus \eqref{b8}) is positive in probability.

This suggests that for any $\varepsilon > 0$ and for a sufficiently large
choice of $C_1$ we have, 
\begin{equation*}
P\bigg(\inf_{(U_1,U_2) \in S^\diamond(C_1)} Q^{ss}(A_0 + \kappa_T U_1, A_1 +
\kappa_T U_2/h) > Q^{ss}(A_0, A_1) \bigg) \geq 1 - \varepsilon,
\end{equation*}
which implies \eqref{convergerate} .
\end{proof}
\begin{proposition}
\thlabel{selectconst}
Under the assumptions of \thref{ssoracle}, $\hat{\Upsilon}^b$ is the solution of \eqref{llscad} with probability approaching one. 
\end{proposition}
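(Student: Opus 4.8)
The plan is to exploit the convexity of the second-stage problem \eqref{llscad}. After the local linear approximation, the penalty $\sum_{j}p_{\lambda_3}^{'}(\|\tilde{B}_j\|)\|\alpha_{0,j}\| + \sum_{j}p_{\lambda_4}^{'}(\tilde{D}_j)\|h\alpha_{1,j}\|$ is a weighted group-Lasso penalty whose weights are fixed by the first stage, so the objective $Q^{ss}$ is convex and its Karush--Kuhn--Tucker (subgradient) conditions are both necessary and sufficient for global optimality. It therefore suffices to verify that the padded biased-oracle estimator $\hat{\Upsilon}^b$ satisfies these conditions with probability approaching one. I would split the verification into the active coordinates ($1 \leq j \leq d$ in $\alpha_0$ and $1 \leq j \leq d_1$ in $\alpha_1$) and the inactive coordinates.

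On the active set, I would first show that the penalty weights vanish. By \thref{fsprop}, $\max_t\|\tilde{\beta}_t - \beta_t\| = O_p(\sqrt{d}\lambda_1)$, so for a relevant index $j$ Assumption HD.3(iii) gives $\|\tilde{B}_j\| \geq b_\diamond T^{1/2}/2$ with probability approaching one; since $a\lambda_3 = o(T^{1/2})$ by HD.3(ii), the SCAD derivative satisfies $p_{\lambda_3}^{'}(\|\tilde{B}_j\|) = 0$, and likewise $p_{\lambda_4}^{'}(\tilde{D}_j) = 0$ for $1 \leq j \leq d_1$ using $\tilde{D}_j \approx D_j \geq b_\diamond T^{1/2}$. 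Hence on the active coordinates the KKT conditions of \eqref{llscad} reduce to the unpenalized stationarity equations, which are exactly the first-order conditions of the biased-oracle problem \eqref{biasedoracle}. Since $(\hat{\Upsilon}_0, \hat{\Upsilon}_1)$ solves \eqref{biasedoracle}, these hold by construction.

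The substantive step is the inactive set. For an irrelevant index $j$ one has $\beta_{t,j} = 0$, so $\|\tilde{B}_j\| \leq T^{1/2}\max_t|\tilde{\beta}_{t,j}| = O_p(T^{1/2}\sqrt{d}\lambda_1)$; because $\sqrt{Td}\lambda_1 = o(\lambda_3)$ (a consequence of HD.3(ii)), we obtain $\|\tilde{B}_j\| \leq \lambda_3$ with probability approaching one, whence $p_{\lambda_3}^{'}(\|\tilde{B}_j\|) = \lambda_3$, the maximal penalty; the same argument bounds the weight of each inactive $\alpha_1$-group below by $\lambda_4$. The subgradient condition at $\hat{\Upsilon}^b$ then requires, for each inactive $\alpha_0$-group $j$, that $\|T^{-1}\Xi_j^\top\overline{K}(\overline{Y} - \Xi\hat{A})\| \leq \lambda_3$ (with the analogous $h$-scaled inequality bounded by $\lambda_4$ for the inactive $\alpha_1$-groups), where $\hat{A}$ collects the oracle fit. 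I would bound the left-hand side by decomposing $\overline{Y} - \Xi\hat{A}$ into the forecast-error score, the local-linear approximation bias, and the oracle estimation error restricted to the active coordinates. The first piece is controlled uniformly in $j$ by \thref{lemmac1}, giving order $\sqrt{\log h^{-1}/Th}$; the bias piece is of order $h^2$; and the estimation-error piece is handled by combining the rate $O_p(\sqrt{d/Th})$ from \thref{biasedconst} with the upper restricted-eigenvalue bound $\rho_2$ of HD.2. Each of these is $o_p(\lambda_3)$ under HD.3(ii), so the subgradient inequality holds.

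The main obstacle is the uniform control in this inactive-set step: the subgradient inequality must hold simultaneously over the up-to-$p_T$ irrelevant coordinates, whose number diverges. This is why the maximal inequality of \thref{lemmac1} --- which delivers a bound holding uniformly in $j$ and in $\tau$ --- is indispensable, and why the moment condition HD.1(ii), calibrated to the polynomial growth rate $p_T = c_1 T^{\delta_1}$, is needed to absorb the union bound over the diverging coordinate index. Assembling the active- and inactive-set verifications shows that $\hat{\Upsilon}^b$ meets the KKT conditions with probability approaching one, and convexity of $Q^{ss}$ then identifies it as the global minimizer of \eqref{llscad}, which is the claim.
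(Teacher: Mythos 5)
Your proposal follows essentially the same route as the paper's proof: verify the KKT/subgradient conditions of \eqref{llscad} at the padded biased-oracle estimator, with the active groups holding by construction of \eqref{biasedoracle}, and the inactive groups handled by showing $p'_{\lambda_3}(\|\tilde{B}_j\|)=\lambda_3$ there (via \thref{fsprop} and $\sqrt{Td}\lambda_1=o(\lambda_3)$) while the score is $o_p(\lambda_3)$ uniformly via \thref{lemmac1}, \thref{biasedconst}, HD.2 and HD.3(ii). The only bookkeeping caveat is that the rates you quote for the inactive-set score ($\sqrt{\log h^{-1}/Th}$ and $\sqrt{d/Th}$) are the per-$t$ rates; the group norms over $t=1,\dots,T$ carry an extra factor of $\sqrt{T}$, giving the orders $\sqrt{\log h^{-1}/h}$ and $\sqrt{d/h}$ against which HD.3(ii) is actually calibrated, so your conclusion stands.
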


\begin{proof}

For convenience, let parameter with hats indicate estimates from $\hat{%
\Upsilon}^b$. Define $\hat{Z} = \overline{Y}-\sum_{j=1}^{p_{T}+1}\Xi _{j}%
\hat{\alpha}_{0,j} \\ -\sum_{j=1}^{p_{T}+1}\Xi _{j+p_{T}+1}^h h\hat{\alpha}%
_{1,j} $ and similarly, $\hat{Z}^o = \overline{Y}-\sum_{j=1}^{d}\Xi _{j}\hat{%
\alpha}_{0,j}-\sum_{j=1}^{d_1}\Xi _{j+p_{T}+1}^h h\hat{\alpha}_{1,j}$.
Then, to show that $\hat{\Upsilon}^b$ solves \eqref{llscad}, we need to
check that it satisfies the following Karush-Kuhn-Tucker conditions: (i) $T^{-1}\Xi_j^\top \overline{K} \hat{Z} + p_{\lambda_3}^{^{\prime}} (\| 
\tilde{B}_j \|) \frac{\hat{\alpha}_{0,j}}{\|\hat{\alpha}_{0,j}\|} = \mathbf{0%
}_{T\times 1} $, (ii) $T^{-1}\Xi_{j+p_T+1}^{h\top} \overline{K} \hat{Z} +
p_{\lambda_4}^{^{\prime}} (\tilde{D}_j) \frac{\hat{\alpha}_{1,j}}{\|\hat{%
\alpha}_{1,j}\|}= \mathbf{0}_{T\times 1}$, for $j = 1, \ldots, p_T +1$.

For $j = 1, \ldots d$, (i) holds automatically because $\hat{\Upsilon}_0$ is
such that $T^{-1}\Xi_{j}^\top \overline{K} \hat{Z}^o +
p_{\lambda_3}^{^{\prime}} (\| \tilde{B}_{j} \|) \frac{\hat{\alpha}_{0,{j}}}{%
\|\hat{\alpha}_{0,{j}}\|} = \mathbf{0}_{T\times 1}$, and $\hat{\alpha}%
_{0,j_1} = 0$ for $j_1 = d + 1,\ldots, p_T + 1$ and $\hat{\alpha}_{1,j_2}=0$
for $j_2 = d_1 +1 , \ldots, p_T +1$. The same can be said of the
derivative terms of $j = 1, \ldots, d_1$ for (ii). For the other
indices, we have to use the property of the sub-differential to verify the
following conditions: (1) $\max_{d+1\leq j \leq p_T +1} \|T^{-1}\Xi_j^\top \overline{K} \hat{Z}
\| \leq \min_{d+1\leq j \leq p_T +1} p_{\lambda_3}^{^{\prime}} (\| \tilde{B}%
_j \|) $, (2) $\max_{d_1+1\leq j \leq p_T +1} \|T^{-1}\Xi_{j+P_T+1}^{h\top} 
\overline{K} \hat{Z} \| \leq \min_{d_1+\leq j \leq p_T +1}
p_{\lambda_4}^{^{\prime}} (\tilde{D}_j) $.

We focus on (1) since the case for (2) is similar. Note that $\| \tilde{B}_j
\|$ is $O_p(\sqrt{Td}\lambda_1)$ by \thref{fsprop}, since $\hat{\alpha}%
_{0,j} = 0_{T\times 1}$, and Assumption HD.3(ii) implies that $\sqrt{Td}%
\lambda_1 = o(\lambda_3)$. Also, by property of the SCAD penalty, $%
p^{^{\prime}}_{\lambda_3}(\theta) = \lambda_3$ if $0<\theta<\lambda_3$, so
this implies that we have $\min_{d+1\leq j \leq p_T +1}
p_{\lambda_3}^{^{\prime}} (\| \tilde{B}_j \|) = \lambda_3$.

On the other hand, 
\begin{align*}
&\max_{d+1\leq j\leq p_T+1} \|T^{-1}\Xi_j^\top \overline{K} \hat{Z} \| \leq
\max_{d+1\leq j\leq p_T+1} \|T^{-1}\Xi_j^\top \overline{K} \bar{Z} \| +
\max_{d+1\leq j\leq p_T+1} \|T^{-1}\Xi_j^\top \overline{K} (\hat{Z} - \bar{Z}%
) \| \\
&= \max_{d+1\leq j\leq p_T+1} \|T^{-1}\Xi_j^\top \overline{K} \bar{Z} \|  \\ &+
\max_{d+1\leq j\leq p_T+1} \bigg\|T^{-1}\Xi_j^\top \overline{K} \bigg(%
\sum_{l=1}^{p_T+1} \Xi_l (\hat{\alpha}_{0,l}- \alpha_{0,l}) +
\sum_{l=1}^{p_T+1} \Xi_{l+p_T+1}^h h(\hat{\alpha}_{1,l}- \alpha_{1,l}) %
\bigg) \bigg\|,
\end{align*}
where $\overline{Z} = \overline{Y}-\sum_{j=1}^{p_{T}+1}\Xi _{j}\alpha_{0,j}-\sum_{j=1}^{p_{T}+1}\Xi _{j+p_{T}+1}^h h\alpha_{1,j}$. From \eqref{b11}
and \thref{lemmac1}, we know that the first term is $O_p(\sqrt{\log h^{-1}/h}%
)$. For the second term, we use the convergence rates obtained in %
\eqref{convergerate} to get $O_p(\sqrt{d/h})$. Hence, $\max_{d+1\leq j\leq
p_T+1} \|T^{-1}\Xi_j^\top \overline{K} \hat{Z} \| = O_p(h^{-1/2}[(\log
h^{-1})^{1/2} + d^{1/2}]) = o_p(\lambda_3)$ by HD.3(ii).

\end{proof}

\noindent \textbf{Proof of \thref{oracleprop}}

\noindent Let $\hat{\beta}_t^{u}$ be the local linear estimator dervied from solving the oracle problem, which is the estimation problem assuming \textit{a priori} knowledge of the sparsity pattern and excluding penalties. Specifically, let $\hat{\Upsilon}^u$ be the solution to the following:
\begin{equation*}
\underset{\Upsilon \in \mathbb{R}^{T\times d_1}}{\arg
\min }\quad T^{-1} \bigg(\overline{Y}-\sum_{j=1}^{d}\Xi _{j}\alpha
_{0,j}-\sum_{j=1}^{d_1}\Xi_{j+p_T+1}\alpha _{1,j}\bigg)^{\top }%
\overline{K}\bigg(\overline{Y}-\sum_{j=1}^{d}\Xi _{j}\alpha
_{0,j}-\sum_{j=1}^{d_1}\Xi_{j+p_T+1}\alpha _{1,j}\bigg).
\end{equation*}
Then, define $\hat{\beta}_t^u = (\bar{\gamma}_t^{u\top}, \textbf{0}_{1 \times ((p_T+1)-d)})^\top$, where $\bar{\gamma}_t^u=(e_1^\top \otimes I_d) \hat{\gamma}_t^u$, and $\hat{\gamma}_t^u$ is the $t^{th}$ row of $\hat{\Upsilon}^u$. 

Then, we want to show that the two-stage estimate $\hat{\beta}_t^h$ is asymptotically close to the oracle estimator:
\begin{equation}
\label{oracleclose}
\sqrt{Th}\max_{t}\Vert \hat{\beta}_{t}^{h}-\hat{\beta}_{t}^{u}\Vert \rightarrow^{p}
0,
\end{equation}
as $T\rightarrow \infty$.

In light of \thref{ssoracle}, to prove \eqref{oracleclose}, we only need to
show that the biased oracle estimator is asymptotically close to the
unbiased oracle estimator. More formally, let $\hat{\beta}_t^b$ be the $%
t^{th}$ row of $\hat{\Upsilon}_0^o$ be the biased oracle estimator at time $%
t $. We are interested in obtaining an expression for $\max_t \|\hat{\beta}%
_t^b - \hat{\beta}_t^u \|$.

Here, it is much more convenient to use the likelihood-sum
expression of the likelihood similar to \eqref{llscad}. Consider the
optimization for the biased oracle estimator: 
\begin{align*}
&\underset{\Upsilon\in \mathbb{R}^{T \times d_1}}{\arg \min } \ T^{-1}
\sum_{t=1}^T \sum_{s=t-\floor*{Th},s\neq t}^{t+\floor*{Th}} k_{st} \bigg[ %
y_{s+1} - \alpha_{0t}^\top X_s^o - \alpha_{1t}^\top \bigg( \frac{s-t}{T} X_{s,1}^o%
\bigg) \bigg]^2  \notag \\
&+ \sum_{j=1}^{d}p_{\lambda _{3}}^{^{\prime}}\left( \left\Vert \tilde{B}%
_{j}\right\Vert \right) \left\Vert \alpha _{0,j}\right\Vert
+\sum_{j=1}^{d_1}p_{\lambda _{4}}^{^{\prime}}\left( \tilde{D}_{j}\right)
\left\Vert h\alpha _{1,j}\right\Vert, \label{biasedoracle2}
\end{align*}
where $X_s^o =(1, X_{s1}, \ldots, X_{sd})^\top$, and $X_{s,1}^o = (1, X_{s1}, \ldots, X_{sd_1})^\top$.

From this, the $t$-period biased estimate, $\hat{\gamma}_t^b = (\hat{\alpha}%
_{0t}^{b\top}, \hat{\alpha}_{1t}^{b\top})^\top$ can be obtained as 
\begin{equation*}
\hat{\gamma}_t^b = \bigg(T^{-1} \sum_{s=t-\floor*{Th},s\neq t}^{t+\floor*{Th}%
} q^o_{st} q_{st}^{o\top} k_{st} \bigg)^{-1} \bigg[T^{-1} \sum_{s=t-\floor*{%
Th},s\neq t}^{t+\floor*{Th}} q^o_{st} y_{s+1} k_{st} - \check{\lambda}_t %
\bigg]
\end{equation*}
where $q_{st}^{o}=(X_{s1},\ldots ,X_{sd},(\frac{s-t}{T}%
)X_{s1},\ldots ,(\frac{s-t}{T})X_{sd_{1}})^\top$, $\check{\lambda}_t =
[p^{^{\prime}}_{\lambda_3}(\|\tilde{B}_1\|)\hat{\alpha}^b_{0,1,t}/\|\hat{
\alpha}^b_{0,1}\|, \ldots, p_{\lambda _{4}}^{^{\prime}} (\tilde{D}_{d_1}) \\
\hat{\alpha}^b_{1,d_1,t}  /\|\hat{\alpha}^b_{1,d_1}\|]^\top$ is a $(d+d_1)
\times 1$ vector of penalty terms, and $\alpha_{i,j,t}$ refers to the the $%
t^{th}$ element in $\alpha_{i,j}$ for $i=0,1$. As the unbiased oracle estimator does not have the penalty term, we
get $\hat{\gamma}_{t}^b - \hat{\gamma}_{t}^u = \bigg(T^{-1} \sum_{s=t-\floor*{Th}%
,s\neq t}^{t+\floor*{Th}} q^o_{st} q_{st}^{o\top} k_{st}\bigg)^{-1} (-\check{%
\lambda}_t )$.
Then note that 
\begin{equation*}  \label{boracle}
\max_{t} \| \hat{\gamma}_{t}^b - \hat{\gamma}_{t}^u \| \leq \max_t \underbrace{\bigg\|\bigg(T^{-1} \sum_{s=t-\floor*{Th},s\neq
t}^{t+\floor*{Th}} q^o_{st} q_{st}^{o\top} k_{st}\bigg)^{-1}\bigg\|_{2}}_{<
\infty} \|\check{\lambda}_t\|
\end{equation*}
where $\| A \|_2 = \max_{\| x\|=1}\|Ax \|$ for a matrix $A$ and vector $x$,
is the induced matrix norm. Here, we have used the fact that $\|A\|_2$ is
equivalent to its maximum eigenvalue when $A$ is symmetric, and assumption
HD.1(i), which implies that, with probability approaching one, it exists and is finite. Then, notice that 
\begin{equation*}
\|\tilde{B}_j\| \geq \min_{1\leq j \leq (d+d_1)} \| B_j \| - \underbrace{\max_{1
\leq j \leq (d+d_1)} \|\tilde{B}_j - B_j \|}_{= \ O_p(\sqrt{d T} \lambda_1)},
\end{equation*}
where we have used \thref{fsprop} for the second term. Recall that
HD.3(ii) implies $\sqrt{d T} \lambda_1 = o(\lambda_3) = o(\sqrt{T})$.
Furthermore, by HD.3(iii), the first term is larger than $b_\diamond \sqrt{T}
$ and hence it is the leading term. Since $p^{^{\prime}}_{\lambda_3}(\theta)
= \theta$ for $0 < \theta < \lambda_3$ and $\lambda_3 = o(\sqrt{T})$, we
conclude that $\check{\lambda}_t$ vanishes asymptotically at a rate faster
than $\sqrt{T}$. This thus establishes \eqref{oracleclose}.

To establish asymptotic normality, we note that the key difference between the case here and the low-dimensional case of \thref{consis} is the pre-multiplication of $A_T$, particularly to the expression in \eqref{A1}. Hence, we can follow the arguments of \cite{cai2007trending} to obtain asymptotic normality.

\qed  

\end{document}